\documentclass[runningheads]{llncs}
\usepackage{amsmath,amssymb,amsfonts,mathtools}
\usepackage{algorithm}
\usepackage{algpseudocode}
\usepackage{hyperref}
\usepackage{braket}
\usepackage{bm}
\usepackage{graphicx}
\usepackage{float}
\usepackage{thmtools}
\usepackage{thm-restate}
\usepackage{capt-of}
\usepackage[font=small,labelfont=bf,tableposition=top]{caption}
\DeclareCaptionLabelFormat{andtable}{#1~#2  \&  \tablename~\thetable}

\usepackage[T1]{fontenc}

\usepackage{booktabs} %
\usepackage{xcolor}   
\usepackage{siunitx}  

\definecolor{lightgray}{gray}{0.95}

\usepackage{amsmath,amsfonts,amssymb}

\usepackage{array} 
\usepackage{tabularray}
\usepackage{nicematrix}
\usepackage{booktabs}
\usepackage{colortbl}
\usepackage{ragged2e}
\usepackage{longtable}

\usepackage{multirow}
\usepackage{hhline}
\usepackage{makecell}
\usepackage{rotating}
\usepackage{braket}

\usepackage[caption=false,font=normalsize,labelfont=sf,textfont=sf]{subfig}
\usepackage{textcomp}
\usepackage{stfloats}

\usepackage{url}
\usepackage{verbatim}

\usepackage{cite}

\usepackage{microtype}

\usepackage{pgf,tikz}

\usepackage{tikz}
\usepackage{grffile}
\usepackage{circuitikz}
\usetikzlibrary{arrows,mindmap,decorations,shapes,shapes.geometric,positioning,arrows.meta,shadows,trees,fit,calc}

\usepackage{xspace}
\definecolor{joli}{RGB}{225,95,0}
\definecolor{JOLI}{RGB}{225,95,0}

\makeatletter
\newcount\tikzcountgrandchild
\def\tikz@grow@concentric{%
  \pgftransformreset%
  \pgftransformrotate{(\pgfkeysvalueof{/tikz/sibling angle})*(\tikzcountgrandchild)}%
  \ifnum\tikztreelevel=1
    \pgftransformrotate{(\pgfkeysvalueof{/tikz/sibling angle})*(\pgfkeysvalueof{/tikz/noc}-1)/2}%
  \fi
  \pgftransformxshift{\the\tikzleveldistance}%
  \ifnum\tikztreelevel=2
    \global\advance\tikzcountgrandchild by1
 \fi
}
\tikzset{
    noc/.initial=0,
    branch color/.style={
        concept color=#1!white,
        every child/.append style={concept color=#1!white!30!white},
    }
}

\usepackage[utf8]{inputenc}
\title{\textsc{SILMARILS}: Information-Theoretic and Quantum-Secure Designated-Verifier Signatures}
\titlerunning{\textsc{SILMARILS}: An IT-Secure Signature Scheme Resilient to Quantum Attacks}
\author{Hassan Khodaiemehr\inst{1} \and Khadijeh Bagheri\inst{1} \and Chen Feng\inst{1} \and Dariia Porechna\inst{2} }
\institute{Faculty of Applied Science, School of Engineering, The University of British Columbia (UBC)
Okanagan Campus, Kelowna, BC, Canada\\
\{hassan.khodaiemehr,khadijeh.bagheri,chen.feng\}@ubc.ca \and EternaX Labs\\
dariia.p@eternax.ai}

\begin{document}
\maketitle
\begin{abstract}
\textsc{SILMARILS} follows the principle of \underline{S}ecure \underline{I}nformation-theory-\underline{L}everaged
\underline{M}echanism for \underline{A}uthentication and
\underline{R}eceipt-\underline{I}ntegrated \underline{L}ightweight \underline{S}ignatures.
It is built from a minimal algebraic core over $\mathbb{F}_p$ using true randomness and
perfect $2$-out-of-$2$ Shamir secret sharing. The framework supports both two-party and
three-party modes.
In the two-party setting, \textsc{SILMARILS} realizes a transferable designated-verifier
(TDV) signature scheme. The designated verifier can simulate accepting transcripts
indistinguishable from real ones, achieving Jakobsson--Sako--Impagliazzo DV security.
The verifier may publish a receipt $r$ enabling public verification, yet even with $r$, no
external party can tell whether a transcript was signed or simulated. As DV signatures
permit simulation, standard EUF-CMA cannot hold for the designated verifier; instead, we
prove $\mathsf{EUF\text{-}CMA}^{\neg\mathsf{DV}}$ security for all non-designated verifiers
in both the random oracle model (ROM) and quantum random oracle model (QROM).
In the three-party mode, adopting the broadcast model of
Fitzi~\emph{et~al.}~\cite{FitziWolf2004,Fitzi2001,IT-book}, we obtain a statistically secure
signature protocol with simulation-based security and error~$1/p$. We analyze security in
the Pure IT model, the IT+ROM, and the QROM, extending the Fitzi~\emph{et~al.} framework
to quantum adversaries with classical I/O. Correctness, secrecy, transferability, and
unforgeability for non-designated parties remain equivalent to simulation-based security.
Thanks to its simple algebraic structure, \textsc{SILMARILS} offers very compact
keys and signatures for the blockchain settings we target, where standardized
PQC schemes are already more than sufficient. Our goal is not to compare
\textsc{SILMARILS} with PQC, but to highlight its suitability for lightweight TDV
authentication. A fair comparison with other DV schemes is omitted due to space
and the complexity of aligning models.
\end{abstract}
\keywords{IT security \and Digital signatures \and MPC \and QROM \and Quantum-safe}

\section{Introduction}
Digital signatures ensure authenticity in distributed systems, with constructions spanning classical schemes \cite{RFC3447,FIPS1861,ECDSA,Boneh2005Schnorr} and post-quantum alternatives \cite{Lamport1979,merkle1989,SPHINCSplus,CourtoisFiniaszSendrier2001,McEliece_ref,dilithium2020,falcon2018,uov2025spec,cryptoeprint:2022/214,sqisign2025spec}. Information-theoretic (IT) signatures provide unconditional security \cite{ChaumRoijakkers1991,Hanaoka2000,AmiriEtAl2018}, but current schemes often lack full simulation-based guarantees or remain inefficient.

The study of digital signatures is deeply intertwined with the theory of
broadcast and Byzantine agreement \cite{IT-book}. The foundational works of Pease, Shostak,
and Lamport~\cite{PeaseShostakLamport1980,LamportShostakPease1982} introduced
the broadcast and Byzantine agreement problems and established both feasibility
and impossibility results under various adversarial models. The attack-coupling
argument of Fischer, Lynch, and Merritt~\cite{FischerLynchMerritt1986} has since
become a standard tool for proving impossibility results in distributed
computing~\cite{Considine2005,FitziGarayMaurer2005}. A comprehensive
treatment of distributed algorithms can be found in~\cite{Lynch1996}.
Digital signatures play a crucial role in enabling authenticated broadcast.
Dolev and Strong~\cite{DolevStrong1979} showed that broadcast is possible for
any number of dishonest parties if signatures are available along arbitrary
transfer paths. Conversely, signatures can be constructed from broadcast via
information-checking protocols~\cite{CramerDamgaard1999,CramerDamgaardNielsen2015,IT-book}.
The minimal requirements for broadcast were further studied in
\cite{FitziGarayMaurer2005}. Randomized and variable-round protocols were
introduced to overcome deterministic lower bounds~\cite{BenOr1983,Rabin1983},
with optimal expected-round protocols developed in
\cite{FeldmanMicali1997,KatzKoo2009}. In asynchronous networks, impossibility
results such as~\cite{FischerLynchPaterson1985} motivated randomized protocols
for consensus~\cite{Aspnes2003}.

\subsection{Motivation and Contribution}
Although IT digital signatures are known to be
possible, existing constructions do not provide efficient, reusable, and
simulation-secure signatures in the multi-party, multi-use setting. The
central challenge is achieving unconditional security while retaining
efficiency, scalability, and practical deployability. Our work is
motivated by the intersection of several research directions:
IT security, secure multiparty computation (MPC),
long-term quantum resilience, and the authentication requirements of
modern distributed systems such as blockchains. In particular,
\textsc{SILMARILS} addresses two complementary settings: a two-party
TDV mode that yields efficient
$\mathsf{EUF\text{-}CMA}^{\neg\mathsf{DV}}$ signatures in the ROM and QROM, and a three-party
broadcast-model construction that achieves full simulation-based
security in the sense of Fitzi~\emph{et~al.} notion.

IT signatures are appealing in MPC settings, where using computational signatures would break end-to-end IT security. In practice, these guarantees are instantiated via hash functions, either as injective encodings in the Pure IT model or as random oracles in the IT+ROM and QROM settings, which requires analyzing the algebraic core in the ROM/QROM. Thus, unlike post-quantum signatures that rely on hardness assumptions (e.g., LWE or hash-based security),
our two-party construction achieves computational security in the ROM/QROM using IT tools, while the
three-party construction achieves IT security in the Pure IT model (replacing hashes with IT authentication codes).
In the three-party setting with at most one dishonest party, the classical lower bound shows that any \(\varepsilon\)-secure digital signature \emph{from scratch} which uses only noiseless channels and no setup, must satisfy \(\varepsilon \ge 1/3\)~\cite[Corollary~18.7]{IT-book}. The \(\varepsilon = 1/p\) construction of~\cite[Theorem~18.18]{IT-book} avoids this bound only by working in a richer setup with authenticated channels, pre-shared keys, or correlated randomness. This motivates our move to the TDV framework. Although TDV signatures superficially resemble MACs due to shared randomness, they are fundamentally different: MACs are symmetric, two-party, and non-transferable, whereas TDV schemes are publicly readable and verifiable only by a designated verifier who may choose to reveal validity evidences. In our two-party setting, the shared randomness is ephemeral rather than a
long-term secret key, and is used solely to enable controlled transferability
of signatures rather than to authenticate messages. This distinction ensures that our model and impossibility results remain fully consistent with the broader landscape of three-party signature constructions.
Hence, our goal is to develop an information-theoretic signature framework that is
efficient, reusable, and compatible with both classical and quantum settings.
\textsc{SILMARILS} achieves this by combining a minimal algebraic core over
$\mathbb{F}_p$ with perfect $2$-out-of-$2$ Shamir secret sharing. The framework
supports two complementary modes.

\textbf{Two-party mode:}
We obtain a transferable designated-verifier (TDV) signature scheme with
computational designated-verifier simulatability in the sense of
Jakobsson--Sako--Impagliazzo and $\mathsf{EUF\text{-}CMA}^{\neg\mathsf{DV}}$
security for all non-designated verifiers in both the ROM and QROM. This mode
provides lightweight authentication suitable for user--validator interactions
and other two-party settings where public verifiability is not required.

\textbf{Three-party mode:}
Adopting the broadcast model of Fitzi~\emph{et~al.}, we construct a
statistically secure signature protocol achieving full simulation-based
security with error~$1/p$. Because practical instantiations rely on hash
functions, we analyze security in the Pure IT, IT+ROM, and QROM models and
extend the Fitzi simulation framework to a quantum setting with classical I/O.

These two modes together capture the full range of practical and theoretical
requirements: efficient two-party authentication, and a three-party
information-checking (IC) structure supporting correctness, unforgeability,
transferability, and secrecy. Table~\ref{tab:signature-families} summarizes the comparison of \textsc{SILMARILS} and other signature families.
\begin{table}[tb]
\centering
\caption{Comparison of major digital signature families.}
\label{tab:signature-families}
\resizebox{0.99\textwidth}{!}{
\begin{tabular}{@{}l l l l l@{}} 
\toprule
\textbf{Family} & \textbf{Assumption} & \textbf{Key Sizes} & \textbf{Signature Sizes} & \textbf{Quantum Security} \\
\midrule
Classical (RSA, ECDSA) & Factoring, DL, ECDL & Small--Medium & Small & No \\
Hash-Based (SPHINCS$^+$) & Hash functions & Small & Large & Yes \\
Code-Based (CFS) & Decoding random codes & Large & Medium & Yes \\
Lattice-Based (Dilithium, Falcon) & M-SIS/M-LWE, SIS over NTRU Lattices & Medium & Small--Medium & Yes \\
Multivariate (UOV) & MQ hardness & Large & Small & Yes \\
Isogeny-Based (SQISign) & Isogeny problems & Small & Small & Yes (emerging) \\
\rowcolor{lightgray}
\textbf{IT-Based (\textsc{SILMARILS})} & \textbf{Signer–Verifier shared randomness (TDV)} & \textbf{Very Small} & \textbf{Very small} & \textbf{Yes} \\
\bottomrule
\end{tabular}
}\vspace{-0.7cm}
\end{table}

\subsection{Applications to PQ Blockchain Architectures}
The properties of \textsc{SILMARILS} align naturally with the needs of blockchain systems. Constant-time verification, compact receipts, and IT security allow PQ-safe architectures without the heavy onchain footprint of standardized PQ signatures. We developed a PQ-safe blockchain protocol where \textsc{SILMARILS} replaces both validator signatures during consensus and user transaction message authentication, demonstrating that IT techniques can deliver practical, high-performance authentication in both classical and post-quantum distributed systems. The key achievements of our architecture include: 1) The TLS-derived per-pair secrets eliminate the need for large user-facing PQ signatures in high-frequency interactions, while designated-verifier receipts preserve public auditability. 2) Validator-to-validator authentication becomes cheaper, consensus protocols simplify, and randomness generation can be implemented via lightweight commit–reveal mechanisms. The concrete ledger integration is outside the scope of this paper since we intend a companion publication.

The rest of this paper is organized as follows.
Section~\ref{Pre_sec} presents the necessary preliminaries, including
notation, ideal functionalities, and the security framework used
throughout the paper.
Section~\ref{sec:quantum-lemma-18-15} introduces the quantum-information
tools used in our analysis.
Section~\ref{Construction_sec} describes our digital signature scheme
\textsc{SILMARILS} in its two-party and three-party modes, detailing the
key generation, signing, and verification procedures.
Section~\ref{sec:dv-security} and Section~\ref{Security_sec}  provide a rigorous
security analysis of the two-party mode of \textsc{SILMARILS} under both the
Jakobsson--Sako--Impagliazzo designated-verifier framework and the
$\mathsf{EUF\text{-}CMA}^{\neg\mathsf{DV}}$ unforgeability notion, while
Section~\ref{Security_sec2} analyzes the security of the three-party mode.
Section~\ref{sec:performance} evaluates the efficiency of our scheme.
Finally, Section~\ref{sec:conclusion} concludes the paper.
\section{Preliminaries}\label{Pre_sec}
Let $p$ be a prime and let $\mathbb{F}_p$ denote the finite field of order $p$.
We write $\mathbb{F}_p^\ast = \mathbb{F}_p\setminus\{0\}$ for its multiplicative
group. Sampling uniformly from $\mathbb{F}_p^\ast$ is denoted by
$x \xleftarrow{\$} \mathbb{F}_p^\ast$.
For a random variable $X$ over a finite set $\mathcal{X}$, we write
$P_X$ for its distribution. For two distributions $P$ and $Q$ over
$\mathcal{X}$, the total variation distance is
\begin{eqnarray}\label{d_var}
 d_{\mathrm{var}}(P,Q)
&=& \frac12 \sum_{x\in\mathcal{X}} |P(x)-Q(x)|
= \max_{A\subseteq\mathcal{X}} |P(A)-Q(A)|.
\end{eqnarray}
A function $\mu:\mathbb{N}\to[0,1]$ is negligible if for every polynomial
$q(\cdot)$ there exists $k_0\in \mathbb{N}$ such that $\mu(k) \le 1/q(k)$ for all $k\ge k_0$.
We use boldface to denote vectors and tuples when needed, and we use
$\mathsf{View}_i$ to denote the view of party $\mathcal{P}_i$ in a protocol execution.
We use the perfect $2$-out-of-$2$ Shamir secret sharing scheme
$\mathsf{SSS}_{2,2}$ over $\mathbb{F}_p$ \cite{Shamir_sec}. Given a secret $s\in\mathbb{F}_p$, the
dealer chooses a random slope $a \xleftarrow{\$} \mathbb{F}_p$ and defines the
degree-$1$ polynomial
\(
f(x) = s + ax
\).
Let $w_0,w_1\in\mathbb{F}_p$ be fixed, distinct, public interpolation points or \emph{weights}.
The shares are $s_0 = f(w_0)$ and $s_1 = f(w_1)$.
Now, given the shares $(s_0,s_1)=\mathsf{SSS}_{2,2}(s)$, and public weights $w_0,w_1$, reconstruction function $\mathsf{SSS}_{2,2}^{-1}$ is defined by Lagrange interpolation:
\[
s =\mathsf{SSS}_{2,2}^{-1}(s_0,s_1)= \frac{w_0 s_1 - w_1 s_0}{w_0 - w_1}.
\]
Each individual share is uniformly distributed over $\mathbb{F}_p$ and
independent of the secret $s$, so $\mathsf{SSS}_{2,2}$ is a perfect
secret-sharing scheme~\cite{Shamir_sec}. Since its security is purely
information-theoretic, Shamir's Secret Sharing enjoys a natural form of
quantum resilience: its guarantees do not rely on computational
assumptions that could be broken by quantum algorithms, and remain valid
even in the presence of quantum adversaries~\cite{shamir_quantumsafe}.
\subsection{Ideal Functionality for Three-Party IT-Based Digital Signatures}\label{subsec:model}
We analyze our construction in the standard synchronous MPC model~\cite{IT-book}, the minimal setting supporting IT-based authentication~\cite{Access_paper}. The system consists of three parties: the signer $\mathcal{P}_1$, the holder $\mathcal{P}_2$, and the verifier $\mathcal{P}_3$.
Parties communicate over pairwise private and authenticated channels and have access to authenticated broadcast. Pairwise channels ensure confidentiality and integrity of point-to-point messages, while broadcast prevents equivocation. This matches the communication assumptions of the three-party \textsc{SILMARILS} protocol and the IC-based framework of~\cite[Protocol~18.7]{IT-book}.
We consider a static, active adversary corrupting at most one party. Upon corruption, the adversary obtains full control of that party and may arbitrarily deviate from the protocol. The one-corruption threshold is the minimal requirement for IT-based authentication in the three-party setting~\cite{IT-book}.
\begin{definition}
For a protocol $\Pi$ and adversary $\mathcal{A}$ corrupting party
$\mathcal{P}_i$, the view $\mathsf{View}_i^\Pi$ consists of: 1) the input of $\mathcal{P}_i$, 2) the private randomness of $\mathcal{P}_i$, 3) all messages received by $\mathcal{P}_i$, 4) all oracle answers observed by $\mathcal{P}_i$.
\end{definition}
\begin{figure}[b]
\vspace{-0.6cm}
\centering
\resizebox{0.7\textwidth}{!}{
\begin{tikzpicture}[>=stealth,thick]
\node at (1.5,1.0) {$\mathcal{P}_1$};
\node at (6,1.0) {$\mathcal{P}_2$};
\node at (11,1.0) {$\mathcal{P}_3$};

\node (f1) [draw, rounded corners, minimum width=2.4cm, minimum height=1cm]
      at (3,0) {$f_{\mathrm{DS1}}$};

\node (f2) [draw, rounded corners, minimum width=2.4cm, minimum height=1cm]
      at (9,0) {$f_{\mathrm{DS2}}$};

\node (decision) [rounded corners, minimum width=4.4cm, minimum height=1.4cm]
      at (13.5,0)
{
$z_3=\displaystyle
\begin{cases}
z_2 & \text{if } z_2 = x_1,\\[4pt]
\perp & \text{else}.
\end{cases}
$
};

\draw[->] (1,0) -- node[above] {$x_1$} (f1);
\draw[->] (f1) -- node[above] {$x_1$} (5,0);

\draw[->] (7,0) -- node[above] {$z_2$} (f2);
\draw[->] (f2) -- node[above] {$z_3$} (11.5,0);

\draw[dashed,->] (f1.south) to[bend right=60] node[below] {$x_1$} (f2.south);

\end{tikzpicture}
}
\caption{Ideal three-party digital-signature functionality.}\label{threeparty}
\vspace{-0.9cm}
\end{figure}
A three-party digital-signature scheme with transfer involves parties $\mathcal{P}_1$ (signer), $\mathcal{P}_2$ (holder), and $\mathcal{P}_3$ (verifier), communicating over pairwise private authenticated channels and an authenticated broadcast primitive \cite{IT-book}. Let $x_1=M$ be the input to the scheme, which is defined by five polynomial-time algorithms:
\begin{itemize}

  \item \textbf{$\mathsf{KeyGen}(1^k)$:}
        run by $\mathcal{P}_1$, outputs $(\mathsf{pk},\mathsf{sk})$.

  \item \textbf{$\mathsf{Sign}_{\mathsf{sk}}(M)$:}
        run by $\mathcal{P}_1$, outputs a signature $\sigma$ to $\mathcal{P}_2$.

  \item \textbf{$\mathsf{Extract}(M,\sigma)$:}
        run by $\mathcal{P}_2$ in the signing phase, outputs
        \(
          z_2 \in \mathcal{X} \cup \{\bot\},
        \)         where $z_2$ is the authenticated value associated with $(M,\sigma)$.

  \item \textbf{$\mathsf{Transfer}(z_2,\sigma)$:}
        run by $\mathcal{P}_2$, outputs $(M',\sigma')$ to $\mathcal{P}_3$.

  \item \textbf{$\mathsf{ExtractTransfer}(M',\sigma')$:}
        run by $\mathcal{P}_3$, outputs
        \(
          z_3 \in \mathcal{X} \cup \{\bot\},
        \)
        representing the verifier's recovered authenticated value.
\end{itemize}
We formalize security using the three-party ideal functionality
$f_{\mathsf{DS}} = (f_{\mathsf{DS1}}, f_{\mathsf{DS2}})$, originally
introduced by Fitzi~\emph{et~al.}
\cite{Fitzi2001,Fitzi2002,FitziWolf2004,IT-book} and illustrated in
\figurename~\ref{threeparty}.
The functionality captures the minimal guarantees required from an
IT-DS scheme with transfer.

\noindent\textbf{Signing Phase $(f_{\mathsf{DS1}})$:}
The signer $\mathcal{P}_1$ provides an input
$x_1 \in \mathcal{X} \cup \{\bot\}$ to the functionality. The value
$x_1$ represents the authenticated value associated with the message. The functionality forwards
$x_1$ to the holder $\mathcal{P}_2$ and stores it internally for use in
the transfer phase. No other party receives information at this stage.

\noindent\textbf{Transfer Phase $(f_{\mathsf{DS2}})$:}
To transfer a signature,  $\mathcal{P}_2$ submits a value
$z_2 \in \mathcal{X}$ to the functionality and $\mathcal{P}_3$
receives
\(
  z_3 = f_{\mathsf{DS2}}(x_1, z_2) =
  \begin{cases}
    z_2, & \text{if } z_2 = x_1,\\[4pt]
    \perp, & \text{otherwise}.
  \end{cases}
\)
The ideal functionality enforces the three core guarantees of
IT-DS:
\begin{itemize}
  \item \textbf{Correctness:} honest executions always yield the
        authenticated value $x_1$.
  \item \textbf{Unforgeability:} neither the holder nor the verifier can
        cause the functionality to output any value other than $x_1$.
  \item \textbf{Transferability:} any honest verifier must accept a value
        previously authenticated by an honest signer and forwarded by an
        honest holder.
\end{itemize}
\subsection{Real-World Security Definition and Supporting Notions}\label{subsec:realworld-security}
We formalize security using the standard real or ideal paradigm. Let
$f_{\mathsf{DS}}$ denote the three-party ideal functionality for digital
signatures introduced by Fitzi~\emph{et al.} \cite{IT-book}. A real protocol $\Pi$ is said to
$\varepsilon$-securely implement $f_{\mathsf{DS}}$ if every real-world attack
can be simulated in the ideal world with at most $\varepsilon$ statistical
distinguishability.
For any adversary $\mathcal{A}$ corrupting at most one party, there must exist a
simulator $\mathcal{S}$ such that the adversary's real-world view when
interacting with $\Pi$ is statistically close to its ideal-world view when
interacting with $\mathcal{S}$ and the ideal functionality $f_{\mathsf{DS}}$, i.e.,
\(
d_{\mathrm{var}}\Bigl(
\mathsf{View}^{\mathrm{real}}_{\mathcal{A}}(\Pi),\,
\mathsf{View}^{\mathrm{ideal}}_{\mathcal{A}}(\mathcal{S}\circ f_{\mathsf{DS}})
\Bigr)
\le \varepsilon.
\)
This definition captures the strongest form of information-theoretic security:
the adversary must not be able to distinguish whether it is interacting with
the real protocol or with the ideal functionality, except with probability
$\varepsilon$.

\noindent\textbf{Characterization via Local Properties:}
Fitzi~\emph{et al.}~\cite{IT-book} established a fundamental equivalence that reduces
simulation-based security to four concrete properties of a protocol.

\begin{theorem}[\cite{IT-book}, Lemma 18.15]\label{main_thm}
A three-party protocol $\Pi$ is a $2\varepsilon$-secure implementation of $f_{\mathsf{DS}}$ with error $\varepsilon$ against one active adversary if the following conditions hold for any attack by the dishonest party:
\begin{itemize}
\item $\varepsilon_{\mathrm{corr}}\mathrm{-correctness}$: with honest $\mathcal{P}_1$ and $\mathcal{P}_2$, $\mathrm{Pr}(z_2=x_1)\geq 1-\varepsilon_{\mathrm{corr}},$
\item $\varepsilon_{\mathrm{uf}}\mathrm{-unforgeability}$: with honest $\mathcal{P}_1$ and $\mathcal{P}_3$, $\mathrm{Pr}(z_3\neq x_1,z_3\neq \bot)\leq \varepsilon_{\mathrm{uf}},$
\item $\varepsilon_{\mathrm{trans}}\mathrm{-transferability}$: with honest $\mathcal{P}_2$ and $\mathcal{P}_3$, $\mathrm{Pr}(z_2\neq z_3,z_2\neq \bot)\leq \varepsilon_{\mathrm{trans}},$
\item $\varepsilon_{\mathrm{sec}}\mathrm{-secrecy}$: with honest $\mathcal{P}_1$ and $\mathcal{P}_2$, $\mathsf{View}_3^\Pi(x_1)$ of $\mathcal{P}_3$ in the signing phase is almost independent of $\mathcal{P}_1$'s input $x_1$, which implies the existence of $Q_{\mathsf{View}_3^\Pi}$ such that $d_{\mathrm{var}}(P_{\mathsf{View}_3^\Pi(x_1)},Q_{\mathsf{View}_3^\Pi})\leq\varepsilon_{\mathrm{sec}}$, for every $x_1$,
\end{itemize}
where
\(
\varepsilon = \max\{
\varepsilon_{\mathrm{corr}},
\varepsilon_{\mathrm{uf}},
\varepsilon_{\mathrm{trans}},
\varepsilon_{\mathrm{sec}}
\}.
\)
\end{theorem}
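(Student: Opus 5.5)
The plan is a case analysis on which party the static adversary corrupts. For each case we construct an ideal-world simulator $\mathcal{S}$ that interacts with $f_{\mathsf{DS}}$ and show that the real and ideal joint distributions, meaning the adversary's view together with the honest parties' outputs, are within $2\varepsilon$ in $d_{\mathrm{var}}$. We also treat the all-honest execution. Throughout we use the fact that if two experiments run identically until a "bad" event $B$ and $\Pr[B]\le\delta$, then their outputs differ by at most $\delta$ in total variation. We also use the triangle inequality for $d_{\mathrm{var}}$ from~\eqref{d_var}.

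\textbf{Honest run, or $\mathcal{P}_3$ corrupted.} In the honest run, correctness gives $z_2=x_1$ except with probability $\varepsilon_{\mathrm{corr}}$, so the ideal output $x_1$ is matched. When $\mathcal{P}_3$ is corrupted, $\mathcal{S}$ receives nothing in the signing phase. It samples a view $V\sim Q_{\mathsf{View}_3^\Pi}$, hands it to $\mathcal{A}$, and later forwards whatever $\mathcal{A}$ does in the transfer phase. The real view $\mathsf{View}_3^\Pi(x_1)$ and the simulated $V$ differ by at most $\varepsilon_{\mathrm{sec}}$. The honest holder's output differs from $x_1$ with probability at most $\varepsilon_{\mathrm{corr}}$. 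Together these give error $\le \varepsilon_{\mathrm{sec}}+\varepsilon_{\mathrm{corr}}\le 2\varepsilon$. The delicate point is that the transfer-phase messages sent to $\mathcal{P}_3$ depend on $x_1$, which the simulator only learns from $f_{\mathsf{DS2}}$. So $\mathcal{S}$ must produce the transfer messages by resampling internal state consistent with both $V$ and the revealed value. I would argue this is possible by conditioning the real joint distribution on (view, $x_1$). The resulting coupling costs the same $\varepsilon_{\mathrm{sec}}$.

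\textbf{$\mathcal{P}_2$ corrupted.} Here $\mathcal{S}$ learns $x_1$ from $f_{\mathsf{DS1}}$ and runs honest $\mathcal{P}_1$ on $x_1$ internally against $\mathcal{A}$. This makes the adversary's view perfectly distributed. In the transfer phase $\mathcal{S}$ runs honest $\mathcal{P}_3$'s $\mathsf{ExtractTransfer}$ internally, obtaining $z_3$. It submits $z_3$ to $f_{\mathsf{DS2}}$ when $z_3\neq\bot$, and an invalid value otherwise. The ideal output equals the real $z_3$ unless $z_3\notin\{x_1,\bot\}$, which unforgeability bounds by $\varepsilon_{\mathrm{uf}}$.

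\textbf{$\mathcal{P}_1$ corrupted.} $\mathcal{S}$ runs honest $\mathcal{P}_2$ internally against $\mathcal{A}$ to obtain $z_2$ and submits $x_1:=z_2$ to $f_{\mathsf{DS1}}$. In the transfer phase it runs honest $\mathcal{P}_2$ and $\mathcal{P}_3$ internally. Ideally the verifier outputs $z_2$, while in the real run it outputs $z_3$. These differ, with $z_2\neq\bot$, with probability at most $\varepsilon_{\mathrm{trans}}$. If $z_2=\bot$, then $x_1=\bot$ and the ideal output is $\perp$. In that case I need the real $z_3=\bot$ as well, which I would treat as part of the transferability event or an honest-holder convention.

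The main obstacle is the secrecy case. The simulator must jointly sample the signing-phase view and the later transfer transcript consistently, and the independence in the secrecy hypothesis holds only up to $\varepsilon_{\mathrm{sec}}$. Turning that marginal closeness into a closeness of the full joint distribution is where the factor $2$ enters. My first attempt would be a maximal coupling of $P_{\mathsf{View}_3^\Pi(x_1)}$ with $Q$, followed by conditional resampling of the remaining real-execution randomness.
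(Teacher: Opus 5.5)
Your proposal is correct and uses the same decomposition as the paper's argument for this characterization, namely the $(\Leftarrow)$ direction of Theorem~\ref{thm:quantum-lemma-18-15}: when $\mathcal{P}_3$ is corrupted the error is $\varepsilon_{\mathrm{corr}}+\varepsilon_{\mathrm{sec}}\le 2\varepsilon$, and the $\mathcal{P}_2$- and $\mathcal{P}_1$-corrupted cases cost only $\varepsilon_{\mathrm{uf}}$ and $\varepsilon_{\mathrm{trans}}$; your explicit simulators (running the honest parties internally, and resampling $\mathcal{P}_2$'s transfer message from the real conditional given the view and $x_1$) are a more careful version of the paper's coupling and contractivity steps, which never handle the transfer phase when $\mathcal{P}_3$ is corrupted. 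Two small corrections: first, the resampling step costs nothing beyond $\varepsilon_{\mathrm{sec}}$, because the same kernel is applied to both $P_{\mathsf{View}_3^\Pi(x_1)}$ and $Q_{\mathsf{View}_3^\Pi}$ (data processing suffices, no maximal coupling is needed), so the factor $2$ comes from adding the correctness error, not from that step; second, in the all-honest run you also need transferability to get $z_3=x_1$, which gives $\varepsilon_{\mathrm{corr}}+\varepsilon_{\mathrm{trans}}\le 2\varepsilon$.
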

Thus, to prove that a protocol securely realizes $f_{\mathsf{DS}}$, it suffices
to establish these four properties. This modular approach greatly simplifies the
security analysis in Section~\ref{Security_sec2}.
\section{Quantum Generalization of Theorem~\ref{main_thm}}\label{sec:quantum-lemma-18-15}
We analyze our constructions under three models for the hash function $H$: the \emph{Pure IT model}, where $H:\mathcal{X}\to\mathbb{F}_p$ is a deterministic encoding; the \emph{IT+ROM}, where $H:\{0,1\}^\ast\to\mathbb{F}_p$ is a classical random oracle; and the \emph{IT+QROM}, where $H$ is accessible to quantum adversaries. In this section we lift Theorem~\ref{main_thm} (cf.\ \cite[Lemma~18.15]{IT-book}) to the quantum setting. This extension is necessary because many classical ROM techniques fail in the QROM—e.g., classical rewinding, inspecting superposition queries, classical oracle programming, and the forking lemma \cite{Unruh2010,BonehZhandry2013,Unruh2015,Zhandry2012}. In contrast, the proof of Theorem~\ref{main_thm} never interacts with the adversary algorithmically; it reasons only about the joint distributions of $(X_1,Z_2,Z_3,\mathsf{View}_1,\mathsf{View}_2,\mathsf{View}_3)$, where $X_1$ is the signer's input, $Z_2,Z_3$ are outputs, and each $\mathsf{View}_i$ is the corresponding party's view. The simulators are defined at the level of distributions (or, in the quantum case, density operators), not as black-box procedures that manipulate a quantum adversary. The proof uses only standard IT tools—triangle inequality, conditioning/averaging, and coupling (including maximal couplings)—all of which lift directly to the quantum setting by replacing variational distance with trace distance and interpreting views as quantum states. No rewinding, forking, or oracle programming is required \cite{Watrous2018,Unruh2015}.
\subsection{Quantum Preliminaries}
We briefly recall the quantum-information concepts required for our analysis.
All Hilbert spaces are finite-dimensional. For a Hilbert space $\mathcal{H}$, we
write $\mathcal{D}(\mathcal{H})$ for the set of density operators on
$\mathcal{H}$, i.e., positive semidefinite operators of trace~$1$.

\noindent\textit{Trace of an Operator and Partial Trace:}
Let $T$ be a linear operator acting on a finite-dimensional Hilbert space
$\mathcal{H}$. The \emph{trace} of $T$ is defined as $\mathrm{Tr}[T] := \sum_{i} \langle i \,|\, T \,|\, i \rangle,$
where $\{\,|i\rangle\,\}$ is any orthonormal basis of $\mathcal{H}$.
This quantity is basis independent and equals the sum of the eigenvalues of $T$.
Now, let $\mathcal{H}_A$ and $\mathcal{H}_B$ be Hilbert spaces describing subsystems
$A$ and $B$, and let $\rho_{AB}$ be a density operator on the composite system
$\mathcal{H}_A \otimes \mathcal{H}_B$.
The \emph{partial trace over $B$}, denoted $\mathrm{Tr}_B$, is
\begin{eqnarray}
  \mathrm{Tr}_B[\rho_{AB}]
  :=   \sum_{j}
  \bigl( I_A \otimes \langle j|_B \bigr)\,
  \rho_{AB}\,
  \bigl( I_A \otimes |j\rangle_B \bigr)\equiv \rho_A,
\end{eqnarray}
where $\{|j\rangle_B\}$ is any orthonormal basis of $\mathcal{H}_B$.
The resulting operator $\rho_A := \mathrm{Tr}_B[\rho_{AB}]$ is the reduced state
of subsystem $A$.
The partial trace over $A$ is
\begin{eqnarray}
 \mathrm{Tr}_A[\rho_{AB}]  := \sum_{i}
  \bigl( \langle i|_A \otimes I_B \bigr)\,
  \rho_{AB}\,
  \bigl( |i\rangle_A \otimes I_B \bigr)\equiv \rho_B.
\end{eqnarray}

\noindent\textit{Trace Distance:}
For density operators $\rho,\sigma\in\mathcal{D}(\mathcal{H})$, the
\emph{trace distance} is $\Delta(\rho,\sigma)
  :=  \tfrac12\|\rho-\sigma\|_1,$
where $\|A\|_1=\mathrm{Tr}\sqrt{A^\dagger A}$ denotes the trace norm. Trace
distance is the quantum analogue of variational distance and satisfies the
triangle inequality.
\begin{definition}[Bipartite States, \cite{NielsenChuang2010}]
A bipartite quantum state over registers $A$ and $B$ is a density operator
$\rho_{AB}=\rho_A\otimes \rho_B\in\mathcal{D}(\mathcal{H}_A\otimes\mathcal{H}_B)$.
\end{definition}

\noindent\textit{Classical--Quantum States:}
A classical--quantum (cq) state over a classical register $X$ and a quantum
register $Q$ has the form
\(
  \rho_{XQ}
  =
  \sum_x p_x\,|x\rangle\!\langle x|\otimes \rho_Q^x,
\)
where $\{p_x\}$ is a probability distribution and $\rho_Q^x$ are density
operators. The trace distance between two cq-states decomposes as follows.
\begin{lemma}[Trace-Distance Decomposition, \cite{Watrous2018}]
\label{lem:trace-decomposition}
Let
\(
  \rho_{XQ}=\sum_x p_x\,|x\rangle\!\langle x|\otimes\rho_Q^x\) and
\(
  \sigma_{XQ}=\sum_x p_x\,|x\rangle\!\langle x|\otimes\sigma_Q^x,
\)
be classical--quantum states with the \emph{same} classical marginal
distribution $\{p_x\}$. Then
\(
  \Delta(\rho_{XQ},\sigma_{XQ})
  \le
  \sum_x p_x\,\Delta(\rho_Q^x,\sigma_Q^x).
\)
\end{lemma}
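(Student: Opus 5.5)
The plan is to prove the bound directly from the block structure of the difference operator; in fact I expect it to give equality, which implies the stated inequality. Since both states carry the same classical weights $p_x$, I will first write
\[
\rho_{XQ}-\sigma_{XQ}=\sum_x p_x\,|x\rangle\!\langle x|\otimes\bigl(\rho_Q^x-\sigma_Q^x\bigr).
\]
The projectors $|x\rangle\!\langle x|\otimes I_Q$ are mutually orthogonal and sum to the identity. Hence this operator is block diagonal with respect to the decomposition $\mathcal{H}_X\otimes\mathcal{H}_Q=\bigoplus_x \bigl(|x\rangle\otimes\mathcal{H}_Q\bigr)$, and its $x$-th block is $p_x(\rho_Q^x-\sigma_Q^x)$.

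The key step is that the trace norm is additive over orthogonal direct sums, $\bigl\|\bigoplus_x A_x\bigr\|_1=\sum_x\|A_x\|_1$. I will justify this as follows. For each $x$, the block $A_x=p_x(\rho_Q^x-\sigma_Q^x)$ is Hermitian, so it has a spectral decomposition. Collecting the eigen-decompositions of all blocks gives an eigen-decomposition of the whole operator, whose eigenvalues are the union of the eigenvalues of the blocks. For a Hermitian operator $A$, $\|A\|_1=\mathrm{Tr}\sqrt{A^\dagger A}$ equals the sum of the absolute values of its eigenvalues. The additivity therefore follows immediately.

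Combining these, I get
\[
\Delta(\rho_{XQ},\sigma_{XQ})=\tfrac12\sum_x p_x\bigl\|\rho_Q^x-\sigma_Q^x\bigr\|_1=\sum_x p_x\,\Delta(\rho_Q^x,\sigma_Q^x),
\]
using $p_x\ge 0$ to pull the weights out of the norm. This equality gives the claimed inequality.

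There is no serious obstacle. The only point needing care is additivity of the trace norm over orthogonal blocks. As an alternative route, which avoids spectral arguments and yields the inequality only, I would use the triangle inequality together with multiplicativity of the trace norm under tensor products:
\[
\bigl\||x\rangle\!\langle x|\otimes B\bigr\|_1=\bigl\||x\rangle\!\langle x|\bigr\|_1\,\|B\|_1=\|B\|_1.
\]
This suffices for the lemma as stated.
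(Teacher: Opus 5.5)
Your proof is correct. The paper gives no proof of this lemma; it only cites \cite{Watrous2018}. Your block-diagonal argument is therefore a self-contained justification, and it establishes the standard, tight equality $\Delta(\rho_{XQ},\sigma_{XQ})=\sum_x p_x\,\Delta(\rho_Q^x,\sigma_Q^x)$ rather than just the stated inequality.

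Your two routes buy different things:
\begin{itemize}
\item The spectral/direct-sum route needs the $|x\rangle$ to be orthonormal, so that the subspaces $|x\rangle\otimes\mathcal{H}_Q$ are mutually orthogonal. In return it gives equality.
\item The triangle-inequality route only uses $\bigl\||x\rangle\!\langle x|\bigr\|_1=1$. It therefore survives even without orthogonality, but it yields only the upper bound.
\end{itemize}
Both routes use the equal-marginal hypothesis exactly once, to factor the difference as $\sum_x p_x|x\rangle\!\langle x|\otimes(\rho_Q^x-\sigma_Q^x)$.

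Either route also extends at once to unequal marginals $\{p_x\},\{q_x\}$. Writing $p_x\rho_Q^x-q_x\sigma_Q^x=p_x(\rho_Q^x-\sigma_Q^x)+(p_x-q_x)\sigma_Q^x$ gives $\Delta(\rho_{XQ},\sigma_{XQ})\le\sum_x p_x\,\Delta(\rho_Q^x,\sigma_Q^x)+d_{\mathrm{var}}(\{p_x\},\{q_x\})$. This is the kind of bound that Case~1 of the proof of Theorem~\ref{thm:quantum-lemma-18-15} actually requires. There the real and ideal marginals of $Z_2$ differ, so the lemma as stated does not apply verbatim.
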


\noindent\textit{CPTP Maps:}
A linear map $\Phi:\mathcal{D}(\mathcal{H})\to\mathcal{D}(\mathcal{H}')$ is
\emph{completely positive and trace-preserving} (CPTP) if
$\Phi\otimes I_{\mathcal{K}}$ is positive for every auxiliary space
$\mathcal{K}$ and $\mathrm{Tr}(\Phi(\rho))=1$ for all $\rho$ \cite{cao2023quantummapscptphptp}. CPTP maps model
all physically admissible quantum operations, including unitary evolution,
measurement, discarding registers, and interaction with an environment.
\begin{lemma}[Contractivity of CPTP Maps, \cite{Watrous2018}] \label{lem:contractivity-prelim}
For any CPTP map $\Phi$ and any density operators $\rho,\sigma$, we have
\(
  \Delta\bigl(\Phi(\rho),\Phi(\sigma)\bigr)
  \;\le\;
  \Delta(\rho,\sigma).
\)
\end{lemma}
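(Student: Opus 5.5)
This is the standard contractivity argument; I would reduce everything to the Stinespring picture. By Stinespring dilation, write $\Phi(\rho)=\mathrm{Tr}_E[V\rho V^\dagger]$ for an isometry $V:\mathcal{H}\to\mathcal{H}'\otimes\mathcal{H}_E$. It then suffices to prove two facts separately: trace distance is invariant under isometries, and it is non-increasing under partial trace. Both are immediate from the variational characterization of trace distance, so the first step is to establish that characterization.

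The characterization is the Helstrom-type formula
\[
\Delta(\rho,\sigma)=\max_{0\le P\le I}\mathrm{Tr}[P(\rho-\sigma)],
\]
which is the quantum analogue of the $\max_A|P(A)-Q(A)|$ form of $d_{\mathrm{var}}$ in~\eqref{d_var}. To prove it, I would use the Jordan decomposition $\rho-\sigma=S_+-S_-$, where $S_\pm\ge 0$ have orthogonal supports. Because $\mathrm{Tr}[\rho-\sigma]=0$, we get $\mathrm{Tr}S_+=\mathrm{Tr}S_-=\tfrac12\|\rho-\sigma\|_1$. For any $0\le P\le I$,
\[
\mathrm{Tr}[P(\rho-\sigma)]\le\mathrm{Tr}[PS_+]\le\mathrm{Tr}S_+ ,
\]
and equality is attained by taking $P$ to be the projector onto the support of $S_+$.

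The isometry step is direct. $\|V(\rho-\sigma)V^\dagger\|_1=\|\rho-\sigma\|_1$, because $V(\rho-\sigma)V^\dagger$ has the same nonzero eigenvalues as $\rho-\sigma$.

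For the partial-trace step, take the optimal $0\le P_A\le I_A$ for the reduced states $\rho_A,\sigma_A$. Then $P_A\otimes I_B$ is also a valid operator between $0$ and $I$, and
\[
\mathrm{Tr}[P_A(\rho_A-\sigma_A)]=\mathrm{Tr}[(P_A\otimes I_B)(\rho_{AB}-\sigma_{AB})]\le\Delta(\rho_{AB},\sigma_{AB}).
\]
Combining the two steps gives $\Delta(\Phi(\rho),\Phi(\sigma))\le\Delta(V\rho V^\dagger,V\sigma V^\dagger)=\Delta(\rho,\sigma)$.

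An alternative avoids Stinespring entirely. Apply the same Jordan decomposition, then use positivity and trace preservation of $\Phi$:
\[
\|\Phi(\rho)-\Phi(\sigma)\|_1\le\|\Phi(S_+)\|_1+\|\Phi(S_-)\|_1=\mathrm{Tr}S_++\mathrm{Tr}S_-=\|\rho-\sigma\|_1 .
\]
This version needs only positivity and trace preservation, not complete positivity, so I would present it as the main proof and mention Stinespring only as a remark. The only step requiring any care is the Jordan decomposition and the fact that both parts carry equal trace.
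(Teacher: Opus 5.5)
Your proof is correct. There is no proof in the paper to compare it with: the paper states this lemma as a standard fact taken from Watrous and uses it only as a black box in the proof of Theorem~\ref{thm:quantum-lemma-18-15}.

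Your second argument is the usual textbook proof. It takes the Jordan decomposition $\rho-\sigma=S_+-S_-$, applies the triangle inequality, and uses $\|\Phi(S_\pm)\|_1=\mathrm{Tr}\,\Phi(S_\pm)=\mathrm{Tr}\,S_\pm$. You are right to make it the main one, because it is shorter and proves the stronger statement for every positive trace-preserving map.

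The Stinespring route is also valid, but it costs more. It needs complete positivity (to invoke the dilation theorem) plus two auxiliary facts, the Helstrom formula and monotonicity under partial trace. What it gives in return is the operational picture of a channel as an isometry followed by discarding a register, which is how the paper informally motivates CPTP maps.

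One point of rigor is worth stating explicitly. The paper defines $\Phi$ only on density operators, so applying it to $\rho-\sigma$ and to the unnormalized $S_\pm$ uses its linear extension. The identity $\mathrm{Tr}\,\Phi(S_\pm)=\mathrm{Tr}\,S_\pm$ then follows by writing $S_\pm=(\mathrm{Tr}\,S_\pm)\,\tau_\pm$ with $\tau_\pm$ a density operator; the case $S_\pm=0$ is trivial.
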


\noindent\textit{Quantum Coupling:}
The following lemma is the quantum analogue of the classical coupling argument
used to relate the probability of ``bad'' events in two distributions.
\begin{lemma}[Quantum Coupling Lemma, \cite{Winter2016}]
\label{lem:quantum-coupling-prelim}
Let $\rho_{AB}$ and $\sigma_{AB}$ be bipartite states. If an event $E$ has
probability $0$ under $\sigma_{AB}$, then
\(
  \Pr_{\rho}[E]
  \;\le\;
  \Delta(\rho_{AB},\sigma_{AB}).
\)
\end{lemma}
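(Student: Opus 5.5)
The statement to prove is the Quantum Coupling Lemma: if an event $E$ has probability $0$ under $\sigma_{AB}$, then $\Pr_\rho[E]\le\Delta(\rho_{AB},\sigma_{AB})$. The plan is to model $E$ as a two-outcome measurement and then use the variational characterization of trace distance.

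\textbf{Formalizing the event.} In the quantum setting an event $E$ is a two-outcome measurement on $AB$, i.e.\ a POVM $\{\Pi_E, I-\Pi_E\}$ with $0\le \Pi_E\le I$. Its probability under a state $\tau$ is $\Pr_\tau[E]=\mathrm{Tr}[\Pi_E\tau]$. If $E$ is instead defined through some classical post-processing of a measurement outcome, I first absorb that post-processing into a single effect operator $\Pi_E$. This uses the CPTP framework: the measurement-plus-processing is a CPTP map to a classical bit.

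\textbf{Key step.} I invoke the Helstrom/variational bound
\[
\Delta(\rho,\sigma)=\max_{0\le \Pi\le I}\mathrm{Tr}[\Pi(\rho-\sigma)].
\]
I will justify this briefly. Write the Jordan decomposition $\rho-\sigma=P-N$ with $P,N\ge 0$ of orthogonal support. Since $\mathrm{Tr}(\rho-\sigma)=0$, we have $\mathrm{Tr}P=\mathrm{Tr}N=\tfrac12\|\rho-\sigma\|_1$. Then for any $0\le\Pi\le I$,
\[
\mathrm{Tr}[\Pi(\rho-\sigma)]\le \mathrm{Tr}[\Pi P]\le \mathrm{Tr}P .
\]
Alternatively, this follows from Lemma~\ref{lem:contractivity-prelim} applied to the measurement channel $\tau\mapsto \mathrm{Tr}[\Pi_E\tau]\,|1\rangle\!\langle1|+\mathrm{Tr}[(I-\Pi_E)\tau]\,|0\rangle\!\langle0|$. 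That channel maps $\rho,\sigma$ to classical bits whose trace distance is $|\Pr_\rho[E]-\Pr_\sigma[E]|$.

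\textbf{Conclusion.} With $\Pr_\sigma[E]=\mathrm{Tr}[\Pi_E\sigma]=0$,
\[
\Pr_\rho[E]=\mathrm{Tr}[\Pi_E\rho]-\mathrm{Tr}[\Pi_E\sigma]\le\Delta(\rho_{AB},\sigma_{AB}).
\]

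\textbf{Main obstacle.} The mathematics is routine; the real difficulty is definitional. When this lemma is used later, $E$ (for example, ``$z_3\neq x_1,\ z_3\neq\bot$'') is a classical predicate on registers obtained after the adversary's quantum operations. I must ensure that $E$ is evaluated by the \emph{same} CPTP map under both $\rho$ and $\sigma$, so that a single $\Pi_E$ applies to both states. The bipartite structure of the states plays no role beyond fixing the common Hilbert space.
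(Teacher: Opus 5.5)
Your proof is correct. The paper gives no proof of this lemma; it simply cites it from \cite{Winter2016}. So there is no internal argument to compare against, and what you supply is the standard self-contained derivation.

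Both of your routes work. The first is the Jordan-decomposition bound $\mathrm{Tr}[\Pi(\rho-\sigma)]\le\mathrm{Tr}[\Pi P]\le\mathrm{Tr}P=\Delta(\rho,\sigma)$. You only need and only prove this inequality, not the equality in the variational formula you state, and that is fine. The second applies Lemma~\ref{lem:contractivity-prelim} to the two-outcome measurement channel. Its outputs are diagonal states whose trace distance is exactly $|\mathrm{Tr}[\Pi_E\rho]-\mathrm{Tr}[\Pi_E\sigma]|$. The second route has the extra merit of showing that the lemma is an immediate corollary of a tool the paper already states, so the external citation could be replaced by two lines.

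You are right that the bipartite structure is irrelevant; per the paper's definition the states are even product states. You are also right that the real hypothesis is that $E$ is given by one fixed effect $\Pi_E$ applied to both states.

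One point connects to your caveat. The lemma, and your proof, only bounds an event probability by a trace distance. It does not give the converse, namely a trace-distance bound from a small bad-event probability. That converse is the direction in which the lemma is invoked in Cases~2 and~3 of the proof of Theorem~\ref{thm:quantum-lemma-18-15}.
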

These tools suffice to lift the classical characterization of digital-signature security to the quantum setting. Every step in the quantum generalization of Theorem~\ref{main_thm} consists only of applying CPTP maps (e.g., tracing out registers, simulating views), conditioning on classical events, and using the triangle inequality for trace distance. Since CPTP maps are contractive, all classical arguments that bound variational distance extend directly by replacing it with trace distance and interpreting views as quantum states. No rewinding, forking, or oracle programming is used, so none of the classical ROM techniques that fail in the QROM arise here.
\subsection{Quantum Security of $f_{\mathsf{DS}}$}
We consider a three-party digital signature functionality
$f_{\mathsf{DS}} = (f_{\mathsf{DS},1}, f_{\mathsf{DS},2})$, $\mathcal{P}_1$, $\mathcal{P}_2$  and $\mathcal{P}_3$ as before.
A protocol $\Pi$ implementing $f_{\mathsf{DS}}$ is an interactive protocol
among $\mathcal{P}_1, \mathcal{P}_2, \mathcal{P}_3$ (and possibly an environment and a quantum adversary)
with classical inputs and outputs but potentially quantum internal computation
and access to a quantum random oracle.
In our setting, each party $\mathcal{P}_i$ has a classical output $Z_i$ and a
quantum view $\mathsf{View}_i$ consisting of all quantum registers and
classical transcripts observed during the protocol. The joint state of all
registers in the real execution is denoted by
$\rho_{x_1 Z_2 Z_3 \mathsf{View}_1 \mathsf{View}_2 \mathsf{View}_3}$, with
lowercase symbols (e.g., $x_1$) representing fixed inputs and uppercase symbols
(e.g., $Z_2,Z_3$) representing random variables produced during the execution.
All CPTP maps are
contractive with respect to trace distance, and the triangle inequality holds.
We adopt a standard simulation-based definition in the quantum setting
(see, e.g.,~\cite{cao2023quantummapscptphptp}).
\begin{definition}[$\varepsilon$-Security in Quantum]
A protocol $\Pi$ is
$\varepsilon$-securely implementing $f_{\mathsf{DS}}$ against one active
quantum adversary if, for each corrupted party, there exists a quantum
polynomial-time simulator such that the real and ideal executions are
$\varepsilon$-close in trace distance (as joint states of the environment and
honest parties).
\end{definition}
We now restate the four properties in the quantum setting. All probabilities
are taken over the randomness of the protocol, the adversary, and any oracles.
\begin{description}
  \item[Correctness:] With honest $\mathcal{P}_1$ and $\mathcal{P}_2$,
  \(
    \Pr[Z_2\neq x_1]\le\varepsilon
  \), for all $x_1\in\mathcal{X}$.
  \item[Unforgeability:] With honest $\mathcal{P}_1$ and $\mathcal{P}_3$,
  \(
    \Pr[Z_3\notin\{x_1,\bot\}]\le\varepsilon,
  \)   for all $x_1\in\mathcal{X}$.

  \item[Transferability:] If $\mathcal{P}_2$ and $\mathcal{P}_3$ are honest,
  then
  \(
    \Pr[Z_2\neq Z_3\;\wedge\;Z_2\neq\bot]\le\varepsilon.
  \)
  \item[Secrecy:] If $\mathcal{P}_1$ and $\mathcal{P}_2$ are honest, then the
  view of $\mathcal{P}_3$ in the signing phase is almost independent of $x_1$ and
  there exists a state $\sigma_{\mathsf{View}_3}$ such that for all $x_1$,
  \(
    \Delta\bigl(\rho_{\mathsf{View}_3(x_1)},\sigma_{\mathsf{View}_3}\bigr)
    \le\varepsilon.
  \)
\end{description}
\begin{theorem}[Quantum Security of Digital Signatures]\label{thm:quantum-lemma-18-15}
Let $\Pi$ be a three-party protocol for $f_{\mathsf{DS}}$ against quantum adversaries.
If $\Pi$ satisfies correctness, unforgeability, transferability, and secrecy with error at most $\varepsilon$, then it is an $O(\varepsilon)$-secure implementation of $f_{\mathsf{DS}}$; conversely, any $\varepsilon$-secure implementation of $f_{\mathsf{DS}}$ satisfies these four properties with error $O(\varepsilon)$.
The hidden constants arise from a constant number of triangle-inequality applications.
\end{theorem}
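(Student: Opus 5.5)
We prove the two directions separately, following the case structure of the classical proof of Theorem~\ref{main_thm} (\cite[Lemma~18.15]{IT-book}). Since at most one party is corrupted and all inputs and outputs are classical, we treat each possible corruption pattern separately. In each case we either build a simulator for the dishonest party or bound the trace distance of the honest parties' outputs together with the environment's register. Throughout, the real joint state is a cq-state that is classical on $(x_1,Z_2,Z_3)$ and quantum on the adversary's view. The tools are Lemma~\ref{lem:trace-decomposition} (conditioning on classical values), Lemma~\ref{lem:contractivity-prelim} (post-processing, partial traces and simulator maps are CPTP), Lemma~\ref{lem:quantum-coupling-prelim} (turning bad-event probabilities into trace distance), and the triangle inequality.

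\emph{Sufficiency.} We split into three cases. \textbf{(i) $\mathcal{P}_3$ is corrupted.} The simulator prepares $\sigma_{\mathsf{View}_3}$, which is independent of $x_1$, for the signing phase. In the transfer phase it continues the adversary's CPTP evolution on this state, feeding it $z_2=x_1$ as obtained from $f_{\mathsf{DS}}$. Secrecy bounds the distance of the signing-phase state by $\varepsilon$. Correctness says that the honest $Z_2$ differs from $x_1$ only with probability $\le\varepsilon$, which by the coupling lemma costs at most $\varepsilon$ more. Contractivity then carries the bound through the rest of the execution, for a total of $2\varepsilon$.

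\textbf{(ii) $\mathcal{P}_2$ is corrupted.} The simulator runs the honest $\mathcal{P}_1$ and $\mathcal{P}_3$ internally on the input $x_1$ it receives from $f_{\mathsf{DS1}}$. This reproduces the adversary's view exactly. It then submits to $f_{\mathsf{DS2}}$ the value $z_3$ output by the internal $\mathcal{P}_3$. The ideal output differs from the real one only on the event $z_3\notin\{x_1,\bot\}$, which has probability at most $\varepsilon$ by unforgeability. We use Lemma~\ref{lem:trace-decomposition} to condition on the classical outcome and the coupling lemma to convert this probability into a trace-distance bound.

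\textbf{(iii) $\mathcal{P}_1$ is corrupted.} The simulator runs honest $\mathcal{P}_2$ and $\mathcal{P}_3$ internally and extracts the value $z_2$ that $\mathcal{P}_2$ would output. It inputs this $z_2$ as $x_1$ to $f_{\mathsf{DS1}}$. The real and ideal transfer outputs then differ only on the event $Z_2\neq Z_3,\ Z_2\neq\bot$, which transferability bounds by $\varepsilon$. In every case the result is an $O(\varepsilon)$ bound, with the constant coming from at most two triangle inequalities.

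\emph{Necessity.} In the ideal world each of the four bad events has probability $0$, and the ideal view of $\mathcal{P}_3$ in the signing phase is independent of $x_1$. Each event is detected by a classical measurement of the outputs, which is a CPTP map. Given an $\varepsilon$-close simulator, Lemma~\ref{lem:contractivity-prelim} together with the coupling lemma gives probability $\le\varepsilon$ for each bad event. For secrecy, we take $\sigma_{\mathsf{View}_3}$ to be the simulator's signing-phase output. That output is independent of $x_1$ because the ideal functionality reveals nothing to $\mathcal{P}_3$ in that phase, and tracing out the remaining registers gives the claimed bound.

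\emph{Main obstacle.} The delicate step is case~(iii), and also the transfer-phase part of case~(i). The simulator must extract a classical input from a quantum adversary without rewinding. The plan is to let the extraction happen by running the honest parties' classical decision procedures coherently alongside the adversary. Because $z_2$ is a classical output, measuring it does not disturb the distribution over later transcripts, so the cq-state structure is preserved. The remaining work is to check that this measurement commutes with the rest of the adversary's evolution, which holds because $z_2$ is computed from registers the adversary has already released.
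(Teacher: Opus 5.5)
Your necessity direction is the same as the paper's. For sufficiency you take a genuinely different route. The paper first bounds the distance between the real and ideal \emph{output} registers, using $\Pr[Z_2\neq x_1]$, $\Pr[Z_3\notin\{x_1,\bot\}]$ and $\Pr[Z_2\neq Z_3\wedge Z_2\neq\bot]$. It then asserts that in both worlds the corrupted party's view is the image of those registers under one common CPTP map $\Phi$, and finishes by contractivity. You instead build explicit straight-line simulators. When $\mathcal{P}_2$ is corrupt, yours forwards $x_1$ from $f_{\mathsf{DS1}}$ and translates the internal $z_3$ into an $f_{\mathsf{DS2}}$ input. When $\mathcal{P}_1$ is corrupt, it reads off the internal $z_2$ and submits it to $f_{\mathsf{DS1}}$. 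Each simulator is coupled with the real run, so the two joint states differ only by relabelling a classical register on the bad event.

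The paper's version is shorter and purely metric. However, for $\Phi$ to reproduce the real view it must be a measure-and-prepare map that outputs the real conditional view given the classical outputs. That is an unbounded simulator that fixes the honest output first. Your simulators reproduce the adversary's view exactly in cases (ii) and (iii), respect the order of interaction with $f_{\mathsf{DS}}$, and are efficient whenever the honest algorithms are, which is what the paper's QPT-simulator definition asks for. In both arguments, passing from a bad-event probability to a trace-distance bound uses the elementary fact that two states agreeing off an event $E$ are within $\Pr[E]$. It does not use the coupling lemma as stated, which runs in the opposite direction.

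The step that does not go through as written is the transfer phase of your case (i). There a corrupted $\mathcal{P}_3$ receives not $z_2$ but $\mathcal{P}_2$'s transfer message. That message depends on $\mathcal{P}_2$'s private signing-phase state and is correlated with $\mathcal{P}_3$'s real view. In \textsc{SILMARILS} it is $(x,\sigma)$ with $\sigma=k_1x+k_2$, and any $\sigma$ inconsistent with the keys in the simulated view is detected at once. So the real transfer message is not a fixed CPTP image of $(\mathsf{View}_3,x_1)$, and ``contractivity carries the bound'' has nothing to act on. Your main-obstacle remedy (measuring the classical $z_2$) addresses extraction, which is in fact unproblematic in case (iii). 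What case (i) needs is equivocation.

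The repair is as follows. Since all messages are classical, the honest parties' private state $h$ is conditionally independent of $\mathcal{P}_3$'s quantum register given the classical transcript $\tau$ (in the oracle models, condition on the oracle table as well). Moreover, $P_{h\mid\tau,x_1}\propto P_h\,\mathbf{1}[h\text{ consistent with }\tau,x_1]$ does not depend on the adversary. After learning $x_1$, the simulator resamples $h$ from this law and runs honest $\mathcal{P}_2$ from it; for \textsc{SILMARILS} this is simply $\sigma=k_1x_1+k_2$ computed from the simulated keys. This is a single channel applied both to the real signing-phase state and to $\sigma_{\mathsf{View}_3}$ (pinched on $\tau$, which costs nothing). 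Replacing $Z_2$ by $x_1$ first and then invoking secrecy and contractivity yields your $2\varepsilon$. The paper's Case~1 sidesteps this only because it compares $(x_1,Z_2,\mathsf{View}_3)$ for the signing phase and never simulates the transfer phase.
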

\begin{proof}
$(\Leftarrow)$ If the four properties hold, then $\Pi$ is secure.

\emph{Case 1: $\mathcal{P}_1$ and $\mathcal{P}_2$ honest.}
When $\mathcal{P}_1$ and  $\mathcal{P}_2$ behave honestly, the
only potentially corrupted party is $\mathcal{P}_3$.  The simulator for
$\mathcal{P}_3$ must therefore reproduce the joint distribution of
$(X_1, Z_2, \mathsf{View}_3)$ as seen in the real protocol.
By secrecy, the adversary controlling $\mathcal{P}_3$ learns essentially nothing
about the signer's input $x_1$.  Formally, there exists a \emph{fixed} state
$\sigma_{\mathsf{View}_3}$ (independent of~$x_1$) such that
\(
  \Delta\!\left(\rho_{\mathsf{View}_3(x_1)},\,\sigma_{\mathsf{View}_3}\right)
  \le \varepsilon .
\)
Thus, in the ideal world the simulator may simply output $\sigma_{\mathsf{View}_3}$
as the view of~$\mathcal{P}_3$.
Correctness ensures that the honest verifier $\mathcal{P}_2$ outputs the signer's
input except with small probability:
\(
  \Pr[\,Z_2 \neq x_1\,] \le \varepsilon .
\)
Hence, the real joint state of $(x_1, Z_2)$ is $\varepsilon$-close to the ideal
joint state in which $Z_2$ is deterministically equal to $x_1$.
We now compare the full real state
$\rho_{x_1 Z_2 \mathsf{View}_3}$ with the ideal state
$\sigma_{x_1 Z_2 \mathsf{View}_3}$, where the ideal state is defined by:
\(
  Z_2 = x_1\), and   \(\mathsf{View}_3 \sim \sigma_{\mathsf{View}_3}.
\)
Applying Lemma~\ref{lem:trace-decomposition} to decompose the trace distance over
the classical registers $(x_1, Z_2)$, and then using the triangle inequality, we
obtain
\[
  \Delta\!\left(
    \rho_{x_1 Z_2 \mathsf{View}_3},
    \sigma_{x_1 Z_2 \mathsf{View}_3}
  \right)
  \le
  \underbrace{\Pr[Z_2 \neq x_1]}_{\le \varepsilon}
  \;+\;
  \underbrace{
    \Delta\!\left(
      \rho_{\mathsf{View}_3(x_1)},
      \sigma_{\mathsf{View}_3}
    \right)
  }_{\le \varepsilon}
  \;\le\; 2\varepsilon .
\]
Thus, the simulator for $\mathcal{P}_3$ can reproduce the ideal joint state up to
trace distance at most $2\varepsilon$, and therefore achieves simulation error
$O(\varepsilon)$ in this case.

\emph{Case 2: $\mathcal{P}_1$ and $\mathcal{P}_3$ honest.}
Here $\mathcal{P}_1$ and  $\mathcal{P}_3$ behave
honestly, while $\mathcal{P}_2$ may be corrupted.  In the ideal world, the
functionality guarantees that the output of $\mathcal{P}_3$ is always either the
correct message $x_1$ or the distinguished symbol $\bot$ indicating rejection.
Thus, the ideal joint distribution of $(x_1,Z_3)$ is supported on the set
$\{(x_1,x_1),(x_1,\bot)\}$.
In the real protocol, an honest $\mathcal{P}_3$ should never accept an incorrect
message.  Unforgeability therefore ensures that
\(
  \Pr[\,Z_3 \notin \{x_1,\bot\}\,] \le \varepsilon .
\)
Equivalently, the real distribution of $(x_1,Z_3)$ is $\varepsilon$-close to the
ideal distribution in which $Z_3$ is always either $x_1$ or $\bot$.
Lemma~\ref{lem:quantum-coupling-prelim} states that whenever two classical--quantum
states differ only in a classical register with error probability at most
$\varepsilon$, there exists a joint coupling whose trace distance is at most
$\varepsilon$.  Applying this lemma to the pair of states
$\rho_{x_1 Z_3}$ (real) and $\sigma_{x_1 Z_3}$ (ideal), we obtain
\(
  \Delta\!\left(\rho_{x_1 Z_3},\,\sigma_{x_1 Z_3}\right) \le \varepsilon .
\)
The corrupted party $\mathcal{P}_2$ obtains its view by applying some
CPTP map to the registers $(x_1,Z_3)$.
Formally, there exists a CPTP map $\Phi$ such that
\[
  \rho_{x_1 Z_3 \mathsf{View}_2}
  = \Phi(\rho_{x_1 Z_3}),
  \qquad\text{and}\qquad
  \sigma_{x_1 Z_3 \mathsf{View}_2}
  = \Phi(\sigma_{x_1 Z_3}) .
\]
By Lemma~\ref{lem:contractivity-prelim}, CPTP maps cannot increase trace
distance.  Therefore,
\[
  \Delta\!\left(
    \rho_{x_1 Z_3 \mathsf{View}_2},
    \sigma_{x_1 Z_3 \mathsf{View}_2}
  \right)
  \le
  \Delta\!\left(
    \rho_{x_1 Z_3},
    \sigma_{x_1 Z_3}
  \right)
  \le \varepsilon .
\]
Thus, the simulator for the corrupted party $\mathcal{P}_2$ can reproduce the
ideal joint state of $(x_1,Z_3,\mathsf{View}_2)$ up to trace distance at most
$\varepsilon$, achieving simulation error $O(\varepsilon)$ in this case.

\emph{Case 3: $\mathcal{P}_2$ and $\mathcal{P}_3$ honest.}
Here  $\mathcal{P}_2$ and  $\mathcal{P}_3$ behave
honestly, while  $\mathcal{P}_1$ may be corrupted.  In the ideal
functionality, whenever the verifier outputs a valid message $Z_2 \neq \bot$, the
transfer recipient must output the same message.  Thus, in the ideal world the
joint distribution of $(Z_2,Z_3)$ satisfies
\(
  Z_2 \neq \bot \;\Longrightarrow\; Z_3 = Z_2 .
\)
In the real protocol, transferability guarantees that an honest $\mathcal{P}_3$
cannot be convinced to output a message different from the one accepted by the
honest $\mathcal{P}_2$.  Formally,
\(
  \Pr[\,Z_2 \neq Z_3 \;\wedge\; Z_2 \neq \bot\,] \le \varepsilon .
\)
Equivalently, the real joint distribution of $(Z_2,Z_3)$ is $\varepsilon$-close
to the ideal distribution in which $Z_3$ always equals $Z_2$ whenever
$Z_2 \neq \bot$.
Lemma~\ref{lem:quantum-coupling-prelim} states that if two classical--quantum
states differ only in a classical register with error probability at most
$\varepsilon$, then there exists a coupling whose trace distance is at most
$\varepsilon$.  Applying this lemma to the real and ideal states of the pair
$(Z_2,Z_3)$ yields
\(
  \Delta\!\left(
    \rho_{Z_2 Z_3},
    \sigma_{Z_2 Z_3}
  \right)
  \le \varepsilon .
\)
The corrupted party $\mathcal{P}_1$ obtains its view by applying some
CPTP map to the registers $(x_1,Z_2,Z_3)$.
Thus, there exists a CPTP map $\Phi$ such that
\[
  \rho_{x_1 Z_2 Z_3 \mathsf{View}_1}
  = \Phi(\rho_{x_1 Z_2 Z_3})
  \qquad\text{and}\qquad
  \sigma_{x_1 Z_2 Z_3 \mathsf{View}_1}
  = \Phi(\sigma_{x_1 Z_2 Z_3}) .
\]
By Lemma~\ref{lem:contractivity-prelim}, CPTP maps cannot increase trace
distance.  Therefore,
\[
  \Delta\!\left(
    \rho_{x_1 Z_2 Z_3 \mathsf{View}_1},
    \sigma_{x_1 Z_2 Z_3 \mathsf{View}_1}
  \right)
  \le
  \Delta\!\left(
    \rho_{Z_2 Z_3},
    \sigma_{Z_2 Z_3}
  \right)
  \le \varepsilon .
\]
Hence, the simulator for the corrupted signer $\mathcal{P}_1$ can reproduce the
ideal joint state up to trace distance at most $\varepsilon$, achieving
simulation error $O(\varepsilon)$ in this case.

$(\Rightarrow)$ If $\Pi$ is secure, then the four properties hold.

\emph{Correctness:}
Assume that $\mathcal{P}_1$ and $\mathcal{P}_2$ behave honestly.  In the ideal
world, the functionality $f_{\mathsf{DS}}$ always delivers the signer's input to
the honest verifier; that is, the ideal output satisfies $Z_2 = x_1$ with
probability~$1$.  Security of $\Pi$ means that for every environment, the real
execution of the protocol is $\varepsilon$-indistinguishable from the ideal
execution.  In particular, the joint real and ideal distributions of
$(x_1,Z_2)$ are within trace distance at most $\varepsilon$.
Since the ideal distribution assigns zero probability to the event
$\{Z_2 \neq x_1\}$, indistinguishability implies that the real execution can
assign at most $\varepsilon$ probability to this event.  Formally,
\(
  \Pr_{\text{real}}[\,Z_2 \neq x_1\,]
  \;\le\;
  \Delta\!\left(
    \rho_{x_1 Z_2}^{\text{real}},
    \rho_{x_1 Z_2}^{\text{ideal}}
  \right)
  \;\le\; \varepsilon .
\)
Thus, an honest verifier outputs the correct message except with probability at
most~$\varepsilon$, establishing correctness.

\emph{Unforgeability:}
Assume that $\mathcal{P}_1$ and $\mathcal{P}_3$ behave honestly.  In the ideal world, the functionality
$f_{\mathsf{DS}}$ guarantees that an honest $\mathcal{P}_3$ can never be made to
accept an incorrect message: its output is always either the true message $x_1$
or the rejection symbol $\bot$.  Thus, the ideal distribution of $Z_3$ is
supported entirely on the set $\{x_1,\bot\}$.
Security of $\Pi$ means that the real execution must be
$\varepsilon$-indistinguishable from the ideal execution.  In particular, the
real and ideal distributions of the classical register $Z_3$ must be within
trace distance at most $\varepsilon$.  Since the ideal distribution assigns zero
probability to the event $\{Z_3 \notin \{x_1,\bot\}\}$, any deviation from this
event in the real world must be bounded by the distinguishing advantage.
Formally, by Lemma~\ref{lem:quantum-coupling-prelim}, whenever two
classical--quantum states differ only on a classical register with error
probability at most $\varepsilon$, there exists a coupling whose trace distance
is at most $\varepsilon$.  Applying this lemma to the real and ideal states of
$Z_3$ yields
\(
  \Pr_{\text{real}}\!\left[\,Z_3 \notin \{x_1,\bot\}\,\right]
  \;\le\;
  \Delta\!\left(
    \rho_{Z_3}^{\text{real}},
    \rho_{Z_3}^{\text{ideal}}
  \right)
  \;\le\; \varepsilon .
\)
Thus, an honest $\mathcal{P}_3$ outputs an invalid message with probability at
most $\varepsilon$, establishing unforgeability.

\emph{Transferability:}
Assume that $\mathcal{P}_2$  and $\mathcal{P}_3$  behave honestly.  In the ideal world, the functionality
$f_{\mathsf{DS}}$ enforces perfect transferability and whenever the verifier
accepts a message, i.e., whenever $Z_2 \neq \bot$, the transfer recipient must
output the same message.  Formally, the ideal distribution satisfies
\(
  Z_2 \neq \bot \;\Longrightarrow\; Z_3 = Z_2 .
\)
Security of $\Pi$ means that the real execution must be
$\varepsilon$-indistinguishable from the ideal execution.  Since the ideal
distribution assigns zero probability to the event
\(
  \{\,Z_2 \neq Z_3 \;\wedge\; Z_2 \neq \bot\,\}
\), the real execution can assign at most $\varepsilon$ probability to this event.
Indeed, if the real protocol caused an honest $\mathcal{P}_3$ to output a value
different from the one accepted by the honest $\mathcal{P}_2$ with probability
greater than~$\varepsilon$, then an environment could distinguish the real and
ideal executions with advantage exceeding~$\varepsilon$, contradicting security.
Thus, indistinguishability immediately yields
\(
  \Pr_{\text{real}}\!\left[\,Z_2 \neq Z_3 \;\wedge\; Z_2 \neq \bot\,\right]
  \;\le\; \varepsilon .
\)
This establishes transferability.

\emph{Secrecy.}
Assume that $\mathcal{P}_1$  and $\mathcal{P}_2$
behave honestly.  In this setting, the only potentially corrupted party is
$\mathcal{P}_3$.  In the ideal world, the functionality $f_{\mathsf{DS}}$
reveals no information about the signer's input $x_1$ to~$\mathcal{P}_3$.
Consequently, the ideal view of $\mathcal{P}_3$ is a \emph{fixed} state,
independent of~$x_1$.  Let us denote this state by
$\sigma_{\mathsf{View}_3}$.
Security of $\Pi$ means that for every environment, the real execution is
$\varepsilon$-indistinguishable from the ideal execution.  In particular, the
real and ideal distributions of the adversary's view must be within trace
distance at most~$\varepsilon$.  Since the ideal view does not depend on~$x_1$,
this indistinguishability condition implies that the real view of
$\mathcal{P}_3$—which may in principle depend on~$x_1$—cannot vary with~$x_1$
by more than~$\varepsilon$ in trace distance.
Formally, letting $\rho_{\mathsf{View}_3(x_1)}$ denote the real view of
$\mathcal{P}_3$ when the signer's input is~$x_1$, we obtain
\(
  \Delta\!\left(
    \rho_{\mathsf{View}_3(x_1)},
    \sigma_{\mathsf{View}_3}
  \right)
  \le \varepsilon .
\)
Thus, an honest $\mathcal{P}_3$ learns at most $\varepsilon$ information about
the signer's input, establishing secrecy.
This completes the proof. \hfill$\square$
\end{proof}
\subsection{Implications for QROM Analysis}
Theorem~\ref{thm:quantum-lemma-18-15} shows that Theorem~\ref{main_thm} is robust to the
quantum setting: once we establish correctness, unforgeability, transferability,
and secrecy against quantum adversaries in the QROM (with respect to classical
inputs/outputs and possibly quantum internal computation), we may soundly
conclude simulation-based security of our implementation of $f_{\mathsf{DS}}$.
The known issues that invalidate some classical ROM reductions in the QROM
(e.g., forking-lemma based proofs) do not apply here, since the lemma and its
proof are purely information-theoretic and metric in nature.

When we prove that a scheme is unforgeable against quantum adversaries in the
QROM, $H : \{0,1\}^\ast \to \mathbb{F}_p$
is modeled as a uniformly random function, and the adversary may query $H$ in
superposition \(
\ket{x,y} \mapsto \ket{x, y \oplus H(x)}
\).
\begin{definition}[Forgery event in quantum]
A \emph{forgery} occurs if a (quantum) adversary $\mathcal{A}$ outputs a pair
$(M^\ast,\sigma_{\mathsf{alg}}^\ast)$ such that  $M^\ast$ was never submitted to the signing oracle, and the verification algorithm accepts $(M^\ast,\sigma_{\mathsf{alg}}^\ast)$ as a valid signature.
We denote this event by $\mathsf{Forge}$.
\end{definition}
We can prove that a quantum adversary making $q_H$ queries to a random oracle
$H : \{0,1\}^\ast \to \mathbb{F}_p$ can cause a collision in the oracle outputs
with probability at most $O(q_H^2/p)$. The key technical tool is the
\emph{recording} (\emph{compressed-oracle}) lemma, which says that the
interaction of a quantum algorithm with a random oracle can be represented as a
distribution over classical query--answer transcripts \cite{Zhandry2019}.
\begin{definition}[Collision Event]
Let $H : \{0,1\}^\ast \to \mathbb{F}_p$ be a random function, and let
$\mathcal{A}$ be a (possibly quantum) oracle algorithm. We say that a
\emph{collision} occurs if there exist distinct inputs $x \neq x'$ such that
$H(x) = H(x')$ and both $x$ and $x'$ appear in the (compressed) oracle
transcript associated with $\mathcal{A}$'s execution. We denote this event by
$\mathsf{Coll}$.
\end{definition}
\begin{lemma}[Recording/Compressed-Oracle Lemma \cite{Zhandry2019}]\label{lem:recording}
Let the hash function $H : \{0,1\}^\ast \to \mathbb{F}_p$ be a uniformly random function, and let
$\mathcal{A}$ be a quantum oracle algorithm making at most $q_H$ queries to
$H$. Then, there exists a classical random variable $T$ (the \emph{transcript})
such that $T$ is a finite sequence of pairs
        \(
          T = \bigl((x_1, H(x_1)), \dots, (x_t, H(x_t))\bigr)
        \) with $t \le q_H$, where the $x_i$ are (classical) inputs and
        $H(x_i)$ are the corresponding oracle outputs. The distribution of $T$ is identical to the distribution of the
        query--answer transcript in a classical lazy-sampling experiment
        where a classical algorithm makes at most $q_H$ (adaptive) queries
        to a random oracle $H : \{0,1\}^\ast \to \mathbb{F}_p$.
In particular, any event that depends only on the multiset
$\{(x_i, H(x_i))\}_i$ (such as the existence of a collision among the
values $H(x_i)$) has the same probability in the quantum execution as in
this classical transcript model.
\end{lemma}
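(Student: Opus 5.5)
The plan is to build $T$ from Zhandry's purified, compressed representation of the random oracle. I will then show that its law coincides with that of a classical, adaptive, computationally unbounded lazy-sampling transcript of length at most $q_H$. Efficiency is not claimed in the statement, so the matching classical algorithm may be inefficient.

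\textbf{Step 1 (purification).} Replace the random function $H$ by an oracle register $\mathsf{O}$ initialised in the uniform superposition $\sum_H |H\rangle$. A query acts as $|x,y\rangle|H\rangle\mapsto|x,y\oplus H(x)\rangle|H\rangle$. The adversary's reduced state is identical to the one produced by a classically sampled random $H$. Because $\mathsf{O}$ acts only as a control in this unitary, its computational-basis distribution stays uniform throughout. Measuring $\mathsf{O}$ at the end of the execution therefore yields a uniformly random function, jointly distributed with the adversary's final state exactly as in the classical-$H$ experiment.

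\textbf{Step 2 (compression).} Apply the Fourier/compression isometry of~\cite{Zhandry2019} to $\mathsf{O}$. This gives a database register $\mathsf{D}$ in which each query adds or edits at most one entry, so that after $q_H$ queries $\mathsf{D}$ is supported on databases with at most $q_H$ pairs. Measure $\mathsf{D}$ in the computational basis, and define $T$ as the resulting list of pairs $(x_i,y_i)$, ordered by the query index at which each entry was last written. Since every $x\notin\mathsf{dom}(D)$ corresponds to the untouched uniform Fourier mode, decompressing shows that $y_i$ agrees with the value $H(x_i)$ obtained in Step~1. Consequently $T=((x_i,H(x_i)))_{i\le t}$ with $t\le q_H$.

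\textbf{Step 3 (matching a classical lazy sampler).} The property that characterises a classical lazy-sampling transcript is the following: for every $i$, conditioned on $(x_1,y_1),\dots,(x_{i-1},y_{i-1})$ and on $x_i$, the value $y_i$ is uniform on $\mathbb{F}_p$. Given this property, I define the classical algorithm $\mathcal{B}$ as follows. At step $i$, $\mathcal{B}$ samples $x_i$ from the conditional law of the $i$-th recorded input given the previous pairs (or stops), then queries $H(x_i)$. By induction on $i$, the transcript of $\mathcal{B}$ has exactly the law of $T$. The final sentence of the lemma then follows immediately: any event that is a function of the multiset $\{(x_i,H(x_i))\}$, such as $\mathsf{Coll}$, has the same probability under both laws.

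\textbf{Main obstacle.} The crux is the uniformity claim in Step~3. Which points end up recorded can, a priori, be correlated with their values, because the adversary's later queries depend on earlier answers. My first attempt is to exploit the structure of the compressed oracle: an entry is created from the uniform $|\hat 0\rangle$ mode, and the only operation that can later delete it is a projection onto the uniform Fourier state. Together these suggest that, conditioned on the domain history, the value stored at a freshly created entry is uniformly distributed. If this exact conditional uniformity resists proof, I would retreat to the weaker conclusion. That conclusion asserts identical distributions only for the value part conditioned on the domain, and bounds any event such as $\mathsf{Coll}$ directly from that conditional uniformity, which is all the paper subsequently uses.
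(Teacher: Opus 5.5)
Your Step~1 is fine. Step~3 is where the argument breaks, and it cannot be repaired. The conditional-uniformity property you need is false for the compressed database, and so is your fallback (values i.i.d.\ uniform given the domain).

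Collision finding is a witness. Under either property, a recorded database with at most $q_H$ entries contains a collision with probability at most $q_H(q_H-1)/(2p)$. But a BHT-style algorithm finds a collision in a random $H$ with constant probability using $q_H=\Theta(p^{1/3})$ queries. Zhandry's lemma linking outputs to the database says: if $\mathcal{A}$ outputs $(x,x',y,y')$ with $H(x)=y$, $H(x')=y'$ with probability $P$, and $P'$ is the probability that the finally measured database satisfies $D(x)=y$, $D(x')=y'$, then $\sqrt{P}\le\sqrt{P'}+\sqrt{2/p}$. This forces the measured database to contain a collision with probability $\Omega(1)$, far above $q_H^2/(2p)=O(p^{-1/3})$. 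So the law of the measured database is in general not that of any classical lazy-sampling transcript with at most $q_H$ queries, and the ``in particular'' clause fails for $\mathsf{Coll}$.

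What Zhandry's compressed oracle actually gives is weaker:
\begin{itemize}
\item a perfect simulation of $H$ from $\mathcal{A}$'s viewpoint;
\item a database with at most $q_H$ entries;
\item bounds on database events via a per-query progress argument. Each query increases the norm of the component on ``$D$ contains a collision'' by $O(\sqrt{i/p})$, yielding $\Pr[\mathsf{Coll}]=O(q_H^3/p)$, which is tight.
\end{itemize}
The paper offers no proof beyond the citation. The statement as the paper uses it (with $T$ the compressed transcript of $\mathcal{A}$'s run) is not what the cited work proves, and the subsequent birthday bound $q_H(q_H-1)/(2p)$ should read $O(q_H^3/p)$. Read literally, with $T$ not tied to $\mathcal{A}$, the existence claim is vacuous---take $t=0$.

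Step~2 has independent problems. First, the measurements do not match. The compression isometry exchanges, for each input, the uniform state $|\hat 0\rangle$ with $|\bot\rangle$. So the computational-basis measurement of $\mathsf{D}$ does not commute with that of $\mathsf{O}$. There is no joint outcome in which the recorded $y_i$ coincides with the $H(x_i)$ of Step~1; one only gets approximate relations like the inequality above. Already for one query on $|x,y\rangle$, the branch of the compressed state with $D=\{(x,z)\}$ leaves the (unnormalized) response register in $p^{-1/2}|y+z\rangle-p^{-1}|\hat 0\rangle$. The uncompressed branch with $H(x)=z$ instead leaves it in $p^{-1/2}|y+z\rangle$.

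Second, ordering the entries ``by the query index at which each was last written'' is not a function of the measured database. $\mathsf{D}$ carries a superposition over write histories, and entries can also be deleted, not only added or edited. Recording those histories would require intermediate measurements that disturb $\mathcal{A}$.

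A correct version of your plan keeps Steps~1--2 as a perfect simulation with $|D|\le q_H$. It then replaces Step~3 by the progress-measure bound, which gives $O(q_H^3/p)$ rather than equality with a classical transcript.
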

Using Lemma~\ref{lem:recording}, the quantum birthday bound reduces to a purely
classical probability calculation.
\begin{theorem}[Quantum Birthday Bound \cite{Zhandry2012}]\label{thm:q-birthday}
Let $H : \{0,1\}^\ast \to \mathbb{F}_p$ be a uniformly random function, and let
$\mathcal{A}$ be a quantum oracle algorithm making at most $q_H$ queries to $H$.
Let $\mathsf{Coll}$ be the event that there exist distinct inputs
$x \neq x'$ in the (compressed) transcript such that $H(x) = H(x')$. Then
\(
  \Pr[\mathsf{Coll}]
  \;\le\; \frac{q_H(q_H - 1)}{2p}
  \;\le\; \frac{q_H^2}{2p}.
\)
In particular, $\Pr[\mathsf{Coll}] = O(q_H^2/p)$.
\end{theorem}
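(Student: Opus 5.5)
The plan is to reduce the statement to a classical birthday computation via Lemma~\ref{lem:recording}, and then bound that classical probability with a union bound over pairs of transcript entries.

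\textbf{Reduction to a classical transcript.} Apply Lemma~\ref{lem:recording} to $\mathcal{A}$. This gives a classical transcript $T=((x_1,H(x_1)),\dots,(x_t,H(x_t)))$ with $t\le q_H$, distributed exactly as the query--answer transcript of a classical lazy-sampling experiment with at most $q_H$ adaptive queries. The event $\mathsf{Coll}$ asks for distinct $x_i\neq x_j$ in $T$ with $H(x_i)=H(x_j)$. This depends only on the multiset $\{(x_i,H(x_i))\}_i$, so by the last clause of Lemma~\ref{lem:recording} its probability in the quantum execution equals its probability in the classical model. From here on we work entirely classically.

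\textbf{Classical bound.} We may drop repeated inputs, since they cannot create collisions between distinct inputs. So assume the distinct inputs appear in the order they were first queried, $x_{(1)},\dots,x_{(t')}$ with $t'\le q_H$.

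In lazy sampling, when $x_{(j)}$ is first queried, $H(x_{(j)})$ is sampled uniformly from $\mathbb{F}_p$. This sample is independent of everything determined so far, including the earlier values $H(x_{(i)})$ for $i<j$, even though the choice of $x_{(j)}$ may depend adaptively on them. Hence $\Pr[H(x_{(j)})=H(x_{(i)})]=1/p$ for each fixed pair $i<j$.

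A union bound over the at most $\binom{q_H}{2}$ pairs then gives
\[
\Pr[\mathsf{Coll}]\le \binom{q_H}{2}\frac1p=\frac{q_H(q_H-1)}{2p}\le\frac{q_H^2}{2p}.
\]

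\textbf{Main obstacle.} The subtle step is applying Lemma~\ref{lem:recording} legitimately. We must check that $\mathsf{Coll}$ is defined with respect to the compressed transcript, which is exactly how the Collision Event definition phrases it, and that adaptivity does not break the union bound. Adaptivity is handled by the fresh-sampling argument above: the probability bound holds conditionally on any history, so it survives adaptive query choices.

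We should state explicitly that we are taking Lemma~\ref{lem:recording} as given in the form stated in the paper. Zhandry's actual compressed-oracle analysis gives a weaker quantum bound, $O(q_H^3/p)$ for finding collisions, and an $O(q_H^2/p)$ bound on the recorded database only up to additive trace-distance terms. If that issue arises, the fallback is to cite~\cite{Zhandry2012} directly for the stated bound.
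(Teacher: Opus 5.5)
Your argument is the paper's argument. The paper gives no proof beyond the remark that Lemma~\ref{lem:recording} reduces the quantum birthday bound to a classical lazy-sampling calculation. Your union bound over pairs of first-query positions is a correct version of that calculation, since it uses that each fresh value $H(x_{(j)})$ is uniform and independent of the history that determined $x_{(j)}$.

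The caveat in your last paragraph, however, cannot be repaired by a different citation, so you should not present the fallback as viable. The compressed oracle does not produce a transcript distributed like a classical lazy-sampling transcript. What Zhandry's analysis actually shows is that the norm of the component on collision-containing databases grows by at most $O(\sqrt{i/p})$ at the $i$-th query. This gives $\Pr[\mathsf{Coll}]=O(q_H^3/p)$, and the bound is tight:
\begin{itemize}
\item the Brassard--H{\o}yer--Tapp algorithm finds a collision in $H$ with constant probability using $q_H\approx p^{1/3}$ queries;
\item by Zhandry's lemma relating the adversary's output to the measured database, the compressed database then also contains a collision with constant probability;
\item whereas the stated bound gives only $q_H(q_H-1)/(2p)\approx p^{-1/3}/2$.
\end{itemize}
So neither \cite{Zhandry2012} nor any other source can give the stated bound for genuinely quantum adversaries. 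There is also no $O(q_H^2/p)$ bound on the recorded database ``up to trace-distance terms''.

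Your proof, like the paper's, is valid only conditionally on Lemma~\ref{lem:recording} as literally stated. In effect that restricts it to adversaries whose queries are classical. For superposition queries the correct conclusion is $\Pr[\mathsf{Coll}]=O(q_H^3/p)$.
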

\begin{algorithm}[tb]
\caption{\textsc{KeyGen}$(1^k)$}\label{keygen}
\footnotesize
\begin{algorithmic}[1]
\State Choose a prime $p\approx 2^k$ and work over $\mathbb{F}_p^\ast$.
\State Select hash function $H$ and $\mathsf{SSS}_{2,2}$ with public weights $w_0,w_1$.
\State Sample $K \xleftarrow{\$} \mathbb{F}_p^\ast$ and compute $(K_0,K_1)=\mathsf{SSS}_{2,2}(K)$.
\State \textbf{Output:} private key $\mathsf{sk}=K$, public key $(w_0,w_1)$.
\end{algorithmic}
\end{algorithm}

\begin{algorithm}[tb]
\caption{\textsc{Sign}$_{\mathsf{sk}}(M)$}\label{sign}
\footnotesize
\begin{algorithmic}[1]
\Require Secret key $\mathsf{sk} = K$ and ephemeral session key $k_{\mathsf{sig}}$.
\State $n \leftarrow \mathrm{HMAC}_{k_{\mathsf{sig}}}(M)$ and $ r \leftarrow H(M,n)$.
\State Sample $\alpha,\beta,b,d \xleftarrow{\$}\mathbb{F}_p^\ast$.
\State Compute $\varepsilon=\alpha\beta$ and $(\varepsilon_0,\varepsilon_1)=\mathsf{SSS}_{2,2}(\varepsilon)$.
\State Compute $K'=\mathrm{HMAC}_{K}(M)$ and $(K_0',K_1')=\mathsf{SSS}_{2,2}(K')$.
\State Compute the algebraic signature components:
\begin{equation}\label{silmarils_eqs}
  \begin{array}{lll}
\sigma_1=b(K'-r), &
\sigma_2=db^{-1}, &
\sigma_3=K_1'd,\\[2pt]
\sigma_4=d\,\varepsilon^{-1}\varepsilon_1, &
\sigma_5=d(K_0'-r\,\varepsilon^{-1}\varepsilon_0). &
\end{array}
\end{equation}
\State \textbf{Output: } $\sigma_{\mathsf{alg}}=(\sigma_1,\ldots,\sigma_5)$.
\end{algorithmic}
\end{algorithm}

\begin{algorithm}[tb]
\caption{\textsc{Verify}$(M,\sigma_{\mathsf{alg}})$}\label{veri}
\footnotesize
\begin{algorithmic}[1]
\Require Ephemeral session key $k_{\mathsf{sig}}$.
\State Parse $\sigma=(\sigma_1,\sigma_2,\sigma_3,\sigma_4,\sigma_5)$.
\State Compute $n \leftarrow \mathrm{HMAC}_{k_{\mathsf{sig}}}(M)$ and $ r \leftarrow H(M,n)$.
\If{$\sigma_4=0$} \Return 0 \Else
\State Compute
\(
V_0=\sigma_1\sigma_2-\sigma_5,\) and \(
V_1=\sigma_1\sigma_2-\sigma_3+r\,\sigma_4.
\)
\State Reconstruct $V=\mathsf{SSS}_{2,2}^{-1}(V_0,V_1)$.
\If{$V=0$} \Return 1 \Else \,\,\Return 0 \EndIf
\EndIf
\end{algorithmic}
\end{algorithm}

\section{Construction of \textsc{SILMARILS}}\label{Construction_sec}
We now give the full specification of our IT-secure signature scheme. The scheme operates over a prime field $\mathbb{F}_p$ and uses perfect $2$-out-of-$2$ Shamir secret sharing $\mathsf{SSS}_{2,2}$ for all shared values. All randomness comes from a TRNG, and $H$ is either a fixed injective encoding (pure IT model) or a random oracle (IT+ROM).  Verification reconstructs a value that must equal $0$ for valid signatures. Unforgeability follows from the algebraic fact that any deviation from the honest equations introduces a nonzero term that is uniformly distributed in $\mathbb{F}_p^\ast$, since the verification constraints form a linear system satisfied by exactly one tuple—the honest signature—and any modification yields a uniformly random nonzero output.
\subsection{Three-Party \textsc{SILMARILS}}
In this section, we present \textsc{SILMARILS} construction in three-party mode. The goal is to enable a signer $\mathcal{P}_1$ to
authenticate a message $M$ so that a holder $\mathcal{P}_2$ can extract
an authenticated value and subsequently transfer it to a verifier
$\mathcal{P}_3$, while ensuring correctness, unforgeability, and
transferability with error at most $\varepsilon = 1/p$.
Our construction  implements an IC
mechanism based on affine authentication over $\mathbb{F}_p$ \cite[Protocol~18.7]{IT-book}. In our
setting, the IC mechanism is applied not to the message itself but to
the \emph{authenticated value}
\(
  x := H(M,\sigma_{\mathsf{alg}}) \in \mathbb{F}_p,
\)
where $\sigma_{\mathsf{alg}} = (\sigma_1,\dots,\sigma_5)$ is the
algebraic part of \textsc{SILMARILS} signature produced by
Algorithm~\ref{sign}. The IC layer ensures that $\mathcal{P}_2$
and $\mathcal{P}_3$ agree on a unique affine authentication pair
$(x,\sigma)$ satisfying
\(
  \sigma = k_1 x + k_2,
\)
for secret coefficients $(k_1,k_2)$ held by $\mathcal{P}_3$. Any attempt
by a malicious $\mathcal{P}_1$ to cause $\mathcal{P}_2$ and
$\mathcal{P}_3$ to disagree on the authenticated value is detected with
probability at least $1 - 1/p$. Hence,
the resulting three-party \textsc{SILMARILS} protocol consists of two
phases:

\noindent\textbf{Signing phase:} The signer $\mathcal{P}_1$ computes $\sigma_{\mathsf{alg}}$ using
    Algorithm~\ref{sign}, derives
    $x = H(M,\sigma_{\mathsf{alg}})$, and initiates the IC protocol with
    $\mathcal{P}_2$ and $\mathcal{P}_3$ to authenticate $x$. The IC
    protocol consists of a \emph{challenge phase} (run by $\mathcal{P}_2$) and
    a consistency-check phase (run by $\mathcal{P}_1$ and
    $\mathcal{P}_3$). At the
    end of this phase, the holder $\mathcal{P}_2$ obtains an
    authenticated pair $(x,\sigma)$ consistent with the verifier's
    secret line $\sigma = k_1 x + k_2$ except with probability $1/p$.

\noindent\textbf{Transfer phase:}     The holder $\mathcal{P}_2$ sends $(x,\sigma)$ to $\mathcal{P}_3$,
    who checks whether the affine relation holds. If so, it outputs
    $z_3 = x$; otherwise, it outputs $\bot$.
To interpret $x$ as a signature on $M$, the verifier recomputes $\sigma_{\mathsf{alg}}$ from $M$, checks that $H(M,\sigma_{\mathsf{alg}})=x$, and verifies the algebraic predicate $V=\mathsf{SSS}_{2,2}^{-1}(V_0,V_1)=0$ from Algorithm~\ref{veri}; both conditions hold for all honest signatures, so the protocol realizes the ideal three-party functionality with error $1/p$. Algorithms~\ref{keygen}, \ref{sign}, and \ref{alg:3p-sign} give key generation, signing, and three-party extraction. Each algebraic signature $\sigma_{\mathsf{alg}}=(\sigma_1,\ldots,\sigma_5)$ consists of masked encodings of $K'$, the randomness $d$, and Shamir shares of $K'$ and $\varepsilon=\alpha\beta$, ensuring that $V=0$ for honest executions. In the three-party setting, Algorithm~\ref{alg:3p-sign} computes $x=H(M,\sigma_{\mathsf{alg}})$ and initiates the IC protocol with $\mathcal{P}_2$ and $\mathcal{P}_3$, where the challenge and consistency checks (Algorithms~\ref{alg:3p-ic-challenge}–\ref{alg:3p-ic-checks}) authenticate $x$ by producing an affine pair $(x,\sigma)$ satisfying $\sigma=k_1 x+k_2$ for verifier-held coefficients. The holder outputs $z_2=x$, which is consistent with the verifier's line except with probability $1/p$. Any malicious deviation violates an algebraic or affine relation, causing the reconstructed value at the verifier to be uniformly random in $\mathbb{F}_p$, so a forgery is accepted with probability at most $1/p$. Finally, the holder transfers $(x,\sigma)$ to $\mathcal{P}_3$ (Algorithm~\ref{alg:3p-transfer}), and the verifier applies Algorithm~\ref{alg:3p-verifytransfer} to output $z_3\in\{x,\bot\}$ depending on whether the affine check holds.

\begin{algorithm}[tb]
\caption{\textsc{Sign}$^{(3)}_{\mathsf{sk}}(M)$ (run by $\mathcal{P}_1$)}\label{alg:3p-sign}
\footnotesize
\begin{algorithmic}[1]
\State Compute $\sigma_{\mathsf{alg}}$ using Algorithm~\ref{sign}.
\State Compute $x := H(M,\sigma_{\mathsf{alg}})$.
\State Sample $k_1,k_2,x',k_2' \xleftarrow{\$} \mathbb{F}_p$.
\State Compute $\sigma = k_1 x + k_2$ and $\sigma' = k_1 x' + k_2'$.
\State Send $(x,x',\sigma,\sigma')$ to $\mathcal{P}_2$ and $(k_1,k_2,k_2')$ to $\mathcal{P}_3$.
\State Proceed to the IC challenge phase.
\end{algorithmic}
\end{algorithm}

\begin{algorithm}[tb]
\caption{IC Challenge Phase (run by $\mathcal{P}_2$)}\label{alg:3p-ic-challenge}
\footnotesize
\begin{algorithmic}[1]
\State Sample $e \xleftarrow{\$} \mathbb{F}_p$.
\State Compute $x_e := x' + e x$ and $\sigma_e := \sigma' + e \sigma$.
\State Broadcast $(e,x_e,\sigma_e)$.
\end{algorithmic}
\end{algorithm}

\begin{algorithm}[tb]
\caption{IC Consistency Checks (run by $\mathcal{P}_1$ and $\mathcal{P}_3$)}
\label{alg:3p-ic-checks}
\footnotesize
\begin{algorithmic}[1]

\Statex \textbf{Checks by $\mathcal{P}_1$ on $\mathcal{P}_2$'s broadcast:}
\State Verify $x_e = x' + e x$ and $\sigma_e = \sigma' + e \sigma$.
\If{either check fails}
  \State Broadcast ``$\mathcal{P}_2$ corrupt'' and broadcast $(x,\sigma)$.
  \State $\mathcal{P}_2$ sets $(x,\sigma)$ to the broadcasted pair.
  \State $\mathcal{P}_3$ updates $k_2$ so that $\sigma = k_1 x + k_2$.
  \State \textbf{Terminate signing phase with $z_2 = x$.}
\Else
  \State Broadcast ``accept''.
\EndIf

\Statex \textbf{Checks by $\mathcal{P}_3$:}
\State Verify
  \(
    \sigma_e \stackrel{?}{=} k_1 x_e + k_2' + e k_2.
  \)
\State Broadcast ``accept'' or ``reject'' accordingly.

\Statex \textbf{Checks by $\mathcal{P}_1$ on $\mathcal{P}_3$'s broadcast:}
\State Using $(k_1,k_2,k_2')$, verify whether $\mathcal{P}_3$'s broadcast
       (``accept'' or ``reject'') is consistent with the equation above.
\If{$\mathcal{P}_3$'s broadcast is inconsistent}
  \State Broadcast ``$\mathcal{P}_3$ corrupt''.
\EndIf

\If{$\mathcal{P}_3$ says ``accept'' and $\mathcal{P}_1$ says ``accept''}
  \State \textbf{Terminate signing phase with $z_2 = x$.}

\ElsIf{$\mathcal{P}_3$ says ``reject'' and $\mathcal{P}_1$ says ``accept''}
  \State $\mathcal{P}_1$ broadcasts $(x,\sigma)$.
  \State $\mathcal{P}_2$ sets $(x,\sigma)$ to the broadcasted pair.
  \State $\mathcal{P}_3$ updates $k_2$ so that $\sigma = k_1 x + k_2$.
  \State \textbf{Terminate signing phase with $z_2 = x$.}

\ElsIf{$\mathcal{P}_1$ says ``$\mathcal{P}_3$ corrupt''}
  \State $\mathcal{P}_1$ broadcasts $(k_1,k_2)$.
  \State $\mathcal{P}_3$ sets $(k_1,k_2)$ to the broadcasted pair.
  \State $\mathcal{P}_2$ updates $\sigma$ so that $\sigma = k_1 x + k_2$.
  \State \textbf{Terminate signing phase with $z_2 = x$.}
\EndIf
\end{algorithmic}
\end{algorithm}
\begin{algorithm}[tb]
\caption{\textsc{Transfer}$(x,\sigma)$ (run by $\mathcal{P}_2$)}\label{alg:3p-transfer}
\footnotesize
\begin{algorithmic}[1]
\State Send $(x,\sigma)$ to $\mathcal{P}_3$.
\end{algorithmic}
\end{algorithm}

\begin{algorithm}[tb]
\caption{\textsc{ExtractTransfer}$(x,\sigma)$ (run by $\mathcal{P}_3$)}\label{alg:3p-verifytransfer}
\footnotesize
\begin{algorithmic}[1]
\If{$\sigma = k_1 x + k_2$}
  \State Output $z_3 := x$.
\Else
  \State Output $z_3 := \bot$.
\EndIf
\end{algorithmic}
\end{algorithm}
Upon receiving $x$, the verifier recomputes $\sigma_{\mathsf{alg}}$ from
$M$ and checks that $H(M,\sigma_{\mathsf{alg}})=x$ and
$V(P,M,\sigma_{\mathsf{alg}})=0$. If both hold, the signature is
accepted.

\subsection{Two-Party \textsc{SILMARILS}}
Let $H : \{0,1\}^\ast \to \mathbb{F}_p$ with $p \approx 2^k$, be a hash function modeled as a
random oracle, and let $\mathrm{HMAC}$ be a pseudorandom function keyed
by the long-term signing key $K$.
The two-party signature scheme
$\mathsf{SILMARILS} = (\mathsf{KeyGen},\mathsf{Sign},\mathsf{Verify})$
between $\mathcal{P}_1$ and $\mathcal{P}_3$ is defined as follows.
We have also provided an open-source reference implementation in Rust~\cite{silmarils-implementation}.
\begin{description}
\item[\textsf{KeyGen}$(1^k)$:]
Choose a prime $p\approx 2^k$ and work over $\mathbb{F}_p^\ast$.
Fix $H$ and $\mathsf{SSS}_{2,2}$ (with fresh internal randomness each use).
Sample public weights $w_0,w_1 \xleftarrow{\$}\mathbb{F}_p^\ast$ and a long-term key $K \xleftarrow{\$}\mathbb{F}_p^\ast$.
Signer and verifier share $k_{\mathsf{sig}}$ for deriving $n=\mathrm{HMAC}_{k_{\mathsf{sig}}}(M)$.
Output $\mathsf{sk}=K$ and $\mathsf{pk}=(w_0,w_1)$.

\item[\textsf{Sign}$_{\mathsf{sk}}(M)$:]
On input $M$ and $\mathsf{sk}=K$, sample $\alpha,\beta,b,d \xleftarrow{\$} \mathbb{F}_p^\ast$ and set $\varepsilon = \alpha\beta$,
$(\varepsilon_0,\varepsilon_1) = \mathsf{SSS}_{2,2}(\varepsilon)$.
Compute the shared nonce
\(
   n := \mathrm{HMAC}_{k_{\mathsf{sig}}}(M),
\)
and compute
\(
   r := H(M,n).
\)
Derive the per-message signing key
\(
   K' := \mathrm{HMAC}_{K}(M),
\)
and compute its shares
\(
   (K_0',K_1') = \mathsf{SSS}_{2,2}(K').
\)
Then compute the algebraic signature $\sigma_{\mathsf{alg}} = (\sigma_1,\sigma_2,\sigma_3,\sigma_4,\sigma_5)$ using \eqref{silmarils_eqs}.
\item[\textsf{Verify}$_{\mathsf{pk}}(M,\sigma_{\mathsf{alg}})$:]
On input $(M,\sigma_{\mathsf{alg}})$ and public key $(w_0,w_1)$, parse $\sigma_{\mathsf{alg}} = (\sigma_1,\sigma_2,\sigma_3,\sigma_4,\sigma_5)$.
Compute
\(
   n := \mathrm{HMAC}_{k_{\mathsf{sig}}}(M),\ r := H(M,n).
\)
If $\sigma_4 = 0$, output $0$ (reject).
Compute
\begin{equation}\label{silmarils_ver}
    V_0 = \sigma_1\sigma_2 - \sigma_5,\qquad V_1 = \sigma_1\sigma_2 - \sigma_3 + r\,\sigma_4.
\end{equation}
Reconstruct
\(
   V = \mathsf{SSS}_{2,2}^{-1}(V_0,V_1).
\)
If $V = 0$ output $1$, otherwise output $0$.
\end{description}
\begin{remark}[Relationship Between Modes]
The two-party TDV and three-party modes are distinct primitives with different assumptions and guarantees; we present them in a common algebraic framework but do not claim any reduction between them.
\end{remark}
\section{Designated-Verifier Security of Two-Party \textsc{SILMARILS}}\label{sec:dv-security}
In this section we formalize the DV security of the
two-party \textsc{SILMARILS} mode and prove that it satisfies the Jakobsson--Sako--%
Impagliazzo (JSI) notion of DV proofs~\cite{JSI96}.
Intuitively, a designated-verifier signature (DVS) scheme is one in which the
DV can always simulate accepting transcripts that are
indistinguishable from real ones, so that no third party can be convinced
that the signer actually produced a given transcript.

Jakobsson \emph{et al.} motivate designated verification as reconciling authenticity with privacy: ``we designate a verifier when we ensure that nobody but this participant can be convinced by the proof.'' Their core idea is that Alice proves the disjunction ''either $\varphi$ is true, or I am Bob,'' which convinces Bob but is useless to any third party, since Bob can always simulate being Bob. This captures the requirement that any transcript convincing to the designated verifier must be simulatable by that verifier alone, and therefore non-transferable.

We adopt the JSI model with three parties: a prover $\mathcal{P}$ (the signer),
a designated verifier $\mathcal{V}$, and a third party $\mathcal{C}$ (Cindy).
Let $\varphi$ be the statement ``$\sigma$ is a valid \textsc{SILMARILS} signature on
message $M$ under the shared long-term key $K$ and per-pair key
$k_{\mathsf{sig}}$''.  A (possibly interactive) protocol
$(P_{\mathcal{P}},P_{\mathcal{V}})$ is a designated-verifier proof of $\varphi$
if the following two properties hold: 1) \textbf{Correctness for the DV:} If $\mathcal{P}$ is honest and $\varphi$ is true, then an honest
        $\mathcal{V}$ accepts with overwhelming probability; 2) \textbf{Simulatability for the DV:} For any (possibly malicious) prover $\mathcal{P}^\ast$, there exists a
        probabilistic polynomial-time (PPT) simulator $S_{\mathcal{V}}$ that, given
        only $\mathcal{V}$'s secret key and the public inputs, outputs
        transcripts that are computationally indistinguishable from those
        produced by an interaction between $\mathcal{P}^\ast$ and an honest
        $\mathcal{V}$.

Following JSI, we say that $\mathcal{V}$ is a \emph{designated verifier} if for
any protocol $(P_{\mathcal{P}},P_{\mathcal{V}},P_{\mathcal{C}})$ in which
$\mathcal{V}$ attempts to convince $\mathcal{C}$ of $\varphi$, there exists a
protocol $(P'_{\mathcal{V}},P_{\mathcal{C}})$ such that $\mathcal{V}$ alone can
generate transcripts that $\mathcal{C}$ cannot distinguish from those of the
original three-party protocol.  In particular, a hidden verifier $\mathcal{C}$
cannot obtain transferable conviction, since any transcript she sees could have
been generated entirely by $\mathcal{V}$.
In the sequel, we first analyze the algebraic structure of the verification equations, showing that any party who can compute the per-message value $r=H(M,\mathrm{HMAC}_{k_{\mathsf{sig}}}(M))$ can generate accepting transcripts and extract the hidden parameters $a$ and $s=K'$. We then show that this capability is \emph{exactly} the JSI designated-verifier simulation property and therefore a \emph{security feature}, not a flaw.
\subsection{Algebraic Structure and Simulator Construction}
Using Shamir shares $K_0'=K'+a_K w_0$, $K_1'=K'+a_K w_1$, $\varepsilon_0=\varepsilon+a_\varepsilon w_0$, $\varepsilon_1=\varepsilon+a_\varepsilon w_1$, the verifier computes
\(
V_0=\sigma_1\sigma_2-\sigma_5=d(-a_K w_0-r+r\varepsilon^{-1}\varepsilon_0)\) and
\(V_1=\sigma_1\sigma_2-\sigma_3+r\sigma_4=d(-a_K w_1-r+r\varepsilon^{-1}\varepsilon_1).
\)
Using $\varepsilon^{-1}\varepsilon_i=1+\tfrac{a_\varepsilon w_i}{\varepsilon}$, we obtain
\(
V_0=d w_0 C,\) \(V_1=d w_1 C,\) and  \(C:=-a_K+r\tfrac{a_\varepsilon}{\varepsilon}.
\)
Thus $(V_0,V_1)$ are evaluations of $f(x)=dCx$, and reconstruction yields $f(0)=0$, guaranteeing acceptance.

\noindent\textbf{Simulator (DV Forgery):}
Any party knowing $r$ can generate a valid transcript. Compute
$n=\mathrm{HMAC}_{k_{\mathsf{sig}}}(M)$ and $r=H(M,n)$, choose arbitrary
$K'^\ast,a_K,a_\varepsilon\in\mathbb{F}_p$ and $d,\varepsilon,b\in\mathbb{F}_p^\ast$, and form
$K_0'^\ast=K'^\ast+a_K w_0$, $K_1'^\ast=K'^\ast+a_K w_1$,
$\varepsilon_0^\ast=\varepsilon+a_\varepsilon w_0$,
$\varepsilon_1^\ast=\varepsilon+a_\varepsilon w_1$.
Substitute these values into the signature equations~\eqref{silmarils_eqs} and output the resulting $\sigma^\ast=(\sigma_1^\ast,\ldots,\sigma_5^\ast)$.
\subsection{Extraction of the Hidden Parameters}
From the signature equations
\(
\sigma_1=b(s-r),\ \sigma_2=db^{-1},\ \sigma_3=d(a w_1+s),\
\sigma_4=d u_1,\ \sigma_5=d(a w_0+s-r u_0),
\)
with $u_i=\varepsilon^{-1}\varepsilon_i$, the verifier forms the ratio
\(
R=\frac{\sigma_1\sigma_2}{\sigma_3}=\frac{s-r}{a w_1+s},
\)
yielding $s(R-1)+a w_1 R+r=0$. From reconstruction,
\(
a C_w=r C_\varepsilon\), \(C_w=\lambda_0 w_0+\lambda_1 w_1,
\)
with $C_\varepsilon=\lambda_0(u_0-1)+\lambda_1(u_1-1)$ and
\[
d=\sigma_3/(a w_1+s),\quad
u_1=\sigma_4\frac{a w_1+s}{\sigma_3},\quad
u_0=\frac{a w_0+s-\sigma_5\frac{a w_1+s}{\sigma_3}}{r}.
\]
Thus $C_\varepsilon=\alpha_a a+\alpha_s s+\alpha_0$, giving the linear system
\[
(C_w-r\alpha_a)a-r\alpha_s s=r\alpha_0,\qquad
w_1 R\,a+(R-1)s=-r.
\]
Solving yields explicit closed forms for $a$ and $s$. If $R\neq 1$, then $s=\frac{-a w_1 R-r}{R-1}$. Once $s$ is known, $K'_0=a w_0+s$ and $K'_1=a w_1+s$.

\begin{theorem}[DV Security of Two-Party \textsc{SILMARILS}]
The two-party \textsc{SILMARILS} is a DV signature scheme in the sense of JSI notion. In particular, for every message $M$ and every honestly generated transcript $(M,\sigma,r)$, the designated verifier---and only the designated verifier---can (i) verify correctness and (ii) generate transcripts that are computationally indistinguishable from honestly generated ones. Consequently, no third party can obtain transferable conviction that the signer produced the transcript.
\end{theorem}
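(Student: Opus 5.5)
We verify the two JSI conditions, correctness for the DV and simulatability for the DV, and then derive non-transferability from simulatability.

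\textbf{Correctness.} This follows from the algebraic identity already established. For an honest signature, write $K_i'=K'+a_Kw_i$ and $\varepsilon_i=\varepsilon+a_\varepsilon w_i$. Then $V_0=dw_0C$ and $V_1=dw_1C$ with $C=-a_K+ra_\varepsilon/\varepsilon$. Lagrange reconstruction at $0$ gives $V=\mathsf{SSS}_{2,2}^{-1}(V_0,V_1)=0$. Also $\sigma_4=d\varepsilon^{-1}\varepsilon_1=0$ only when $\varepsilon_1=0$, which happens with probability at most $1/p$ over the sharing slope. So an honest $\mathcal{V}$, who holds $k_{\mathsf{sig}}$ and can therefore compute $r=H(M,\mathrm{HMAC}_{k_{\mathsf{sig}}}(M))$, accepts with probability at least $1-1/p$.

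\textbf{Simulatability.} The simulator $S_{\mathcal{V}}$ is the DV-forgery procedure given above. We argue that its output distribution matches the real distribution conditioned on $(M,r)$. In a real signature, the tuple $(\sigma_1,\dots,\sigma_5)$ is determined by $(K',a_K,a_\varepsilon,\varepsilon,b,d)$.
\begin{itemize}
\item The values $b,d$ are uniform on $\mathbb{F}_p^\ast$ and the slopes $a_K,a_\varepsilon$ are uniform on $\mathbb{F}_p$, identically in both worlds.
\item The product $\varepsilon=\alpha\beta$ is uniform on $\mathbb{F}_p^\ast$, so the simulator samples $\varepsilon$ directly.
\item The only difference is $K'=\mathrm{HMAC}_K(M)$ versus a simulator-chosen uniform $K'^\ast$.
\end{itemize}
A hybrid replacing $\mathrm{HMAC}_K$ by a random function makes the two distributions identical. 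Indistinguishability then reduces to PRF security of HMAC, or holds unconditionally if $K'$ is modeled as uniform. This yields computational indistinguishability even for a distinguisher holding the receipt $r$, since $r$ is identical in both worlds. The quantity $\sigma_1=b(K'-r)$ is uniform on $\mathbb{F}_p^\ast$ except when $K'=r$. That event has probability $1/p$ and can be absorbed into the statistical error.

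\textbf{Only the DV.} Here we argue that verification and simulation both require $r$. Without $k_{\mathsf{sig}}$, the nonce $n$ is pseudorandom, so in the ROM $r$ is uniform and unknown to a third party. Verification and simulation both depend on $r$ (through $V_1$ and $\sigma_5$). A non-designated party who could produce accepting transcripts could therefore be used to predict $r$ or to distinguish HMAC from random. We would only sketch this point here, deferring the full reduction to the $\mathsf{EUF\text{-}CMA}^{\neg\mathsf{DV}}$ analysis. Finally, JSI non-transferability follows: any transcript shown to Cindy has a $\mathcal{V}$-only counterpart $S_{\mathcal{V}}$ that is indistinguishable, so Cindy gains no conviction that $\mathcal{P}$ signed.

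\textbf{Main obstacle.} The delicate step is the distributional argument for the simulator. Care is needed with the correlation between $\sigma_5$ and the shares when $r$ is fixed. I would first show that the map from $(K',a_K,a_\varepsilon,\varepsilon,b,d)$ to $\sigma$ induces the same law under uniform $K'$, and only then invoke the PRF hybrid.
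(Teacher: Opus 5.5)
Your proposal is correct and follows the paper's proof: the same share identity $V_0=dw_0C$, $V_1=dw_1C$ for correctness, the same $r$-based simulator sampling $(K'^\ast,a_K,a_\varepsilon,d,\varepsilon,b)$, the same observation that only $K'=\mathrm{HMAC}_K(M)$ versus a uniform $K'^\ast$ differs, and non-transferability derived from simulatability. Your explicit PRF hybrid and the $1/p$ allowance for $\sigma_4=0$ are slightly more careful than the paper's claims that the distributions are ``identical'' and that honest signatures are ``always'' accepted, and the obstacle you flag is not a real difficulty, since the simulator pushes the same input law through the same map.
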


\begin{proof}
Let $\mathcal{P}_1$ be the signer, $\mathcal{P}_3$ the DV, and $\mathcal{C}$ an arbitrary third party. The signer and DV share the per-pair key $k_{\mathsf{sig}}$, enabling $\mathcal{P}_3$ to compute $n=\mathrm{HMAC}_{k_{\mathsf{sig}}}(M)$ and $r=H(M,n)$ for any $M$. Knowledge of $r$ is the trapdoor enabling simulation, analogous to the trapdoor commitment key in JSI.

\noindent\textit{Correctness:}
For an honestly generated signature $\sigma=(\sigma_1,\dots,\sigma_5)$, the verifier computes
\(
V_0=d w_0 C,\ V_1=d w_1 C,\ C=-a_K+r\tfrac{a_\varepsilon}{\varepsilon},
\)
which interpolate to the affine polynomial $f(x)=dCx$ with $f(0)=0$. Thus the verification algorithm always accepts honest signatures.

\noindent\textit{Simulatability:}
Given any message $M$, the DV can generate an accepting transcript without interacting with the signer. It computes
$r=H(M,\mathrm{HMAC}_{k_{\mathsf{sig}}}(M))$, chooses arbitrary
$K'^\ast,a_K,a_\varepsilon\in\mathbb{F}_p$ and $d,\varepsilon,b\in\mathbb{F}_p^\ast$, and forms
$K_0'^\ast=K'^\ast+a_K w_0$, $K_1'^\ast=K'^\ast+a_K w_1$,
$\varepsilon_0^\ast=\varepsilon+a_\varepsilon w_0$,
$\varepsilon_1^\ast=\varepsilon+a_\varepsilon w_1$.
Substituting into the signature equations~\eqref{silmarils_eqs} yields $\sigma^\ast=(\sigma_1^\ast,\ldots,\sigma_5^\ast)$.
Verification of $\sigma^\ast$ produces the same $(V_0,V_1)$ structure as an honest signature and reconstructs $f(0)=0$, so $\sigma^\ast$ is always accepted.

\noindent\textit{Indistinguishability:}
In honest signing, the parameters $(K',a_K,a_\varepsilon,d,\varepsilon,b)$ are uniformly random in the same domains as in simulation. The only difference is that $K'$ is derived as $\mathrm{HMAC}_K(M)$, whereas $K'^\ast$ is chosen uniformly. Since $K$ is unknown to any external observer, the distributions of honest and simulated transcripts are identical. Hence no algorithm can distinguish them.

\noindent\textit{Non-transferability:}
Let $\mathcal{C}$ receive $(M,\sigma,r)$ from $\mathcal{P}_3$. Because $\mathcal{P}_3$ can generate such transcripts without interacting with $\mathcal{P}_1$, $\mathcal{C}$ cannot determine whether the transcript originated from the signer or from the DV. This matches the JSI definition: for any three-party protocol $(P_{\mathcal{P}},P_{\mathcal{V}},P_{\mathcal{C}})$, there exists a two-party protocol $(P'_{\mathcal{V}},P_{\mathcal{C}})$ producing indistinguishable transcripts.

Hence, only the DV can be convinced by a transcript, and any transcript that convinces them is simulatable by them alone. Therefore, two-party \textsc{SILMARILS} satisfies the JSI designated-verifier definition.
\hfill$\square$
\end{proof}
\section{EUF-CMA Security of Two-Party \textsc{SILMARILS} Against Non-Designated Parties}\label{Security_sec}
In this section we prove that two-party \textsc{SILMARILS} is EUF-CMA secure in the ROM and QROM against non-designated parties.
In the two-party mode of \textsc{SILMARILS}, the designated verifier holds the
shared secret $k_{\mathsf{sig}}$ and can therefore compute the per-message
nonce $n = \mathrm{HMAC}_{k_{\mathsf{sig}}}(M)$ and the receipt
$r = H(M,n)$ for any message $M$. This enables the designated verifier to
simulate accepting transcripts that are indistinguishable from honestly
generated ones, as required by JSI notion. Consequently, standard EUF-CMA
unforgeability cannot hold for the designated verifier. Instead, we prove
unforgeability against all \emph{non-designated} parties, who do not know
$k_{\mathsf{sig}}$ and therefore cannot compute $r$ for new messages.
We emphasize that in the TDV setting the designated verifier may publish a
receipt $r$ enabling third-party verification, but verification is impossible
without $r$, and even with $r$ no external party can determine whether a valid
transcript was produced by the signer or simulated by the designated verifier.
\begin{definition}[EUF-CMA Security for Non-Designated Verifiers]
We write $\mathsf{EUF\text{-}CMA}^{\neg\mathsf{DV}}$ to denote existential
unforgeability under chosen-message attack for all parties except the
designated verifier. An adversary $\mathcal{A}$ is given the public parameters
$(w_0,w_1)$ and oracle access to $\mathsf{Sign}_{\mathsf{sk}}(\cdot)$. It
outputs $(M^\star,\sigma^\star)$. We say that $\mathcal{A}$ wins if
$\mathsf{Verify}_{\mathsf{pk}}(M^\star,\sigma^\star)=1$ and $M^\star$ was never
queried to the signing oracle. The designated verifier is excluded from this
game, as it must be able to simulate signatures.
\end{definition}
We isolate the algebraic core used to analyze unforgeability for non-designated verifiers.
Here the only value depending on $k_{\mathsf{sig}}$ is the receipt $r=H(M,n)$, sampled
uniformly per query; the adversary never learns $n$ or $k_{\mathsf{sig}}$.
\begin{definition}[Algebraic Core Experiment]
In $\mathsf{CoreForge}^{\mathcal{A}}(1^k)$, the challenger samples
$K \xleftarrow{\$}\mathbb{F}_p^\ast$, gives $(w_0,w_1)$ to $\mathcal{A}$, and for each
signing query $M$ samples $\alpha,\beta,b,d \xleftarrow{\$}\mathbb{F}_p^\ast$, sets
$\varepsilon=\alpha\beta$ with $(\varepsilon_0,\varepsilon_1)=\mathsf{SSS}_{2,2}(\varepsilon)$,
samples $r\xleftarrow{\$}\mathbb{F}_p$, computes $K'=\mathrm{HMAC}_K(M)$ with
$(K_0',K_1')=\mathsf{SSS}_{2,2}(K')$, and returns $(\sigma_1,\ldots,\sigma_5)$ as in \eqref{silmarils_eqs}.
On output $(M^\star,\sigma_1^\star,\ldots,\sigma_5^\star)$, the challenger samples
$r^\star\xleftarrow{\$}\mathbb{F}_p$, computes $V_0^\star$ and $V_1^\star$ as in \eqref{silmarils_ver}, and
reconstructs $V^\star=\mathsf{SSS}_{2,2}^{-1}(V_0^\star,V_1^\star)$, and outputs $1$ iff
$M^\star$ was never queried and $V^\star=0$.
The advantage is $\Pr[\mathsf{CoreForge}^{\mathcal{A}}(1^k)=1]$.
\end{definition}
\begin{lemma}[Algebraic Core Unforgeability]\label{algcore}
If $r^\star$ is sampled uniformly independent of the adversary's view \emph{after} the PPT adversary outputs its forgery, then \(
  \Pr[\mathsf{CoreForge}^{\mathcal{A}}(1^k)=1] \le \frac{1}{p},
\)
  which is negligible in $k$.
\end{lemma}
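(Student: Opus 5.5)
The plan is to exploit the fact that the verification value $V^\star$ is an \emph{affine} function of the fresh receipt $r^\star$ whose slope the adversary cannot make vanish. Then, conditioned on everything the adversary outputs, the event $V^\star=0$ pins $r^\star$ to a single field element. Concretely, I fix the adversary's output $(M^\star,\sigma_1^\star,\ldots,\sigma_5^\star)$ and write out the reconstruction explicitly. By the Lagrange formula for $\mathsf{SSS}_{2,2}^{-1}$ and \eqref{silmarils_ver},
\[
V^\star=\frac{w_0V_1^\star-w_1V_0^\star}{w_0-w_1}
=\underbrace{\frac{w_0(\sigma_1^\star\sigma_2^\star-\sigma_3^\star)-w_1(\sigma_1^\star\sigma_2^\star-\sigma_5^\star)}{w_0-w_1}}_{=:A}
\;+\;\underbrace{\frac{w_0\,\sigma_4^\star}{w_0-w_1}}_{=:B}\,r^\star .
\]
Here $A$ and $B$ are determined entirely by the forgery and the public weights $(w_0,w_1)$.

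Next I split into two cases according to $\sigma_4^\star$. If $\sigma_4^\star=0$, the verifier rejects outright, so the experiment outputs $0$. If $\sigma_4^\star\neq0$, then $B\neq0$. This uses $w_0\in\mathbb{F}_p^\ast$, which holds because \textsf{KeyGen} samples the weights from $\mathbb{F}_p^\ast$, together with $w_0\neq w_1$. I will state explicitly that distinct weights are assumed, or else note that the reconstruction is undefined and the verifier rejects when $w_0=w_1$. In the case $B\neq0$, the equation $V^\star=0$ holds iff $r^\star=-A/B$, which is a single point of $\mathbb{F}_p$ fixed before $r^\star$ is drawn.

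The concluding step is a conditioning argument. Since $(A,B)$ is a function of the adversary's view and output, and by hypothesis $r^\star$ is uniform and independent of that view, we have $\Pr[r^\star=-A/B\mid \text{view}]=1/p$ for every view. Averaging over the views, and noting that the requirement that $M^\star$ be unqueried only shrinks the winning event, gives $\Pr[\mathsf{CoreForge}^{\mathcal{A}}(1^k)=1]\le 1/p$. This bound is negligible since $p\approx2^k$.

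The only delicate point, and the one I would justify most carefully, is the independence of $r^\star$ from the adversary's output. In the core experiment this holds by construction, because $r^\star$ is sampled after $\mathcal{A}$ halts. The answers $r$ returned on signing queries are independent draws tied to other messages and so carry no information about $r^\star$. The argument never inspects how $\mathcal{A}$ computes its output, so it applies verbatim to quantum adversaries whose output is classical. In that case we condition on the measured classical forgery, with no rewinding or oracle programming.
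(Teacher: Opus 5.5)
Your proof is correct and follows the paper's argument: $V^\star$ is affine in the post-hoc uniform $r^\star$, so $V^\star=0$ pins $r^\star$ to a single field element and the success probability is at most $1/p$. Your version is more careful than the paper's, because you compute the slope $w_0\sigma_4^\star/(w_0-w_1)$ explicitly and invoke the $\sigma_4^\star=0$ rejection rule together with $w_0\neq 0$ and $w_0\neq w_1$, which the paper's ``at most one value of $r^\star$'' claim silently relies on (one small slip: in $\mathsf{CoreForge}$ the per-query values $r$ are never returned to $\mathcal{A}$, which only strengthens your independence claim).
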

\begin{proof}
Fix any view of $\mathcal{A}$ up to the moment it outputs
$(M^\star,\sigma_1^\star,\ldots,\sigma_5^\star)$. The value $r^\star$ used in
verification is sampled \emph{after} the forgery is output and is uniform in
$\mathbb{F}_p$, independent of $\mathcal{A}$'s view. The verification equations
define $V_0^\star$ independently of $r^\star$ and $V_1^\star$ as an affine
function of $r^\star$. Since $\mathsf{SSS}_{2,2}^{-1}$ is linear, the predicate
$V^\star=0$ holds for at most one value of $r^\star$. Thus the probability that
$V^\star=0$ is at most $1/p$, even for an unbounded adversary.
\hfill$\square$
\end{proof}
\begin{remark}
If the receipt were public, i.e.\ $r=H(M)$ without the hidden nonce $n=\mathrm{HMAC}_{k_{\mathsf{sig}}}(M)$, then the verification equations $V_0=\sigma_1\sigma_2-\sigma_5$ and $V_1=\sigma_1\sigma_2-\sigma_3+r\,\sigma_4$ become public linear constraints on $(\sigma_1,\ldots,\sigma_5)$, making the algebraic core deterministically forgeable: for any new $M'\neq M$, an adversary may choose arbitrary $\sigma_1',\sigma_2',\sigma_3',\sigma_4'\in\mathbb{F}_p$, compute $V_1'=\sigma_1'\sigma_2'-\sigma_3'+r'\sigma_4'$ with $r'=H(M')$, and set $\sigma_5'=\sigma_1'\sigma_2'-(w_0/w_1)V_1'$, which always yields $V'=\mathsf{SSS}_{2,2}^{-1}(V_0',V_1')=0$ and thus a perfect forgery. To prevent this, \textsc{SILMARILS} derives a secret per-message nonce $n=\mathrm{HMAC}_{k_{\mathsf{sig}}}(M)$ and defines $r=H(M,n)$, ensuring $r$ is unpredictable for new messages and eliminating the algebraic forgery.
\end{remark}
\begin{definition}
Let $\Pi_{\mathsf{core}} = (\mathsf{KeyGen},\mathsf{Sign},\mathsf{Verify})$
denote the two-party algebraic core of \textsc{SILMARILS}. We say that
$\Pi_{\mathsf{core}}$ is $\varepsilon_{\mathrm{corr}}$-correct if, for all
messages $M$ and all honestly generated key pairs
$(\mathsf{pk},\mathsf{sk}) \leftarrow \mathsf{KeyGen}$,
\[
  \Pr\bigl[\mathsf{Verify}_{\mathsf{pk}}(M,\sigma_{\mathsf{alg}})=1
  \;\big|\;
  \sigma_{\mathsf{alg}} \leftarrow \mathsf{Sign}_{\mathsf{sk}}(M)
  \bigr]
  \;\ge\; 1 - \varepsilon_{\mathrm{corr}}.
\]
If $\varepsilon_{\mathrm{corr}} = 0$, the algebraic core is said to be
perfectly correct.
\end{definition}

\begin{theorem}[$1/p-$Correctness]
For any message $M$ and every honestly generated key pair $(\mathsf{pk},\mathsf{sk})$, an honestly generated signature $\sigma_{\mathsf{alg}} \leftarrow \mathsf{Sign}_{\mathsf{sk}}(M)$ is rejected with probability at most $1/p$.
\end{theorem}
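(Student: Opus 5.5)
The plan is to compute $(V_0,V_1)$ explicitly for an honest signature and show that the reconstruction $V$ is identically zero. The $1/p$ slack then only has to absorb the rare degenerate events in which the verifier rejects before reconstruction, or in which a quantity the signer must invert is not actually invertible.

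\textbf{Step 1: substitute the share representations.} Write $K_i'=K'+a_Kw_i$ and $\varepsilon_i=\varepsilon+a_\varepsilon w_i$, where $a_K,a_\varepsilon$ are the uniformly random slopes used by $\mathsf{SSS}_{2,2}$. Substituting into \eqref{silmarils_eqs} gives
\[
\sigma_1\sigma_2=d(K'-r),\qquad
V_0=d\bigl(-a_Kw_0-r+r\varepsilon^{-1}\varepsilon_0\bigr),\qquad
V_1=d\bigl(-a_Kw_1-r+r\varepsilon^{-1}\varepsilon_1\bigr).
\]
This reproduces the computation already carried out in Section~\ref{sec:dv-security}.

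\textbf{Step 2: show $V=0$.} Using $\varepsilon^{-1}\varepsilon_i=1+a_\varepsilon w_i/\varepsilon$, we get $V_i=dw_iC$ with $C=-a_K+ra_\varepsilon/\varepsilon$. Lagrange reconstruction then gives
\[
V=\frac{w_0V_1-w_1V_0}{w_0-w_1}=\frac{dC(w_0w_1-w_1w_0)}{w_0-w_1}=0.
\]
This step requires $w_0\neq w_1$, which holds because the weights are distinct by the definition of $\mathsf{SSS}_{2,2}$. Here $r$ is the same value $H(M,\mathrm{HMAC}_{k_{\mathsf{sig}}}(M))$ on both sides, since the signer and verifier share $k_{\mathsf{sig}}$.

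\textbf{Step 3: bound the exceptional events by $1/p$.} Reconstruction succeeds identically, so a rejection can only come from the explicit early rejection $\sigma_4=0$ or from ill-defined inversions. Since $\alpha,\beta\in\mathbb{F}_p^\ast$, we have $\varepsilon=\alpha\beta\neq0$. Since $b\in\mathbb{F}_p^\ast$, $b^{-1}$ exists. We have $\sigma_4=d\varepsilon^{-1}\varepsilon_1$ with $d\neq0$, so $\sigma_4=0$ if and only if $\varepsilon_1=\varepsilon+a_\varepsilon w_1=0$. For fixed $\varepsilon$ and $w_1\neq0$ (the weights are drawn from $\mathbb{F}_p^\ast$), this holds for exactly one value of the uniform slope $a_\varepsilon\in\mathbb{F}_p$. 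Hence $\Pr[\sigma_4=0]=1/p$, and the total rejection probability is at most $1/p$.

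The main subtlety, and the step to treat most carefully, is this last one. The idealized core algebra is perfectly correct, so one must identify precisely which degenerate events can cause rejection. I would also check whether $w_1=0$ could occur. In the two-party \textsf{KeyGen} the weights are sampled from $\mathbb{F}_p^\ast$; if $w_1=0$ were possible it would force $\sigma_4\neq0$, which only helps. Either way the $1/p$ bound holds.
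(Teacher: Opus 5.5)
Your proposal is correct and takes essentially the same route as the paper. Like the paper, you substitute the Shamir share representations, show that the reconstructed value $V$ vanishes identically, and attribute the only possible rejection to the early-exit event $\sigma_4=0$ (the paper merely asserts this event has probability $1/p$, while your Step~3 justifies it: $\sigma_4=0$ iff $\varepsilon+a_\varepsilon w_1=0$, which for a uniform slope $a_\varepsilon$ and $w_1\neq 0$ occurs with probability exactly $1/p$).
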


\begin{proof}
Let $M$ be arbitrary and let
$\sigma_{\mathsf{alg}} = (\sigma_1,\sigma_2,\sigma_3,\sigma_4,\sigma_5)$
be the signature produced by $\mathsf{Sign}_{\mathsf{sk}}(M)$.
Let
\(
  n = \mathrm{HMAC}_{k_{\mathsf{sig}}}(M),
  r = H(M,n),
\)
and let
\(
  K' = \mathrm{HMAC}_{K}(M),
  (K_0',K_1') = \mathsf{SSS}_{2,2}(K').
\)
The signing algorithm samples $\alpha,\beta,b,d \xleftarrow{\$}\mathbb{F}_p^\ast$,
sets $\varepsilon=\alpha\beta$ and
$(\varepsilon_0,\varepsilon_1)=\mathsf{SSS}_{2,2}(\varepsilon)$, and computes $(\sigma_1,\ldots,\sigma_5)$ as in \eqref{silmarils_eqs}.
The verifier computes $V_0,V_1$ as in \eqref{silmarils_ver}
and reconstructs
\(
  V = \mathsf{SSS}_{2,2}^{-1}(V_0,V_1).
\)
We have $V_0 = d\Bigl[(K' - r) - (K_0' - r\,\varepsilon^{-1}\varepsilon_0)\Bigr]$ and $V_1 = d\Bigl[(K' - r) - (K_1' - r\,\varepsilon^{-1}\varepsilon_1)\Bigr]$.
Since $(K_0',K_1')$ are the Shamir shares of $K'$ at points $(w_0,w_1)$,
\(
  K' = \frac{w_0 K_1' - w_1 K_0'}{w_0 - w_1}.
\)
Similarly, since $(\varepsilon_0,\varepsilon_1)$ are the shares of $\varepsilon$,
\(
  \varepsilon = \frac{w_0\varepsilon_1 - w_1\varepsilon_0}{w_0 - w_1}.
\)
Reconstructing $V$ gives
\(
  V = \frac{w_0 V_1 - w_1 V_0}{w_0 - w_1}.
\)
Substituting the expressions for $V_0$ and $V_1$ and simplifying using the
identities above yields
\(
  V = 0.
\)
Considering the case $\sigma_4 = 0$, which causes the verification to reject, the verification predicate evaluates to zero with probability $1 - 1/p$ for honestly generated signatures.
\hfill$\square$
\end{proof}
\subsection{$\mathsf{EUF\text{-}CMA}^{\neg\mathsf{DV}}$ Security of Two-Party \textsc{SILMARILS} in the ROM}

We now prove that the two-party mode of \textsc{SILMARILS} is
$\mathsf{EUF\text{-}CMA}^{\neg\mathsf{DV}}$ secure in the ROM, using
Lemma~\ref{algcore}. Recall that non-designated verifiers do not know the shared
secret $k_{\mathsf{sig}}$ and therefore cannot compute the per-message nonce
$n = \mathrm{HMAC}_{k_{\mathsf{sig}}}(M)$ or the receipt $r = H(M,n)$ for new
messages. This prevents them from mounting the deterministic algebraic forgery
described in the previous remark.

\begin{theorem}
Assume $H$ is modeled as a random oracle and is collision resistant, and
$\mathrm{HMAC}$ is a PRF keyed by $k_{\mathsf{sig}}$. Then the two-party
\textsc{SILMARILS} scheme is $\mathsf{EUF\text{-}CMA}^{\neg\mathsf{DV}}$ secure.
More precisely, for any PPT adversary $\mathcal{A}$ there exist PPT algorithms
$\mathcal{B}_1,\mathcal{B}_2,\mathcal{B}_3$ such that
\[
  \mathsf{Adv}^{\mathsf{EUF\text{-}CMA}^{\neg\mathsf{DV}}}_{\mathsf{SILMARILS}}(\mathcal{A})
  \;\leq\;
  \mathsf{Adv}^{\mathsf{core}}(\mathcal{B}_1)
  + \mathsf{Adv}^{\mathsf{PRF}}_{\mathrm{HMAC}}(\mathcal{B}_2)
  + \mathsf{Adv}^{\mathsf{CR}}_{H}(\mathcal{B}_3)
  + \frac{1}{p}.
\]
\end{theorem}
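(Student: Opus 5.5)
We prove the bound with a sequence of games, $\mathsf{G}_0$ through $\mathsf{G}_3$. The goal is to reduce an $\mathsf{EUF\text{-}CMA}^{\neg\mathsf{DV}}$ forger against the real two-party scheme to an adversary in the $\mathsf{CoreForge}$ experiment, whose success Lemma~\ref{algcore} bounds by $1/p$.

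\textbf{Games $\mathsf{G}_0$ and $\mathsf{G}_1$.} $\mathsf{G}_0$ is the real experiment. In it, the challenger holds $K$ and $k_{\mathsf{sig}}$, answers random-oracle queries by lazy sampling, answers signing queries using Algorithm~\ref{sign}, and verifies the forgery using Algorithm~\ref{veri}. In $\mathsf{G}_1$ we replace $\mathrm{HMAC}_{k_{\mathsf{sig}}}(\cdot)$ by a truly random function $F$. Any change in winning probability yields a PRF distinguisher $\mathcal{B}_2$. $\mathcal{B}_2$ simulates everything itself: it samples $K$, lazily samples $H$, and obtains nonces $n$ from its PRF oracle. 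Since $k_{\mathsf{sig}}$ is used nowhere except to derive $n$, the distinguishing gap is exactly $\mathsf{Adv}^{\mathsf{PRF}}_{\mathrm{HMAC}}(\mathcal{B}_2)$.

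\textbf{Games $\mathsf{G}_2$ and $\mathsf{G}_3$.} In $\mathsf{G}_2$, the receipts $r=H(M,F(M))$ for queried messages are independent uniform field elements. They stay that way unless the adversary queries $H$ on a point $(M,F(M))$ before that receipt is otherwise determined. It also needs a separate hash input colliding with $(M^\star,F(M^\star))$ in the hash value. We treat two bad events.

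The first is that $\mathcal{A}$ queries $H$ on $(M^\star,n^\star)$ with $n^\star=F(M^\star)$. Because $F(M^\star)$ is uniform and hidden from $\mathcal{A}$ ($M^\star$ was never signed), this has probability at most $q_H/|\mathcal{N}|$. I would fold this into the $1/p$ slack, or into the PRF term if $\mathcal{N}$ is large.

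The second is that two distinct hash inputs give equal receipts, which would let a signature on a queried message be reused for $M^\star$. This yields a collision finder $\mathcal{B}_3$, which outputs the colliding pair, at cost $\mathsf{Adv}^{\mathsf{CR}}_H(\mathcal{B}_3)$.

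In $\mathsf{G}_3$ the challenger therefore samples each $r$ uniformly for every signing query. It samples $r^\star$ fresh and uniform only after the forgery is output, and this $r^\star$ is independent of $\mathcal{A}$'s view.

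\textbf{Reduction to the core.} $\mathsf{G}_3$ is exactly $\mathsf{CoreForge}$, with the random-oracle queries simulated by $\mathcal{B}_1$ through lazy sampling on points unrelated to the hidden receipts. So $\Pr[\mathsf{G}_3=1]=\mathsf{Adv}^{\mathsf{core}}(\mathcal{B}_1)$. By Lemma~\ref{algcore}, the same quantity is at most $1/p$. I would keep both terms in the bound to match the displayed statement; the extra $1/p$ also absorbs the $\sigma_4=0$ degeneracy and the first bad event above.

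\textbf{Main obstacle.} The delicate step is the hop $\mathsf{G}_2\to\mathsf{G}_3$. We must argue rigorously that $r^\star$ is independent of the adversary's view, even though $\mathcal{A}$ holds signatures on other messages whose $K'=\mathrm{HMAC}_K(M)$ values are correlated through $K$. The key observation is that verification in Algorithm~\ref{veri} never uses $K$, and the core lemma holds for unbounded adversaries. So only the secrecy of $F(M^\star)$ and the absence of receipt collisions matter. I would first try to make this precise by conditioning on the full view and applying the lazy-sampling argument to the unqueried point $(M^\star,F(M^\star))$.
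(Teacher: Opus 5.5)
Your skeleton matches the paper's: a PRF hop, then a case split on whether $\mathcal{A}$ queried $H$ at $(M^\star,n^\star)$, plus a collision term. The gap is in how you dispose of the queried case, and with your treatment you do not reach the displayed bound. You charge that case as a bad event of probability $q_H/|\mathcal{N}|$ and plan to fold it into the $1/p$ slack or into the PRF term. Neither works.
\begin{itemize}
\item In the ideal game $F$ ranges over $\mathbb{F}_p$, so $|\mathcal{N}|=p$, and $q_H/p>1/p$ as soon as $q_H\ge 2$.
\item $\mathsf{Adv}^{\mathsf{PRF}}_{\mathrm{HMAC}}(\mathcal{B}_2)$ is the advantage of a fixed distinguisher against HMAC. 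It is not slack that can absorb a guessing probability in a game where $F$ is already truly random.
\end{itemize}
So what you actually prove carries an extra additive $q_H/|\mathcal{N}|$, whereas the displayed bound has no $q_H$-dependence. The paper does not pay this cost: it charges the queried case directly to $\mathsf{Adv}^{\mathsf{core}}(\mathcal{B}_1)$, although its own justification for that step is only asserted.

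The missing idea is to bound $\Pr[\text{success}\wedge\text{queried}]$ rather than $\Pr[\text{queried}]$. Since $M^\star$ is never signed, $n^\star=F(M^\star)$ is uniform and independent of $\mathcal{A}$'s whole view, so $\mathcal{A}$ cannot tell which of its oracle answers is $r^\star$.
\begin{itemize}
\item Acceptance forces $\sigma_4^\star\neq 0$. Then $\sigma^\star$ and $(w_0,w_1)$ fix a unique target $\rho$ with $V^\star=0\iff r^\star=\rho$.
\item Conditioned on the view, $\Pr[r^\star=\rho]\le c/p+1/p$, where $c$ is the number of queried nonces $n$ with $H(M^\star,n)=\rho$.
\item If $c\ge 2$, $\mathcal{A}$'s queries contain an $H$-collision, which $\mathcal{B}_3$ outputs. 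Otherwise you get $1/p+1/p$: the queried part plays the role of the core term and the unqueried part is the explicit $1/p$, with no $q_H$ factor.
\end{itemize}
This gives the collision term a real job. In your decomposition it is redundant: once the query is excluded, $r^\star$ is a fresh sample, so reusing another message's receipt already succeeds only with probability $1/p$.

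The same count also lets you bound $\mathsf{G}_1$ directly. That avoids a second hidden loss in your claim that $\mathsf{G}_3$ is \emph{exactly} $\mathsf{CoreForge}$. The core signing oracle never reveals $r$, so plain lazy sampling cannot answer $H(M,F(M))$ consistently for a signed $M$, and treating that as another bad event costs another $q_H/p$. If you keep the reduction, use the fact that for $\sigma_4\neq 0$ the verification equation determines $r$ from $\sigma$ and $(w_0,w_1)$. Then $\mathcal{B}_1$ can sample $F(M)$ itself and program $H(M,F(M))$ consistently.

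Finally, $\sigma_4^\star=0$ needs no budget, since $\mathsf{Verify}$ rejects it. However, you must invoke Lemma~\ref{algcore} only for forgeries with $\sigma_4^\star\neq 0$. $\mathsf{CoreForge}$ as written does not check $\sigma_4$, and the choice $\sigma_4^\star=0$, $\sigma_3^\star=\sigma_5^\star=\sigma_1^\star\sigma_2^\star$ wins for every $r^\star$.
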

\begin{proof}
Let $\mathcal{A}$ be a PPT adversary in the
$\mathsf{EUF\text{-}CMA}^{\neg\mathsf{DV}}$ game. We proceed via a sequence of
games.

\noindent\textbf{Game~0:}
This is the real $\mathsf{EUF\text{-}CMA}^{\neg\mathsf{DV}}$ experiment for
\textsc{SILMARILS}. The adversary receives $(w_0,w_1)$, has access to the
signing oracle, and interacts with $H$ as a random oracle. Let $\mathsf{Succ}_0$
denote the event that $\mathcal{A}$ outputs a valid forgery
$(M^\star,\sigma_{\mathsf{alg}}^\star)$ on a message $M^\star$ not previously
signed.

\noindent\textbf{Game~1:}
Replace $\mathrm{HMAC}_{k_{\mathsf{sig}}}$ by a truly random function
$F : \{0,1\}^\ast \to \mathbb{F}_p$. All other aspects remain unchanged.
Let $\mathsf{Succ}_1$ be the forgery event in Game~1. By a standard PRF
reduction, there exists a PPT distinguisher $\mathcal{B}_2$ such that
\(
  \bigl|\Pr[\mathsf{Succ}_0] - \Pr[\mathsf{Succ}_1]\bigr|
  \;\leq\;
  \mathsf{Adv}^{\mathsf{PRF}}_{\mathrm{HMAC}}(\mathcal{B}_2).
\)
From now on we analyze Game~1, where $n := F(M)$ and $r := H(M,n)$.

\noindent\textbf{Game~2:}
We now make explicit how the simulator programs the random oracle $H$.
A simulator $\mathcal{B}$ maintains a table $\mathcal{T}_H$.
On a signing query $M$, it: 1) samples $n \xleftarrow{\$} \mathbb{F}_p$ as $F(M)$; 2) samples $r \xleftarrow{\$} \mathbb{F}_p$; 3) programs $H(M,n) := r$ in $\mathcal{T}_H$; 4) samples $\alpha,\beta,b,d \xleftarrow{\$} \mathbb{F}_p^\ast$,
        sets $\varepsilon=\alpha\beta$ and
        $(\varepsilon_0,\varepsilon_1)=\mathsf{SSS}_{2,2}(\varepsilon)$; 5) computes $K'=\mathrm{HMAC}_{K}(M)$ and
        $(K_0',K_1')=\mathsf{SSS}_{2,2}(K')$; 6) computes $(\sigma_1,\ldots,\sigma_5)$ exactly as in $\mathsf{Sign}$; 7) returns $\sigma_{\mathsf{alg}}$ to $\mathcal{A}$.
For any other $H$-query $(x,y)$ not equal to $(M,n)$ for a signed message,
$\mathcal{B}$ returns a fresh uniform value in $\mathbb{F}_p$, consistent with
$\mathcal{T}_H$.
This perfectly simulates Game~1, so
\(
  \Pr[\mathsf{Succ}_1] = \Pr[\mathsf{Succ}_2].
\)

\noindent\textbf{Analysis of a forgery in Game~2:}
Suppose $\mathcal{A}$ outputs a forgery
$(M^\star,\sigma_{\mathsf{alg}}^\star)$ with
$\sigma_{\mathsf{alg}}^\star = (\sigma_1^\star,\ldots,\sigma_5^\star)$.
Let $n^\star := F(M^\star)$ and $r^\star := H(M^\star,n^\star)$.
Since verification accepts, the algebraic predicate
\(
  V_0^\star = \sigma_1^\star\sigma_2^\star - \sigma_5^\star,\), \(
  V_1^\star = \sigma_1^\star\sigma_2^\star - \sigma_3^\star + r^\star\sigma_4^\star,
\)
satisfies $\mathsf{SSS}_{2,2}^{-1}(V_0^\star,V_1^\star)=0$.
We distinguish two cases.

\noindent\emph{Case 1: $\mathcal{A}$ never queried $H$ on $(M^\star,n^\star)$.}
Then $r^\star$ is uniform and independent of $\mathcal{A}$'s view. For any fixed
$(\sigma_1^\star,\ldots,\sigma_5^\star)$, the algebraic predicate holds for at
most one value of $r^\star$, so
\(
  \Pr[\mathsf{Succ}_2 \wedge \text{Case 1}] \le \frac{1}{p}.
\)

\noindent\emph{Case 2: $\mathcal{A}$ queried $H$ on $(M^\star,n^\star)$.}
Let this be the first such query. At that moment, the simulator chooses
$r^\star$ uniformly. Conditioned on this choice, the distribution of
$(\sigma_1^\star,\ldots,\sigma_5^\star)$ is identical to that in the algebraic
core experiment. Define $\mathcal{B}_1$ as the algebraic-core adversary that
runs $\mathcal{A}$ and outputs the forgery in Case~2. Then
\(
  \Pr[\mathsf{Succ}_2 \wedge \text{Case 2}]
  \;\leq\;
  \mathsf{Adv}^{\mathsf{core}}(\mathcal{B}_1).
\)

If $\mathcal{A}$ ever forces $H(x)=H(x')$ for $x\neq x'$, then we obtain a
collision in $H$. Define $\mathcal{B}_3$ to output such a pair. Then
\(
  \Pr[\text{$\mathcal{A}$ causes a collision in $H$}]
  \;\leq\;
  \mathsf{Adv}^{\mathsf{CR}}_{H}(\mathcal{B}_3).
\)
Combining the two cases,
\(
  \Pr[\mathsf{Succ}_2]
  \;\leq\;
  \mathsf{Adv}^{\mathsf{core}}(\mathcal{B}_1)
  + \frac{1}{p}
  + \mathsf{Adv}^{\mathsf{CR}}_{H}(\mathcal{B}_3).
\)
Together with the transition from Game~0 to Game~1,
\[
  \mathsf{Adv}^{\mathsf{EUF\text{-}CMA}^{\neg\mathsf{DV}}}_{\mathsf{SILMARILS}}(\mathcal{A})
  \;\leq\;
  \mathsf{Adv}^{\mathsf{PRF}}_{\mathrm{HMAC}}(\mathcal{B}_2)
  + \mathsf{Adv}^{\mathsf{core}}(\mathcal{B}_1)
  + \mathsf{Adv}^{\mathsf{CR}}_{H}(\mathcal{B}_3)
  + \frac{1}{p}.
\]
By Lemma~\ref{algcore} and the assumed security of HMAC and $H$, the right-hand
side is negligible. Hence \textsc{SILMARILS} is
$\mathsf{EUF\text{-}CMA}^{\neg\mathsf{DV}}$ secure.
\hfill$\square$
\end{proof}
\subsection{$\mathsf{EUF\text{-}CMA}^{\neg\mathsf{DV}}$ Security of Two-Party \textsc{SILMARILS} in the QROM}
\label{subsec:qrom-security}
We now extend the $\mathsf{EUF\text{-}CMA}^{\neg\mathsf{DV}}$ analysis of \textsc{SILMARILS} to the QROM. The classical ROM proof relies on lazy
sampling and post-hoc programming of the random oracle at points
$(M,n)$, where $n$ is the output of a PRF. In the QROM, the adversary
may query the random oracle in superposition, and therefore the
classical programming argument is no longer sound. To obtain a valid
security proof, we rely on the \emph{measure-and-reprogram} technique of
Zhandry~\cite{Zhandry2012} and Unruh~\cite{Unruh2015}, which allows
programming the random oracle at a single point while bounding the
adversary's distinguishing advantage.
\begin{lemma}[Quantum Measure-and-Reprogram \cite{Zhandry2012}, \cite{Unruh2015}]\label{measurereplemma}
Let $\mathcal{A}$ be a quantum algorithm that makes at most $q_H$ queries
to a random oracle $H : \mathcal{X} \to \mathcal{Y}$, where
$\mathcal{Y}$ is a finite set. On input public parameters $\mathsf{PP}$,
$\mathcal{A}^H$ outputs a pair $(x^\star,z) \in \mathcal{X} \times \mathcal{Z}$.
Let $\mathsf{V}(\mathsf{PP},x^\star,z,H(x^\star)) \in \{0,1\}$ be a (classical) predicate.
Consider the following two experiments:
\begin{description}
  \item[Real experiment ($E_1$):]
    Sample a random oracle $H : \mathcal{X} \to \mathcal{Y}$, run
    $(x^\star,z) \leftarrow \mathcal{A}^H(\mathsf{PP})$, and output
    $(x^\star,z,H(x^\star))$.

  \item[Measure-and-reprogram experiment ($E_2$):]
    Sample a random oracle $H$ and run $\mathcal{A}^H(\mathsf{PP})$, but let a
    simulator $\mathcal{S}$ do the following: at one adaptively chosen oracle
    query (among at most $q_H$), measure the query register to obtain
    $\tilde{x}$; choose any $y^\star \in \mathcal{Y}$; define
    $H'(\tilde{x}) = y^\star$ and $H'(x)=H(x)$ for $x\neq\tilde{x}$; and let
    $\mathcal{A}$ continue with $H'$.
    If $(x^\star,z)$ is the final output, return $(x^\star,z,H'(x^\star))$.
\end{description}
Then a simulator $\mathcal{S}$ exists making at most $q_H + O(1)$ oracle queries and
\begin{eqnarray*}
  \Big|
  \Pr\big[ \mathsf{V}(\mathsf{PP},x^\star,z,H(x^\star)) = 1 \big| E_1 \big]
  -
  \Pr\big[ \mathsf{V}(\mathsf{PP},x^\star,z,H'(x^\star)) = 1 \big| E_2 \big]
\Big|
  \leq  O\!\left(\frac{q_H^2}{|\mathcal{Y}|}\right).
\end{eqnarray*}
In particular, from the adversary's point of view, programming the
random oracle at a single point $\tilde{x}$ (chosen by measuring one of
its queries) and assigning it an arbitrary value $y^\star$ is
indistinguishable from the original random-oracle experiment up to
statistical distance $O(q_H^2/|\mathcal{Y}|)$.
\end{lemma}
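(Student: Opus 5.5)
We will read the statement as existential in the simulator: it suffices to exhibit \emph{one} $\mathcal{S}$ that makes at most $q_H+O(1)$ queries, measures one adaptively chosen query, reprograms $H$ at the measured point, and keeps the difference between $E_1$ and $E_2$ within $O(q_H^2/|\mathcal{Y}|)$. The plan has two stages. First we build a simulator for which the bound is essentially trivial. Then we control how the bound degrades when the programmed value $y^\star$ is replaced by an arbitrary value that is independent of the adversary's queries, which is the form needed later in Section~\ref{subsec:qrom-security}.

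\textbf{Step 1 (appended query).} Let $\mathcal{S}$ run $\mathcal{A}^H(\mathsf{PP})$ unchanged until it outputs $(x^\star,z)$. Then $\mathcal{S}$ makes one additional classical query on the register holding $x^\star$. This is the ``one adaptively chosen query'', and it accounts for the $+O(1)$ term. Measuring this register yields $\tilde{x}=x^\star$. Because the measurement and any reprogramming happen after $\mathcal{A}$ has produced its output, the joint state of $(x^\star,z)$ is identical in $E_1$ and $E_2$. We will justify this with Lemma~\ref{lem:contractivity-prelim}: the post-processing is a CPTP map applied to an identical state. The only remaining difference is the value in the third output coordinate, $H(x^\star)$ versus $H'(x^\star)=y^\star$.

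\textbf{Step 2 (choice of $y^\star$).} If $\mathcal{S}$ takes $y^\star:=H(\tilde{x})$, read from the appended query, then $H'=H$. The two experiments then coincide and the difference in the acceptance probability of $\mathsf{V}$ is $0\le O(q_H^2/|\mathcal{Y}|)$. This already proves the displayed inequality for the simulator that Step~1 constructs.

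\textbf{Step 3 (an independent $y^\star$).} For the ``in particular'' clause we must handle a $y^\star$ that is not read from $H$. Here we couple the experiments through Lemma~\ref{lem:recording}: the quantum run of $\mathcal{A}$ is replaced by a classical transcript $T$ of at most $q_H$ query--answer pairs. Let $\mathsf{Bad}$ be the event that $\tilde{x}$ lies in $T$ and its recorded answer is correlated with the rest of the transcript. We will charge $\mathsf{Bad}$ to a collision-type event among the at most $q_H+1$ recorded outputs. Theorem~\ref{thm:q-birthday} bounds this event by $(q_H+1)^2/(2|\mathcal{Y}|)=O(q_H^2/|\mathcal{Y}|)$. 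Outside $\mathsf{Bad}$, the value $H(\tilde{x})$ is uniform given the transcript, so swapping it for $y^\star$ does not change the distribution of $(x^\star,z)$. Lemma~\ref{lem:quantum-coupling-prelim} then turns the bound on $\mathsf{Bad}$ into the stated additive bound.

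\textbf{Main obstacle.} Step~3 is the hard part. When $\mathcal{A}$ has genuinely queried $x^\star$ in superposition and the predicate $\mathsf{V}$ tests consistency with what $\mathcal{A}$ learned, replacing $H(x^\star)$ by an arbitrary $y^\star$ can change $\Pr[\mathsf{V}=1]$ by a constant. The reduction to the birthday event is therefore the delicate point. The first thing I would check is whether the compressed-oracle database records $x^\star$ with nonnegligible amplitude. If it does, the additive bound for arbitrary $y^\star$ can fail, and the argument must fall back to the simulator of Step~2, for which the bound holds exactly.
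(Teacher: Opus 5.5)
The paper gives no proof of this lemma; it only cites Zhandry and Unruh. Your argument therefore has to carry the claim on its own, and it does not.

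Steps~1--2 establish only the degenerate case $y^\star=H(\tilde{x})$, where $H'=H$ and nothing is reprogrammed. Under your lenient reading (an extra post-hoc query at an unconstrained point), you could even query some $\tilde{x}\neq x^\star$ after $\mathcal{A}$ halts and get difference $0$ for \emph{every} $y^\star$. That shows this reading empties the lemma rather than proves it. The statement itself has $\tilde{x}$ come from measuring one of $\mathcal{A}$'s own queries (``among at most $q_H$'', ``one of its queries''). Its last clause allows an arbitrary $y^\star$, and Game~2 in Section~\ref{subsec:qrom-security} needs a fresh uniform $y^\star=r$. So everything rests on Step~3. Your ``Main obstacle'' paragraph anticipates the failure, but falling back to Step~2 abandons the lemma rather than rescuing it.

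Step~3 fails at its central claim that, outside $\mathsf{Bad}$, $H(\tilde{x})$ is uniform given $T$. This is false as soon as $\tilde{x}$ is in $T$ at all, because its answer is then recorded there, and $\mathcal{A}$ can copy it into $z$ with no collision anywhere. Concretely, let $\mathcal{A}$ classically query $x_0$ and output $(x_0,H(x_0))$, and let $\mathsf{V}=1$ iff $z$ equals the hash coordinate. Then $\Pr[\mathsf{V}=1\mid E_1]=1$, while your Step~1 simulator with an independent $y^\star$ accepts with probability $1/|\mathcal{Y}|$. Any version of $\mathsf{Bad}$ that makes your claim true must contain the event $\tilde{x}\in T$, which here has probability $1$. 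That is not a collision event, and Theorem~\ref{thm:q-birthday} says nothing about it.

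Two further steps are unsupported. First, Lemma~\ref{lem:recording}, even taken at face value, gives only the marginal law of $T$, not its joint law with $\mathcal{A}$'s output. You therefore cannot condition on $T$ as in classical lazy sampling. Second, Lemma~\ref{lem:quantum-coupling-prelim} points the wrong way. It bounds an event's probability by a trace distance, whereas you need to bound a difference of acceptance probabilities by $\Pr[\mathsf{Bad}]$.

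Measuring inside $\mathcal{A}$'s run does not help either. Suppose $\mathcal{A}$ queries $\tfrac{1}{\sqrt2}(\lvert x_a\rangle+\lvert x_b\rangle)\lvert 0\rangle$, queries again to uncompute, and measures in the basis $\tfrac{1}{\sqrt2}(\lvert x_a\rangle\pm\lvert x_b\rangle)$. A computational-basis measurement of either query then lowers the probability of outcome ``$+$'' from $1$ to $1/2$, whatever $y^\star$ and $|\mathcal{Y}|$ are. So the intended additive $O(q_H^2/|\mathcal{Y}|)$ claim for arbitrary $y^\star$ is false, and no sharper collision estimate can rescue Step~3.

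What is known is the measure-and-reprogram theorem of Don, Fehr, Majenz and Schaffner (CRYPTO 2019). Their simulator measures a uniformly random one of the $q_H+1$ queries, counting the output as the last. It then lets the reprogrammed value answer either that query or only later ones, chosen at random. It achieves $\Pr[\tilde{x}=x_0\wedge\mathsf{V}(\tilde{x},y^\star,z)]\ge(2q_H+1)^{-2}\Pr[x^\star=x_0\wedge\mathsf{V}(x^\star,H(x^\star),z)]$ for every $x_0$ and uniformly random $y^\star$. This is a multiplicative loss, not an additive distance, and a correct proof has to target that form.
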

\begin{theorem}[$\mathsf{EUF\text{-}CMA}^{\neg\mathsf{DV}}$ Security in the QROM]\label{thm:silmarils-qrom}
Assume $H$ is modeled as a quantum-accessible random oracle and is
        collision-resistant against quantum adversaries and $\mathrm{HMAC}$ is a quantum-secure PRF keyed by $k_{\mathsf{sig}}$.
Then, \textsc{SILMARILS} is $\mathsf{EUF\text{-}CMA}^{\neg\mathsf{DV}}$ secure in the QROM. More precisely,
for any quantum polynomial time (QPT) adversary $\mathcal{A}$ making at most $q_H$ quantum queries
to $H$, there exist QPT algorithms $\mathcal{B}_1,\mathcal{B}_2,
\mathcal{B}_3$ such that
\[
  \mathsf{Adv}^{\mathsf{EUF\text{-}CMA^{\neg\mathsf{DV}}}}_{\mathsf{SILMARILS}}(\mathcal{A})
  \;\leq\;
  \mathsf{Adv}^{\mathsf{core}}(\mathcal{B}_1)
  + \mathsf{Adv}^{\mathsf{PRF}}_{\mathrm{HMAC}}(\mathcal{B}_2)
  + \mathsf{Adv}^{\mathsf{CR}}_{H}(\mathcal{B}_3)
  + O\!\left(\frac{q_H^2}{p}\right).
\]
\end{theorem}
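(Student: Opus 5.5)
The plan is to mirror the classical ROM game sequence, replacing the lazy-sampling and programming steps with their quantum counterparts: Lemma~\ref{measurereplemma} for programming, and Lemma~\ref{lem:recording} with Theorem~\ref{thm:q-birthday} for collisions. The final combination step will invoke Lemma~\ref{algcore}.

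\textbf{Game~0} is the real $\mathsf{EUF\text{-}CMA}^{\neg\mathsf{DV}}$ experiment, in which $\mathcal{A}$ has superposition access to $H$.

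\textbf{Game~1} replaces $\mathrm{HMAC}_{k_{\mathsf{sig}}}$ by a truly random function $F$. Since $k_{\mathsf{sig}}$ is only used classically by the signing oracle and the final verifier, a QPT distinguisher $\mathcal{B}_2$ against the quantum-secure PRF simulates everything else and forwards classical queries. This gives a loss of at most $\mathsf{Adv}^{\mathsf{PRF}}_{\mathrm{HMAC}}(\mathcal{B}_2)$.

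\textbf{Game~2} answers signing queries with $r$ obtained by the simulator's own classical evaluation $H(M,F(M))$. No reprogramming is needed here, because the signer genuinely knows $H$; this avoids the classical post-hoc programming step entirely.

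The forgery is then analysed through the predicate
\[
\mathsf{V}(\mathsf{PP},x^\star,z,y) = [\,\mathsf{SSS}_{2,2}^{-1}(\sigma_1^\star\sigma_2^\star-\sigma_5^\star,\ \sigma_1^\star\sigma_2^\star-\sigma_3^\star+y\sigma_4^\star)=0\,],
\]
where $x^\star=(M^\star,n^\star)$ and $z=\sigma^\star_{\mathsf{alg}}$. We apply Lemma~\ref{measurereplemma}, with the reprogrammed value $y^\star$ chosen uniformly in $\mathbb{F}_p$ rather than arbitrarily. In the reprogrammed experiment $E_2$, the value $H'(x^\star)$ splits into two cases.
\begin{itemize}
\item If the measured query $\tilde x$ differs from $x^\star$, then $x^\star$ has never been queried by $\mathcal{A}$ in a recorded way. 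By Lemma~\ref{lem:recording}, $H(x^\star)$ is then uniform given the transcript.
\item If $\tilde x=x^\star$, then $y^\star$ is a fresh uniform value fixed after measurement, and from that point on the situation is exactly the $\mathsf{CoreForge}$ experiment. We will build $\mathcal{B}_1$ to run $\mathcal{A}$ in this way and output its forgery.
\end{itemize}
In both cases the final value $r^\star$ is uniform and independent of the forgery tuple conditioned on the recorded view. Since $V^\star$ is affine in $r^\star$ with nonzero coefficient $w_0\sigma_4^\star/(w_0-w_1)$ (note that $\sigma_4^\star\neq 0$ is checked), the success probability is at most $1/p$, matching Lemma~\ref{algcore}.

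The event that $\mathcal{A}$ produces two distinct oracle inputs with equal outputs in the compressed transcript, for example to alias $(M^\star,n^\star)$ onto a signed point, is charged to $\mathsf{Adv}^{\mathsf{CR}}_H(\mathcal{B}_3)$. Here $\mathcal{B}_3$ simulates the game and extracts the colliding pair from the recorded database; by Theorem~\ref{thm:q-birthday} this term is in any case $O(q_H^2/p)$.

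Summing the PRF loss, the measure-and-reprogram loss $O(q_H^2/p)$, the core term, and the collision term gives the stated bound. The additive $1/p$ from the unqueried case is absorbed into $O(q_H^2/p)$.

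The main obstacle is justifying the claim that, in the case where the measured query is not $x^\star$, $H(x^\star)$ is uniform given $\mathcal{A}$'s quantum state. Superposition queries may have touched $x^\star$ with small amplitude without this ever being recorded by a measurement. My first attempt would use the compressed-oracle database: $\mathcal{A}$ succeeds only if its output is correlated with $H(x^\star)$. Zhandry's bound then says that the probability that the database ends up containing $x^\star$ with the right relation, without $x^\star$ having been queried, costs only $O(q_H/p)$ beyond the $1/p$ guessing probability. The remaining cost is absorbed into the $O(q_H^2/p)$ term of Lemma~\ref{measurereplemma}, whose multiplicative $O(q_H)$-style losses I would fold into the big-$O$ as the statement permits.
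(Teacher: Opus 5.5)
\textbf{The gap is in your forgery analysis.} Your PRF hop is fine. Your remark that signing queries need no reprogramming is correct: the challenger simply evaluates $H(M,F(M))$ classically. This is simpler than the paper's Game~2, which reprograms every signing point and pays a union bound for it. In the forgery analysis, however, you adopt the paper's queried/unqueried split and inherit its weak step, and neither of your two cases holds as stated.
\begin{itemize}
\item The event $\tilde x\neq x^\star$ concerns only one of the $q_H$ queries. It says nothing about whether other queries touched $x^\star$, so it does not make $H(x^\star)$ uniform given $\mathcal{A}$'s state.
\item When $\tilde x=x^\star$, $\mathcal{A}$ keeps running against $H'$ and may query $x^\star$ again, learning $y^\star$ before it chooses $\sigma^\star_{\mathsf{alg}}$. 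That is not $\mathsf{CoreForge}$, where $r^\star$ is drawn after the forgery is output. The paper's Case~2 makes the same identification, so you cannot borrow it.
\end{itemize}
A quick diagnostic: nowhere do you use that $n^\star$ is secret. Verbatim, your argument would also ``prove'' unforgeability of the public-receipt variant $r=H(M)$ from the paper's Remark, which falls to one classical query $H(M')$ followed by the deterministic algebraic forgery. Lemma~\ref{measurereplemma} cannot rescue this either. For a uniformly fresh $y^\star$, the known measure-and-reprogram guarantee is multiplicative, not an additive $O(q_H^2/p)$ closeness of $E_1$ and $E_2$.

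\textbf{The missing idea is the secrecy of $n^\star$.} Because $M^\star$ is never signed, $n^\star=F(M^\star)$ enters nowhere in $\mathcal{A}$'s view and can be sampled after $\mathcal{A}$ halts. Hence any classical point extracted from the execution equals $(M^\star,n^\star)$ with probability at most $1/p$. With this you can drop the case split, $\mathcal{B}_1$ and $\mathcal{B}_3$, as follows.
\begin{itemize}
\item Let $t$ be the unique value of $r^\star$ giving $V^\star=0$; your nonzero-coefficient observation gives uniqueness.
\item Run Game~1 with Zhandry's compressed oracle and measure its database $D$ when $\mathcal{A}$ halts. At most $q_H$ entries of $D$ have first coordinate $M^\star$, since the signer's own queries sit at signed messages. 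So $\Pr[D(M^\star,n^\star)=t]\le q_H/p$.
\item Zhandry's extraction lemma (this quantitative statement, rather than Lemma~\ref{lem:recording}, is what you need) gives $\sqrt{\Pr[\mathsf{win}]}\le\sqrt{q_H/p}+\sqrt{1/p}$, hence $\Pr[\mathsf{win}]\le 2(q_H+1)/p$.
\end{itemize}
You therefore get $\mathsf{Adv}^{\mathsf{PRF}}_{\mathrm{HMAC}}(\mathcal{B}_2)+2(q_H+1)/p$, which implies the stated bound since the remaining terms are nonnegative. If you prefer measure-and-reprogram, the Don--Fehr--Majenz--Schaffner bound $\Pr[\mathsf{win}]\le(2q+1)^2\Pr[\tilde x=x^\star\wedge\mathsf{V}]$ also works, with $q$ the total number of $H$-queries in the game. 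But folding that multiplicative factor into the big-$O$, as you propose, is legitimate only after showing $\Pr[\tilde x=x^\star\wedge\mathsf{V}]\le 1/p$, and that again is exactly the secrecy of $n^\star$.
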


\begin{proof}
Let $\mathcal{A}$ be a QPT adversary against \textsc{SILMARILS} in the
$\mathsf{EUF\text{-}CMA}^{\neg\mathsf{DV}}$ game, making at most $q_H$ quantum queries to the random oracle
$H$. We show how to bound
$\mathsf{Adv}^{\mathsf{EUF\text{-}CMA^{\neg\mathsf{DV}}}}_{\mathsf{SILMARILS}}(\mathcal{A})$
via a sequence of games and reductions.

\noindent\textbf{Game 0 (Real QROM $\mathsf{EUF\text{-}CMA}^{\neg\mathsf{DV}}$ game):}
In Game~0, we run the standard $\mathsf{EUF\text{-}CMA}^{\neg\mathsf{DV}}$ experiment for
\textsc{SILMARILS} in the QROM: 1) A key pair $(\mathsf{pk},\mathsf{sk})$ is generated honestly. 2) The adversary $\mathcal{A}$ has quantum oracle access to $H$ and  classical access to a signing oracle $\mathsf{Sign}_{\mathsf{sk}}(\cdot)$. 3) Eventually, $\mathcal{A}$ outputs a pair $(M^\star,\sigma_{\mathsf{alg}}^\star)$. 4) The experiment outputs $1$ (success) if
        $\mathsf{Verify}_{\mathsf{pk}}(M^\star,\sigma_{\mathsf{alg}}^\star)=1$ and
        $M^\star$ was never queried to the signing oracle.
Let
\(
  \mathsf{Adv}_0
  := \Pr[\text{Game 0 outputs }1]
  = \mathsf{Adv}^{\mathsf{EUF\text{-}CMA^{\neg\mathsf{DV}}}}_{\mathsf{SILMARILS}}(\mathcal{A}).
\)

\noindent\textbf{Game 1 (Replace HMAC by a quantum-secure PRF):}
In the real scheme, the value $n$ used to derive $r$ is computed as
$n = \mathrm{HMAC}_{k_{\mathsf{sig}}}(M)$. By assumption,
$\mathrm{HMAC}$ is a quantum-secure PRF. We define Game~1 as follows: 1) Replace $\mathrm{HMAC}_{k_{\mathsf{sig}}}$ by a truly random
        function $F : \{0,1\}^\ast \to \mathbb{F}_p$. 2) For each signing query $M$, set $n := F(M)$ and then derive
        $r := H(M,n)$ as in the real scheme.
By the quantum PRF security of HMAC, there exists a QPT distinguisher
$\mathcal{B}_2$ such that
\(
  |\mathsf{Adv}_1 - \mathsf{Adv}_0|
  \;\leq\;
  \mathsf{Adv}^{\mathsf{PRF}}_{\mathrm{HMAC}}(\mathcal{B}_2),
\)
where $\mathsf{Adv}_1$ is the success probability of $\mathcal{A}$ in
Game~1. Hence it suffices to bound $\mathsf{Adv}_1$ and consider the idealized setting where $F$ is a truly
random function, independent of $H$.

\noindent\textbf{Game 2 (Measure-and-reprogram $H$ at signing points).}
In Game~1, for each signing query $M$, the scheme computes
$n := F(M)$ and then sets $r := H(M,n)$. In the QROM, $\mathcal{A}$ may
have queried $H$ on superpositions that include $(M,n)$, so we cannot
naively program $H(M,n)$ after the fact.
We now define Game~2, where a simulator $\mathcal{S}$ uses the
measure-and-reprogram lemma (Lemma~\ref{measurereplemma}) to
program $H$ at points of the form $(M,n)$ for signing queries $M$: 1) For each signing query $M$, the simulator samples
        $n := F(M)$ and a fresh $r \xleftarrow{\$} \mathbb{F}_p$. 2) The simulator applies the measure-and-reprogram procedure to the
        quantum random oracle $H$ at the point $(M,n)$, reprogramming    $H(M,n)$ to $r$. 3) It then computes the algebraic signature components
        $(\sigma_1,\ldots,\sigma_5)$ exactly as in the real signing
        algorithm, using this $r$, and returns
        $\sigma_{\mathsf{alg}} = (\sigma_1,\ldots,\sigma_5)$ to $\mathcal{A}$.
By Lemma~\ref{measurereplemma}, for each programmed point
$(M,n)$, the statistical distance between the real QROM experiment and
the measure-and-reprogram experiment is at most
$O(q_H^2/|\mathcal{Y}|)$, where $\mathcal{Y}$ is the output space of
$H$. In our setting, $H$ outputs elements of $\mathbb{F}_p$, so
$|\mathcal{Y}| = p$. Since the number of signing queries is polynomial
and $q_H$ bounds the total number of quantum queries, a union bound
yields
\(
  |\mathsf{Adv}_2 - \mathsf{Adv}_1|
  \;\leq\;
  O\!\left(\frac{q_H^2}{p}\right),
\)
where $\mathsf{Adv}_2$ is the success probability of $\mathcal{A}$ in
Game~2.
In Game~2, the distribution of signatures seen by $\mathcal{A}$ is
identical to the real scheme: for each signing query $M$, the value
$r = H(M,n)$ is uniform in $\mathbb{F}_p$ (from $\mathcal{A}$'s point of
view) and the signature is computed honestly from $r$.

\noindent\textbf{Game 3 (Explicit simulation of signatures):}
We now make explicit the fact that in Game~2, the simulator chooses
$r \xleftarrow{\$} \mathbb{F}_p$ and then programs $H(M,n)$ to $r$.
Thus, we can equivalently define Game~3 as: 1) For each signing query $M$, sample $n := F(M)$ and
        $r \xleftarrow{\$} \mathbb{F}_p$. 2) Program $H(M,n) := r$ via measure-and-reprogram. 3) Compute $(\sigma_1,\ldots,\sigma_5)$ exactly as in the real        signing algorithm using this $r$.
By construction, $\mathsf{Adv}_3 = \mathsf{Adv}_2$. We emphasize that
in Game~3, for each signed message $M$, the pair $(n,r)$ is uniform and
independent of $\mathcal{A}$'s view, conditioned only on the fact that
$H(M,n)$ has been programmed to $r$.

Let $(M^\star,\sigma_{\mathsf{alg}}^\star)$ be the forgery output by $\mathcal{A}$ in
Game~3, where
$\sigma_{\mathsf{alg}}^\star = (\sigma_1^\star,\ldots,\sigma_5^\star)$ and
$\mathsf{Verify}_{\mathsf{pk}}(M^\star,\sigma_{\mathsf{alg}}^\star)=1$. Let
$n^\star := F(M^\star)$ and $r^\star := H(M^\star,n^\star)$.
We distinguish two cases, depending on whether $\mathcal{A}$ has
queried $H$ on $(M^\star,n^\star)$.
\emph{Case 1: $\mathcal{A}$ never queried $H$ on $(M^\star,n^\star)$.}
In this case, from $\mathcal{A}$'s point of view, $r^\star$ is uniform
in $\mathbb{F}_p$ and independent of its entire view (including all
signatures and oracle answers), because: 1) $F$ is a truly random function, so $n^\star = F(M^\star)$ is
        uniform and independent of all other values; 2) $H$ is a random oracle, and $(M^\star,n^\star)$ was never
        queried, so $H(M^\star,n^\star)$ is uniform and independent.
The algebraic verification predicate of \textsc{SILMARILS} is a system
of equations over $\mathbb{F}_p$ that, for fixed
$(\sigma_1^\star,\ldots,\sigma_5^\star)$, is satisfied by at most one
value of $r^\star$. Therefore,
\(
  \Pr[\text{verification accepts in Case 1}]
  \;\leq\;
  \frac{1}{p}.
\)
\emph{Case 2: $\mathcal{A}$ queried $H$ on $(M^\star,n^\star)$.}
Let $(M^\star,n^\star)$ be the first such query. At the moment this
query is made, the simulator in Game~3 chooses $r^\star$ as a fresh
uniform element of $\mathbb{F}_p$ and programs $H(M^\star,n^\star)$ to
$r^\star$ via measure-and-reprogram. By
Lemma~\ref{measurereplemma}, the joint state of $\mathcal{A}$ and
the oracle after this programming step is statistically close to the
state in the ideal random-oracle experiment, with distinguishing
advantage at most $O(q_H^2/p)$.
Conditioned on this, the distribution of
$(\sigma_1^\star,\ldots,\sigma_5^\star,r^\star)$ in $\mathcal{A}$'s
view is identical to that in the algebraic core experiment, where
$r^\star$ is uniform and independent of the adversary's internal
randomness. Hence, any non-negligible probability that
$\mathcal{A}$ outputs a new tuple
$(M^\star,\sigma_1^\star,\ldots,\sigma_5^\star)$ such that the algebraic
predicate holds for $r^\star$ yields a non-negligible advantage for a
QPT adversary $\mathcal{B}_1$ in the algebraic core game. By Lemma~\ref{algcore}, this probability is bounded by
$\mathsf{Adv}^{\mathsf{core}}(\mathcal{B}_1) + 1/p$.

In both cases, we have implicitly assumed that $\mathcal{A}$ does not
exploit collisions in $H$. If $\mathcal{A}$ can force
$H(x) = H(x')$ for distinct $x \neq x'$ with non-negligible probability,
then we obtain a quantum collision-finder $\mathcal{B}_3$ against $H$,
with advantage $\mathsf{Adv}^{\mathsf{CR}}_H(\mathcal{B}_3)$.
Now, let $\mathsf{Adv}_3$ be the success probability of $\mathcal{A}$ in
Game~3. From the case analysis above, we obtain
\(
  \mathsf{Adv}_3
  \;\leq\;
  \mathsf{Adv}^{\mathsf{core}}(\mathcal{B}_1)
  + \mathsf{Adv}^{\mathsf{CR}}_H(\mathcal{B}_3)
  + \frac{1}{p}
  + O\!\left(\frac{q_H^2}{p}\right).
\)
Combining the game transitions, we have
\begin{eqnarray*}
\tiny
  \mathsf{Adv}^{\mathsf{EUF\text{-}CMA^{\neg\mathsf{DV}}}}_{\mathsf{SILMARILS}}(\mathcal{A})&=&\mathsf{Adv}_0 \leq \mathsf{Adv}_1
  + \mathsf{Adv}^{\mathsf{PRF}}_{\mathrm{HMAC}}(\mathcal{B}_2)
   \leq  \mathsf{Adv}_3
  + \mathsf{Adv}^{\mathsf{PRF}}_{\mathrm{HMAC}}(\mathcal{B}_2)
  + \\O\!\left(\frac{q_H^2}{p}\right)
   &\leq& \mathsf{Adv}^{\mathsf{core}}(\mathcal{B}_1)
  + \mathsf{Adv}^{\mathsf{PRF}}_{\mathrm{HMAC}}(\mathcal{B}_2)
  + \mathsf{Adv}^{\mathsf{CR}}_{H}(\mathcal{B}_3)
  + O\!\left(\frac{q_H^2}{p}\right).
\end{eqnarray*}
By Lemma~\ref{algcore} and the assumed quantum security of HMAC
and $H$, the right-hand side is negligible in the security parameter.
\hfill$\square$
\end{proof}
\section{Security Analysis of Three-Party \textsc{SILMARILS}}\label{Security_sec2}
We now state the main security theorem for our three-party
\textsc{SILMARILS} protocol. The theorem asserts that the protocol
implements the ideal three-party digital signature functionality
$f_{\mathsf{DS}}$ with error $\varepsilon = 1/p$.
\begin{theorem}[Security of Three-Party \textsc{SILMARILS}]\label{thm:3p-silmarils-unified}
Let $p$ be a prime and let all parties operate over $\mathbb{F}_p$.
Let $H$ be any function (deterministic, classical random oracle, or
quantum-accessible random oracle) with range $\mathbb{F}_p$, and define
\(
  x := H(M,\sigma_{\mathsf{alg}}).
\)
Consider the three-party \textsc{SILMARILS} protocol consisting of
Algorithms~\ref{sign}--\ref{alg:3p-verifytransfer}.
Then, for any (classical or quantum) adversary corrupting at most one
party, the protocol realizes the ideal digital-signature functionality
$f_{\mathsf{DS}}$ with distinguishing advantage
\(
  \varepsilon = \frac{1}{p}.
\)
In particular, the protocol satisfies the previously defined notions of
correctness, unforgeability, transferability, and secrecy, each with
error at most $1/p$, uniformly across the ROM, QROM, and Pure-IT models.
\end{theorem}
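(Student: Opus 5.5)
The plan is to reduce the statement to the four local properties. Theorem~\ref{main_thm} covers the classical Pure-IT and IT+ROM settings, and Theorem~\ref{thm:quantum-lemma-18-15} covers quantum adversaries in the QROM. Then each property is proved with error at most $1/p$ by an argument that treats $x=H(M,\sigma_{\mathsf{alg}})$ as a fixed element of $\mathbb{F}_p$ once the signer commits to it. This is what makes the bound uniform across models. The IC layer (Algorithms~\ref{alg:3p-sign}--\ref{alg:3p-verifytransfer}) only ever sees $x$, so how $H$ is instantiated matters only through the value it outputs, never through any property of $H$. A quantum adversary controlling one party still interacts with the honest parties only through classical messages. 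The probabilities below are over the honest parties' field randomness $(k_1,k_2,x',k_2',e)$, so they transfer verbatim to the quantum case.

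\textbf{Correctness and secrecy} ($\mathcal{P}_1,\mathcal{P}_2$ honest). Correctness is algebraic. The honest holder's broadcast passes $\mathcal{P}_1$'s check, and $\sigma_e=\sigma'+e\sigma=k_1(x'+ex)+k_2'+ek_2$, so an honest $\mathcal{P}_3$ says ``accept''. If a corrupt $\mathcal{P}_3$ says ``reject'', $\mathcal{P}_1$ broadcasts $(x,\sigma)$. Either way $z_2=x=x_1$. For secrecy, $\mathcal{P}_3$'s view consists of $(k_1,k_2,k_2')$ and the broadcast $(e,x_e,\sigma_e)$, where $x_e=x'+ex$. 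Since $x'$ is uniform and independent, $x_e$ is uniform for every $x$. Also $\sigma_e$ is determined by $(k_1,k_2,k_2',e,x_e)$. Hence the view has the same distribution $Q$ for every $x_1$, giving $\varepsilon_{\mathrm{sec}}=0$. A possible later broadcast of $(x,\sigma)$ occurs only after a ``reject'' that $\mathcal{P}_3$ itself chose, and I would handle it as in \cite[Protocol~18.7]{IT-book}.

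\textbf{Unforgeability} ($\mathcal{P}_1,\mathcal{P}_3$ honest, $\mathcal{P}_2$ corrupt). The holder knows $(x,\sigma,x',\sigma')$, which are two points on the line $y=k_1 t+k_2$ (with different offsets), plus the challenge data it chose itself. To succeed it must submit $(\tilde x,\tilde\sigma)$ with $\tilde x\neq x$ and $\tilde\sigma=k_1\tilde x+k_2$. Conditioned on its view, $k_1$ remains uniform: for each candidate $k_1$ there is exactly one consistent $(k_2,k_2')$, and $x'$ is the masking value. This is the point to check most carefully, since the holder sees $\sigma$ and $\sigma'$, but $k_2'$ masks $\sigma'$, so $k_1$ is indeed hidden. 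Producing a valid $\tilde\sigma$ is then equivalent to guessing $k_1=(\tilde\sigma-\sigma)/(\tilde x-x)$, which succeeds with probability $1/p$. If $\mathcal{P}_2$ cheats in the challenge, $\mathcal{P}_1$ catches it and broadcasts the correct pair, and the same bound applies.

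\textbf{Transferability} ($\mathcal{P}_2,\mathcal{P}_3$ honest, $\mathcal{P}_1$ corrupt). I expect this to be the main obstacle. The case split follows Algorithm~\ref{alg:3p-ic-checks}. If $\mathcal{P}_1$ broadcasts a pair or the keys, both honest parties adopt consistent values, so $z_3=z_2$ with certainty. The remaining case is an ``accept'' from $\mathcal{P}_3$ while $\sigma\neq k_1x+k_2$. Write $\delta=\sigma-k_1x-k_2\neq0$ and $\delta'=\sigma'-k_1x'-k_2'$. Acceptance then requires $\delta'+e\delta=0$. Because $e$ is sampled by the honest $\mathcal{P}_2$ after $\mathcal{P}_1$ has committed to all its values, this happens with probability at most $1/p$. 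The subtle part is the broadcast bookkeeping: a corrupt signer could accuse an honest party, and I would verify that every such branch resets both honest parties to the same line. Combining the four bounds with $\varepsilon=1/p$ via Theorem~\ref{main_thm} or Theorem~\ref{thm:quantum-lemma-18-15} completes the proof.
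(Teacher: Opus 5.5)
Your unforgeability and transferability arguments are correct. The accusation branches you left to check do all resynchronize $\mathcal{P}_2$ and $\mathcal{P}_3$. Unlike the paper's proof, which imports all four properties wholesale from \cite[Protocol~18.7]{IT-book}, your arguments check the algorithms as actually written.

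The gap is in secrecy. In your correctness paragraph you say that when a corrupt $\mathcal{P}_3$ says ``reject'', $\mathcal{P}_1$ broadcasts $(x,\sigma)$. In the secrecy paragraph you then set this branch aside because $\mathcal{P}_3$ itself chose the reject. But secrecy must hold against a malicious $\mathcal{P}_3$. If a false reject really triggered that broadcast, $\mathcal{P}_3$'s signing-phase view would contain $x$, and $\varepsilon_{\mathrm{sec}}$ would be close to $1$. Your claim $\varepsilon_{\mathrm{sec}}=0$ would then be false, not merely unproved.

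The missing step is to show that this branch is unreachable when $\mathcal{P}_1,\mathcal{P}_2$ are honest. In that case $\mathcal{P}_1$'s check on $\mathcal{P}_2$ always passes and $\sigma_e=k_1x_e+k_2'+ek_2$ holds identically. So every reject is inconsistent, and $\mathcal{P}_1$ broadcasts ``$\mathcal{P}_3$ corrupt''. The branch that must then execute is the one where $\mathcal{P}_1$ broadcasts $(k_1,k_2)$, which is already in $\mathcal{P}_3$'s view, while $\mathcal{P}_2$ recomputes $\sigma$ and keeps $x$. Correctness still gives $z_2=x$.

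This requires the ``$\mathcal{P}_3$ corrupt'' case of Algorithm~\ref{alg:3p-ic-checks} to take precedence. Read literally, the if/else-if chain tests ``$\mathcal{P}_3$ says reject and $\mathcal{P}_1$ says accept'' first. Since $\mathcal{P}_1$ did broadcast ``accept'' on $\mathcal{P}_2$'s message, it would broadcast $(x,\sigma)$. The $(x,\sigma)$ reveal must be reachable only when $\mathcal{P}_1$ is corrupt, where secrecy is not required.

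Two smaller points. First, your final step does not give the advantage stated in the theorem. Theorem~\ref{main_thm} yields only $2\varepsilon$-security and Theorem~\ref{thm:quantum-lemma-18-15} only $O(\varepsilon)$, so plugging in $\varepsilon=1/p$ gives $2/p$. To reach $1/p$, feed your sharper bounds $\varepsilon_{\mathrm{corr}}=\varepsilon_{\mathrm{sec}}=0$ into the per-corruption case analysis. The corrupt-$\mathcal{P}_3$ case costs $\varepsilon_{\mathrm{corr}}+\varepsilon_{\mathrm{sec}}$, and the other two cases cost $\varepsilon_{\mathrm{uf}}$ and $\varepsilon_{\mathrm{trans}}$ respectively. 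The paper's proof makes the same leap silently.

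Second, in the unforgeability case, state explicitly that the signing phase terminates before $\mathcal{P}_3$ evaluates a malformed challenge. Suppose $\mathcal{P}_3$ announced its verdict on $(x'+ex+\Delta,\ \sigma'+e\sigma+g\Delta)$ with $\Delta\neq 0$. It would accept iff $k_1=g$, so $\mathcal{P}_2$ could test one guess and the bound would rise to $2/p$.
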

\begin{proof}
Fix a prime $p$ and let all parties operate over $\mathbb{F}_p$. Let
$H$ be any function with range $\mathbb{F}_p$ (deterministic, classical
random oracle, or quantum-accessible random oracle). For a given message
$M$ and honestly generated algebraic signature
$\sigma_{\mathsf{alg}} = (\sigma_1,\ldots,\sigma_5)$ produced by
Algorithm~\ref{sign}, the signer $\mathcal{P}_1$ computes
\(
  x := H(M,\sigma_{\mathsf{alg}}) \in \mathbb{F}_p
\)
and then runs Algorithms~\ref{alg:3p-sign}--\ref{alg:3p-ic-checks} to
authenticate $x$ using an IC protocol of the same
form as Protocol~18.7 in \cite{IT-book}.

\noindent\textbf{Reduction to the IC functionality:}
Condition on a fixed value of $x \in \mathbb{F}_p$ and on fixed secret
coefficients $(k_1,k_2) \in \mathbb{F}_p^2$ held by $\mathcal{P}_3$.
From the point of view of the IC layer, the three-party
\textsc{SILMARILS} protocol is exactly an execution of Protocol~18.7 in \cite{IT-book} on
the message $x$, where $\mathcal{P}_1$ plays the role of the sender, holding $x$ and
        the affine authentication pair $(x,\sigma)$ with
        $\sigma = k_1 x + k_2$; $\mathcal{P}_2$ plays the role of the receiver, who should
        obtain $z_2 = x$; $\mathcal{P}_3$ plays the role of the verifier, who holds
        $(k_1,k_2)$ and outputs $z_3 \in \{x,\bot\}$.
Algorithms~\ref{alg:3p-sign}--\ref{alg:3p-ic-checks} implement exactly
the challenge, consistency checks, and resolution rules of
Protocol~18.7 in \cite{IT-book}, specialized to the affine line
$\sigma = k_1 x + k_2$ over $\mathbb{F}_p$.
Crucially, the IC security properties (correctness, unforgeability,
transferability, secrecy) depend only on the algebraic structure of the
protocol over $\mathbb{F}_p$ and on the randomness of the IC coins
($k_1,k_2,x',k_2',e$), and \emph{not} on how $x$ was obtained. In
particular, the adversary never interacts with $H$ through the IC
interface: $x$ is simply a field element fixed by $\mathcal{P}_1$ before
the IC protocol starts. Therefore, the analysis of \cite[Protocol~18.7]{IT-book}   applies verbatim in all three models (ROM, QROM, Pure-IT).

\noindent\textbf{Correctness:}
If $\mathcal{P}_1$ and $\mathcal{P}_2$ are honest, then by the
correctness guarantee of Protocol~18.7, the IC protocol ensures that
$\mathcal{P}_2$ and $\mathcal{P}_3$ both output $x$ except with
probability at most $1/p$. In our instantiation, the resolution rules in
Algorithm~\ref{alg:3p-ic-checks} guarantee that whenever the IC protocol
accepts, we have $z_2 = x$ and $z_3 = x$. Thus the previously defined
correctness notion holds with error at most $1/p$.

\noindent\textbf{Unforgeability:}
If $\mathcal{P}_1$ and $\mathcal{P}_3$ are honest, then a malicious
$\mathcal{P}_2$ attempts to cause $\mathcal{P}_3$ to output some
$z_3 \notin \{x_1,\bot\}$, where $x_1$ is the value authenticated by
$\mathcal{P}_1$. In the IC abstraction, this corresponds to the receiver
trying to change the authenticated message while still passing the
verifier's affine check. By the soundness (unforgeability) guarantee of
Protocol~18.7 and Lemma~18.15 in \cite{IT-book}, the probability that a cheating receiver
can make the verifier accept a value different from the sender's input
is at most $1/p$. Hence the previously defined unforgeability condition
holds with error at most $1/p$.

\noindent\textbf{Transferability:}
If $\mathcal{P}_2$ and $\mathcal{P}_3$ are honest, then a malicious
$\mathcal{P}_1$ may try to make them disagree on the authenticated
value. In the IC abstraction, this is exactly the cheating-sender case:
the sender attempts to cause the receiver and verifier to output
different values. Protocol~18.7 guarantees that this happens with
probability at most $1/p$. Therefore, the previously defined
transferability notion holds with error at most $1/p$.

\noindent\textbf{Secrecy:}
If $\mathcal{P}_1$ and $\mathcal{P}_2$ are honest, then the view of
$\mathcal{P}_3$ during the IC protocol consists only of the affine
authentication data and the challenge transcript, which, by the privacy
guarantee of Protocol~18.7, is statistically independent of the
authenticated value $x_1$. Thus the previously defined secrecy notion
holds with error $0$ (and hence at most $1/p$).

In all of the above arguments, the value \(x\) is treated as an arbitrary element of \(\mathbb{F}_p\) fixed before the IC protocol begins. The correctness, unforgeability, transferability, and secrecy proofs for Protocol~18.7 and Lemma~18.15 in~\cite{IT-book} rely only on the fact that \(H\) outputs elements of \(\mathbb{F}_p\); they do not depend on any further structural property of \(H\). As a result, the same bounds apply whether \(H\) is instantiated as a deterministic function, a classical random oracle, or a quantum-accessible random oracle. The distinguishing advantage between the real three-party \textsc{SILMARILS} protocol and the ideal digital-signature functionality \(f_{\mathsf{DS}}\) is therefore at most \(1/p\) in all three models. This completes the proof.
\hfill$\square$
\end{proof}

\section{Performance Evaluation}\label{sec:performance}
We now evaluate the efficiency of two-party \textsc{SILMARILS}, covering key and
signature sizes, computational costs of signing and verification, and overall
communication overhead, and we compare these metrics to leading PQ signature
schemes to contextualize practical performance.  \emph{As \textsc{SILMARILS} in
two-party TDV mode is not a public-verifier signature scheme, these comparisons
to Dilithium, Falcon, and SPHINCS$^+$ reflect size and efficiency only, not
equivalent security or functionality.}

Let $p \approx 2^k$ denote the prime defining the field $\mathbb{F}_p$. All
public, secret, and signature components are single field elements, and thus
occupy $\log_2(p) \approx k$ bits. This yields exceptionally compact keys.
The  secret key consists of a single field element, giving
\(
|\mathsf{sk}| = k \text{ bits}
\).
The public key contains two field elements giving
\(
|\mathsf{pk}| = 2k \text{ bits}
\).
A signature consists of five field elements giving
\(
|\sigma| = 5k \text{ bits}
\).
Over a 256-bit prime field, we have $|\mathsf{sk}| = 256 \text{ bits} = 32 \text{B}$, $|\mathsf{pk}| = 512 \text{ bits} = 64 \text{B}$ and $|\sigma| = 1280 \text{ bits} = 160 \text{B}$.
Thus, even at the 256-bit security level, the public and secret keys remain
extremely compact, and the signature size remains close to that of ECDSA signature.
In addition, all arithmetic is carried out in $\mathbb{F}_p$ with $p \approx 2^{256}$, so a field multiplication costs roughly the same as a constant-time 256-bit modular multiply (e.g., via Montgomery reduction). Signing requires about $10$--$12$ such multiplications, while verification needs only $5$--$6$, making both operations far lighter than those in lattice or hash-based PQ schemes.
Table~\ref{tab:merged_comparison_enhanced} compares our concrete instantiation
to standardized PQ signatures at the 128- and 192-bit security levels. Since
\textsc{SILMARILS} is a TDV scheme with a different threat model, this comparison
is not meant to suggest equivalence, but to illustrate efficiency in blockchain
settings where PQC is more than sufficient but still expensive. A fair comparison with other DV
schemes is omitted due to space and the complexity of aligning models. Additional
benchmarks appear in \cite{silmarils-implementation}.
\begin{table}[tb]
\centering
\caption{Comparison of \textsc{SILMARILS} with other digital signature schemes.}\label{tab:merged_comparison_enhanced}
\resizebox{\textwidth}{!}{ 
\begin{tabular}{@{}l l S[table-format=4.0] S[table-format=4.0] S[table-format=5.0] l l@{}} 
\toprule
\textbf{Scheme} & \textbf{Security Type} & {\textbf{PK Size (B)}} & {\textbf{SK Size (B)}} & {\textbf{Sig Size (B)}} & \textbf{Who Can Verify and Cost} & \textbf{Key Assumptions} \\
\midrule
\rowcolor{lightgray}
\textsc{SILMARILS} & \textbf{Information-Theoretic} & 64 & 32 & 160 &\textbf{TDV}, $\approx 5-6$ field mults + 1 SSS recon & TRNG + SSS  \\
\addlinespace[0.5em] 

sr25519 & Classical, Like Schnorr & 32 & 64 & 64 &Public Verifier, 1 scalar mult + 1 hash & ECDLP over Curve2551 \\
\midrule
\multicolumn{7}{l}{\qquad\qquad\qquad\qquad\qquad\qquad\qquad\qquad\qquad\qquad\qquad\qquad\qquad\qquad\textbf{NIST Post-Quantum Candidates}} \\
\midrule
Dilithium-2 (L2) & Computational PQ & 1312 & 2528 & 2420 &Public Verifier, 1 matrix-vector mult + 1 hash & Module-LWE \\
Dilithium-3 (L3) & Computational PQ & 1952 & {4000\rlap{$^\text{*}$}} & 3293 &Public Verifier, matrix--vector mult & Module-LWE \\
Falcon-512 (L1) & Computational PQ & 897 & 1281 & 690 &Public Verifier, FFT + lattice decoding & SIS over NTRU lattices \\
SPHINCS$^+$ (Gen.) & Computational PQ & 64 & 128 & 49856 &Public Verifier, Many hashing + tree traversal & Hash-based (QROM) \\
SPHINCS$^+$-128s (L1) & Computational PQ & 32 & {64\rlap{$^\text{*}$}} & 7856 &Public Verifier, hash tree traversal & Hash-based \\
SPHINCS$^+$-128f (L1) & Computational PQ & 32 & {64\rlap{$^\text{*}$}} & 17088 &Public Verifier, hash tree traversal & Hash-based \\
SPHINCS$^+$-192s (L3) & Computational PQ & 48 & 96 & 16224 &Public Verifier, hash tree traversal & Hash-based \\
SPHINCS$^+$-192f (L3) & Computational PQ & 48 & 96 & 35664 &Public Verifier, hash tree traversal & Hash-based \\
SLH-DSA-(SHA2/SHAKE)-128-24\rlap{$^\text{**}$} \hspace{0.3cm}(L1, 224 sigs)  & Computational PQ & 32 & 48 & 3856 &Public Verifier, hash-based tree traversal & Hash-based (limited-signature) \\
SLH-DSA-(SHA2/SHAKE)-192-24\rlap{$^\text{**}$} \hspace{0.3cm}(L3, 224 sigs) & Computational PQ & 48 & 72 & 7752 &Public Verifier, hash-based tree traversal & Hash-based (limited-signature) \\
SLH-DSA-(SHA2/SHAKE)-256-24\rlap{$^\text{**}$} \hspace{0.3cm}(L5, 224 sigs)  & Computational PQ & 64 & 96 & 14944 &Public Verifier, hash-based tree traversal & Hash-based (limited-signature) \\
\bottomrule\\
\multicolumn{7}{l}{\footnotesize *SK Size for Dilithium-3 assumed to scale with security level (not explicitly in original tables). SK Size for SPHINCS$^{+}$ variants assumed to be 64B (seed size}\\
\multicolumn{7}{l}{\footnotesize for 128-bit security level,  not explicitly in original tables). See \url{https://asecuritysite.com/pqc/pqc_sig} for more details.} \\
\multicolumn{7}{l}{\footnotesize **NIST SP 800‑230 \cite{NIST-SP-800-230-ipd}: reduced-size SLH-DSA variants for $2^{24}$-signature use cases.}
\\
\end{tabular}
}
\vspace{-0.7cm}
\end{table}
\vspace{-0.4cm}
\section{Conclusion}\label{sec:conclusion}
\textsc{SILMARILS} is a unified framework for lightweight TDV authentication
built from a minimal algebraic core over $\mathbb{F}_p$ and perfect
$2$-out-of-$2$ Shamir sharing, rather than a conventional digital signature.
Its two-party mode yields a TDV signature scheme: we prove designated-verifier
simulatability in the sense of Jakobsson--Sako--Impagliazzo and establish
$\mathsf{EUF\text{-}CMA}^{\neg\mathsf{DV}}$ security for all non-designated
verifiers in both the ROM and QROM. In the three-party setting, adopting the
broadcast model of Fitzi~\emph{et~al.}, we obtain a statistically secure
information-checking protocol with full simulation-based security and error~$1/p$.
Practical instantiations rely on hash functions, motivating a unified analysis
across the Pure IT, IT+ROM, and QROM models and a quantum extension of the
Fitzi simulation framework. We show that correctness, secrecy, transferability,
and non-designated unforgeability remain equivalent to simulation-based security
even against quantum adversaries. Thanks to its structural simplicity,
\textsc{SILMARILS} achieves much smaller keys and signatures than standardized
PQC schemes. This is not a comparison of primitives—TDV authentication and PQC
signatures address different threat models—but in blockchain settings where PQC
is already more than sufficient yet expensive, lightweight TDV mechanisms such
as \textsc{SILMARILS} offer a compelling alternative for reducing on-chain
overhead.

\bibliographystyle{splncs04}
\bibliography{ReffSAC}
\end{document}